\documentclass[aps,prx,10pt]{revtex4-2}

\usepackage[margin=1.1in]{geometry}
\usepackage{float}
\usepackage{graphicx}
\usepackage{dcolumn}
\usepackage{bm}

\usepackage{wustyle}

\usepackage{braket}
\usepackage[english]{babel}
\usepackage{amsmath, amssymb, amsthm}
\usepackage{mathtools}
\usepackage{tikz-cd} 
\usepackage{algorithm}
\usepackage{algpseudocode}
\usepackage{amsmath}
\usepackage{amsfonts}

\newcommand{\id}{\mathrm{id}}

\newcommand{\cO}[0]{\mathcal{O}}

\newcommand{\q}{\quad}
\newcommand{\mc}{\mathcal}
\newcommand{\maf}{\mathfrak}

\newcommand{\rd}{\mathrm{d}}

\usepackage{enumitem}

\usepackage{amsmath, amssymb, amsthm}
\usepackage{physics} 
\usepackage{mathtools}

\begin{document}

\title{Thermal expectation estimation via single-trajectory Gibbs sampling with non-destructive measurements}

\author{Hongrui Chen}
\email{hongrui@stanford.edu}
\affiliation{Department of Mathematics, Stanford University, Stanford, CA 94305, USA}

\author{Jiaqing Jiang}
\email{jiaqingjiang95@gmail.com}
\affiliation{Simons Institute for the Theory of Computing, University of California, Berkeley, Berkeley, CA 94720, USA}

\author{Bowen Li}
\email{bowen.li@cityu.edu.hk}
\affiliation{Department of Mathematics, City University of Hong Kong, Kowloon Tong, Hong Kong SAR}

\author{Lexing Ying}
\email{lexing@stanford.edu}
\affiliation{Department of Mathematics, Stanford University, Stanford, CA 94305, USA}
\affiliation{Institute for Computational and Mathematical Engineering, Stanford University, Stanford, CA 94305, USA}

\date{\today}

\begin{abstract}
Estimating thermal expectation values of quantum many-body systems is a central challenge in physics, chemistry, and materials science. Standard quantum Gibbs sampling protocols address this task by preparing the Gibbs state from scratch after every measurement, incurring a full mixing-time cost at each step. 
Recent advances in single-trajectory Gibbs sampling \cite{jiang2026} substantially reduce this overhead: once stationarity is reached, measurements can be collected along a single trajectory without re-thermalizing, provided the measurement channel preserves the Gibbs ensemble. However, explicit constructions of such non-destructive measurements have been limited primarily to observables that commute with the Hamiltonian. 
In this work, we fundamentally extend the single-trajectory framework to arbitrary, non-commuting observables. We provide two measurement constructions that extract measurement information without fully destroying the Gibbs state, thereby eliminating the need for full re-mixing between samples.  

First, we construct a measurement that satisfies exact detailed balance. This ensures the system remains in equilibrium throughout the trajectory, allowing measurement outcomes to decorrelate in an autocorrelation time that could be significantly shorter than the global mixing time. Second, assuming the underlying quantum Gibbs sampler has a positive spectral gap, we design a simplified measurement scheme that ensures the post-selected state serves as a warm start for rapid re-mixing. This approach successfully decouples the resampling cost from the global mixing time. Both measurement schemes admit efficient quantum circuit implementations, requiring only polylogarithmic Hamiltonian simulation time.
\end{abstract} 

\maketitle

\section{Introduction}

Estimating thermal expectation values is a central task in quantum physics and 
quantum chemistry, underpinning our theoretical understanding of quantum 
many-body systems at finite temperatures. At inverse temperature $\beta$, a 
quantum system in thermal equilibrium is governed by the Gibbs state 
$\sigma_\beta = e^{-\beta H}/\mathcal{Z}$, where $H$ is the Hamiltonian and 
$\mathcal{Z} = \mathrm{Tr}(e^{-\beta H})$ is the partition function. Key properties of the system, such as energy, magnetization, and correlation functions, are encoded in the thermal expectation value, defined as $\langle A \rangle_{\sigma_{\beta}}= \mathrm{Tr}(\sigma_{\beta} A)$ for the respective observable $A$. Computing thermal expectation values for general quantum systems is intractable on classical computers due to the sign problem and the exponential growth of the Hilbert space, which motivates the development of quantum algorithms. 

Recent advances in quantum Gibbs sampling~\cite{Chen2025efficient, chen2025efficientexactnoncommutativequantum, Ding_2025, jiang2024weak,temme2011quantum} provide a foundation for estimating thermal expectation values on quantum computers, opening the possibility of investigating quantum many-body systems beyond classical computational limits. The quantum Gibbs sampling algorithm, which serves as a quantum analogue of classical Monte Carlo methods, can drive any initial state to the Gibbs state within a timescale known as the \emph{mixing time} $t_{\mathrm{mix}}$. To estimate thermal expectation values to precision $\epsilon$, it suffices to prepare $\mathcal{O}(1/\epsilon^2)$ Gibbs states and perform measurements, resulting in a total Gibbs sampling time of $\mathcal{O}(t_{\mathrm{mix}}/\epsilon^2)$. While this approach provides a viable method for estimating thermal expectation values, its $\mathcal{O}(t_{\mathrm{mix}}/\epsilon^2)$ scaling can be prohibitively large, motivating the development of techniques that further 
reduce the sampling cost.

To address these challenges, \cite{jiang2026} introduced a new paradigm for thermal expectation estimation, demonstrating that the sampling cost can be substantially reduced using a single Gibbs-sampling trajectory. Their algorithm first runs the Gibbs sampler for roughly one mixing time to bring the given initial state close to the Gibbs state, then performs coherent measurements at regular intervals 
to collect samples. The key observation is that if the measurement channel $\mathcal{M}$ satisfies detailed balance and hence preserves the Gibbs 
state, i.e., $\mathcal{M}(\sigma_\beta) = \sigma_\beta$, then the chain remains in the Gibbs ensemble during the sampling stage. Consequently, once the chain has reached stationarity, consecutive samples become effectively independent on a timescale much shorter than the mixing time. This timescale is captured by the \emph{autocorrelation time} $t_{\rm aut}$, which quantifies the rate of decorrelation in stationarity. 
As a result, to estimate thermal expectation values to accuracy $\epsilon$, the total Gibbs sampling cost is $t_{\mathrm{mix}} + \mathcal{O}(t_{\mathrm{aut}}/\epsilon^2)$, which is substantially more efficient than  $\mathcal{O}(t_{\mathrm{mix}}/\epsilon^2)$ whenever $t_{\mathrm{aut}} \ll t_{\mathrm{mix}}$. \cite[Section 1.3]{jiang2026} also provided examples in which  $t_{\mathrm{aut}}$ is significantly smaller than $ t_{\mathrm{mix}}$. More generally, they showed that the autocorrelation time $t_{\mathrm{aut}}$ is upper bounded by the inverse spectral gap of the quantum Gibbs sampler, which in principle could be smaller than the mixing time by a factor proportional to the system size.

As mentioned, the key ingredient enabling the sampling reduction in \cite{jiang2026} is the non-destructive measurement of Gibbs states, meaning that it preserves the Gibbs state ensemble ($\cM(\sigma_{\beta})=\sigma_{\beta}$).
For observables that commute with the Hamiltonian, they explicitly construct a non-destructive measurement with only logarithmic overhead via Gaussian-filtered quantum phase estimation~\cite{moussa2019low}. 
For general observables, under additional assumptions, they showed that the weighted operator Fourier transform~\cite{chen2025efficientexactnoncommutativequantum} can be used to mitigate measurement-induced disturbance. However, the construction of non-destructive measurements for general observables remains an open problem.

In this work, we close this gap by constructing non-destructive measurement channels (satisfying detailed balance) for general observables that may not commute with the Hamiltonian. As a consequence, our construction fully extends the single-trajectory framework to the non-commuting setting.

 \begin{theorem}[Exact detailed-balance measurement (informal version of Theorem~\ref{thm:DB-channel})]\label{thm:informal}
Let $\sigma_{\beta} \propto e^{-\beta H}$ be the Gibbs state of a Hamiltonian $H$ at inverse temperature $\beta$. Consider any Hermitian observable $A$ with $\|A\|\le 1$. One can construct a  quantum channel $\mathcal{M}$,
\begin{itemize}
\item $\mathcal{M}$ satisfies KMS detailed balance with respect to $\sigma_{\beta}$, and hence fixes the Gibbs state $\mathcal{M}(\sigma_{\beta})=\sigma_{\beta}$.
\item The outcomes of $\mathcal{M}$ yield an unbiased estimator of $\mathrm{Tr}(\sigma_{\beta} A)$ of $\widetilde{\mathcal{O}}(1)$ variance.
\item For any implementation accuracy $\delta$, a $\delta$-approximate implementation of $\mathcal{M}$ can be realized with
\begin{align*}
&\widetilde{\mathcal{O}}(1) \text{ queries to a block encoding of } A,\\
&\widetilde{\mathcal{O}}(\beta) \text{ controlled Hamiltonian simulation time for } H, \text{ and}\\
&\widetilde{\mathcal{O}}(1) \text{ ancilla qubits.}
\end{align*}
\end{itemize}
where $\widetilde{\mathcal{O}}$ suppresses polylogarithmic factors in $\beta$, $\|H\|$, $1/\delta$, and $1/\epsilon$.
Combining the constructed measurement with the single-trajectory Gibbs sampling approach~\cite{jiang2026}, one can estimate $\mathrm{Tr}(\sigma_{\beta} A)$ to precision $\epsilon$ with failure probability $\eta$ using a total Gibbs sampling time of $t_{\mathrm{mix}}+\widetilde{\mathcal{O}}(t_{\mathrm{aut}}/(\epsilon^2\eta))$, where $t_{\mathrm{aut}}$ denotes the corresponding autocorrelation time and is bounded by the inverse spectral gap (up to polylogarithmic factor) of the Gibbs sampling channel.
\end{theorem}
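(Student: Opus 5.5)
The plan is to build $\mathcal{M}$ from a KMS-symmetrized version of $A$, so that its Kraus operators are automatically detailed balanced, and then read off the estimator, the implementation cost and the sampling bound in turn. Write $\Gamma_\sigma(X)=\sigma^{1/2}X\sigma^{1/2}$ and recall that a channel $\mathcal{M}$ is KMS detailed balanced iff $\Gamma_\sigma\mathcal{M}^\dagger=\mathcal{M}\Gamma_\sigma$. I would construct a Hermitian operator $\widetilde A$ with the following two properties.

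\begin{itemize}
\item[(i)] $\mathrm{Tr}(\sigma_\beta\widetilde A)=\mathrm{Tr}(\sigma_\beta A)$.
\item[(ii)] The two-outcome instrument with Kraus operators $K_\pm=\sqrt{(I\pm\widetilde A)/2}$, or a finer-grained variant, is KMS detailed balanced.
\end{itemize}

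A natural candidate is a filtered operator Fourier transform of $A$ in the energy basis. Let $A_\nu$ be the Bohr-frequency components of $A$. Set $\widetilde A=\sum_\nu f(\nu)A_\nu$, where $f$ is chosen so that $\sigma^{1/4}\widetilde A\sigma^{-1/4}$ is Hermitian; for example $f(\nu)=\mathrm{sech}(\beta\nu/4)$ or a Gaussian-type weight $e^{-\beta^2\nu^2/8}$ with an additional factor. The diagonal part $A_0$ is unchanged when $f(0)=1$, and $\mathrm{Tr}(\sigma A_\nu)=0$ for $\nu\neq 0$, so (i) holds for any such $f$.

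For (ii), the safer route is a different channel. I would define $\mathcal{M}(\rho)=\sum_{\pm}\sqrt{\sigma}\,E_\pm^{1/2}\sigma^{-1/2}\rho\,\sigma^{-1/2}E_\pm^{1/2}\sqrt{\sigma}$, a Petz-type recovery form. This is detailed balanced by construction, but its trace preservation requires $E_\pm$ to commute appropriately. The cleanest option is therefore a measurement $\rho\mapsto\sum_\pm K_\pm\rho K_\pm^\dagger$ with $K_\pm=\sigma^{1/4}B_\pm\sigma^{-1/4}$, where $B_\pm$ is Hermitian. The KMS condition then reduces to $B_\pm$ being Hermitian. Trace preservation $\sum K^\dagger K=I$ would be enforced by adding a completing Kraus operator that is itself of KMS form, in the spirit of the exact-detailed-balance Lindbladian construction of \cite{chen2025efficientexactnoncommutativequantum}. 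The completing term carries no information and is assigned outcome value $0$.

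For the estimator I would assign outcome values $\pm c$ so that $\mathbb{E}[\text{outcome}]=\mathrm{Tr}(\sigma\widetilde A)=\mathrm{Tr}(\sigma A)$. The variance is then $O(c^2)$. Here $c$ is bounded by $\|\widetilde A\|\le\|f\|_{L^1\text{-Fourier}}\,\|A\|=\widetilde O(1)$, because the filter's Fourier transform is a time integral of $e^{iHt}Ae^{-iHt}$ weighted by an integrable kernel of width $O(\beta)$.

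Implementation would follow \cite{chen2025efficientexactnoncommutativequantum}. The operator Fourier transform is realized by an LCU over Heisenberg evolutions truncated at time $\widetilde O(\beta)$, and $\sigma^{\pm1/4}$-conjugation is absorbed into the filter $f$ rather than applied directly. QSVT then produces $\sqrt{(I\pm\cdot)/2}$ with polylogarithmic degree. This gives the stated query, simulation-time and ancilla counts, with $\delta$ entering through truncation and polynomial approximation errors.

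The final sampling bound comes from inserting $\mathcal{M}$ into the single-trajectory theorem of \cite{jiang2026}, which needs only detailed balance, unbiasedness and bounded variance. Chebyshev's inequality then gives the $1/(\epsilon^2\eta)$ dependence, with $t_{\rm aut}\lesssim$ inverse gap.

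The main obstacle I expect is making the channel \emph{exactly} trace preserving while keeping it KMS detailed balanced and informative. Boundedness of $\sigma^{\pm1/4}$-conjugated operators requires the filter to decay like $e^{-\beta|\nu|/4}$, and the completion operator must be built coherently. I would first try the trick from exact-DB Lindbladians, where the completion $\sqrt{I-\sum K^\dagger K}$ is itself KMS-symmetric when $\sum K^\dagger K$ satisfies $\sigma^{1/2}X\sigma^{-1/2}$-type symmetry. I would then check that this survives the square-root map via QSVT.
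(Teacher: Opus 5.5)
Your plan has a genuine gap. None of the concrete routes you sketch yields a channel that is simultaneously KMS detailed balanced, trace preserving and unbiased.

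(a) A Hermitian $\widetilde A$ with $\sigma_\beta^{1/4}\widetilde A\sigma_\beta^{-1/4}$ also Hermitian must satisfy $\sigma_\beta^{1/4}\widetilde A\sigma_\beta^{-1/4}=\sigma_\beta^{-1/4}\widetilde A\sigma_\beta^{1/4}$, i.e.\ $[\widetilde A,H]=0$. For $\widetilde A=\sum_\nu f(\nu)A_\nu$ this means $f$ vanishes at every nonzero Bohr frequency, so $\widetilde A=A_0$, the energy-dephased observable. That requires frequency resolution below the smallest nonzero Bohr frequency and cannot come from a filter of time width $\widetilde{\mathcal O}(\beta)$. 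With your even filters, $\widetilde A$ is Hermitian and the L\"uders instrument $\sqrt{(I\pm\widetilde A)/2}$ does not even fix the Gibbs state: $\mathcal M(\sigma_\beta)-\sigma_\beta=-\tfrac18[\widetilde A,[\widetilde A,\sigma_\beta]]+\mathcal O(\|\widetilde A\|^4)$. This is exactly the non-detailed-balanced instrument of Section~\ref{sec:warm-start}.

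(b) Kraus operators $K_\pm=\sigma_\beta^{1/4}B_\pm\sigma_\beta^{-1/4}$ with Hermitian $B_\pm$ are indeed branchwise detailed balanced. However, their Gibbs-state probabilities are the quadratic forms $\mathrm{Tr}(\sigma_\beta^{1/2}B_\pm\sigma_\beta^{1/2}B_\pm)$, not linear functionals. For $B_\pm=\sqrt{(I\pm X)/2}$ one finds $p_+-p_-=\mathrm{Tr}(\sigma_\beta X)+\tfrac18\bigl[\mathrm{Tr}(\sigma_\beta X^3)-\mathrm{Tr}(\sigma_\beta^{1/2}X\sigma_\beta^{1/2}X^2)\bigr]+\cdots$, which is generically biased once $[X,\sigma_\beta]\neq0$. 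Getting exactly linear effects $E_\pm$ within this ansatz forces $K_\pm=\sqrt{\sigma_\beta^{1/2}E_\pm\sigma_\beta^{1/2}}\,\sigma_\beta^{-1/2}$ (up to sign choices), a Metropolis--Hastings-type operator whose implementation you never address. Your Petz-type form $\sigma_\beta^{1/2}E^{1/2}\sigma_\beta^{-1/2}$ is also not detailed balanced in general.

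(c) The completion $\sqrt{I-\sum_aK_a^\dagger K_a}$ is a positive Hermitian Kraus operator. Such an operator gives a detailed-balanced branch only if it commutes with $\sigma_\beta$. The exact-DB Lindbladian trick does not transfer, because there the correction is a coherent term in the generator, not an extra Kraus operator.

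The missing idea, which is what the paper does, is to give up branchwise detailed balance. Instead, pair an informative Kraus operator with its KMS partner under a \emph{common} outcome label.
\begin{itemize}
\item Take the Hermitian $K_1=c\sqrt{I+uA_f}$ (real Gaussian filter, $\tau=\Omega(\beta)$), whose effect $c^2(I+uA_f)$ is exactly linear.
\item Take $K_2:=\sigma_\beta^{1/2}K_1\sigma_\beta^{-1/2}=c\sqrt{I+uA_{\tilde f}}$ with $\tilde f(t)=f(t-i\beta/2)$. Lemma~\ref{lem:norm-imaginary} gives $\|A_{\tilde f}\|\le e^{\beta^2/(8\tau^2)}=\mathcal O(1)$.
\end{itemize}
Neither branch alone is detailed balanced, but the pair is (Lemma~\ref{lem:DBKraus}). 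Moreover $\mathrm{Tr}(\sigma_\beta K_2^\dagger K_2)=\mathrm{Tr}(\sigma_\beta K_1^2)=c^2(1+u\,\mathrm{Tr}(\sigma_\beta A))$, so the acceptance probability is exactly affine in $\mathrm{Tr}(\sigma_\beta A)$. The common label also makes the centered map $\widehat{\mathcal M}$ detailed balanced, which Proposition~\ref{prop:spectral-gap-bound} needs for $t_{\mathrm{aut}}\lesssim\lambda^{-1}$.

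Trace preservation comes from the rejection operator $K=\sqrt{\sigma_\beta^{1/2}(I-O)\sigma_\beta^{1/2}}\,\sigma_\beta^{-1/2}$ with $O=K_1^\dagger K_1+K_2^\dagger K_2$. It satisfies $K^\dagger K=I-O$ and is detailed balanced for any $O\le I$; it is of your form $\sigma_\beta^{1/4}B\sigma_\beta^{-1/4}$ with $B$ Hermitian. The hard part is implementing it despite the $\sigma_\beta^{-1/2}$ factor. The method of \cite{gilyen2026} (Proposition~\ref{thm:implementR}) needs $\|O\|=\mathcal O(1/\log^2(\beta\|H\|))$ and $O$ quasi-local in energy (Lemma~\ref{lem:quasi-local-verify}). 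This forces $c=\Theta(1/\log(\beta\|H\|))$ and is the source of the $\Theta(\log^2(\beta\|H\|))$ variance.

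Finally, since $A_{\tilde f}$ is non-Hermitian, $K_2$ is built from a truncated binomial series via LCU (Theorem~\ref{thm:sqrt-be-lcu}) rather than by QSVT.
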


Theorem~\ref{thm:informal} implies that, for general observables, our constructed measurement combined with the single-trajectory Gibbs sampling approach~\cite{jiang2026} allows one to obtain effectively independent samples every autocorrelation time rather than per mixing time, leading to a significant reduction in the Gibbs sampling cost for thermal expectation estimation.
As illustrated in the examples of \cite{jiang2026}, in many scenarios the autocorrelation time $t_{\mathrm{aut}}$ can be much shorter than the mixing time (potentially by a sub-exponential factor), since it leverages a warm start from the previous measurement and depends on the observable of interest. More precisely, $t_{\mathrm{aut}}$ is upper bounded by the inverse spectral gap 
$\lambda^{-1}$, whereas the mixing time may carry the additional factor 
$\log\,(\sigma_{\beta,\mathrm{min}}^{-1})$, where $\sigma_{\beta,\mathrm{min}}$ 
denotes the smallest eigenvalue of $\sigma_{\beta}$ that could be exponentially small in the number of qubits $n$. The single-trajectory approach avoids this prefactor entirely in the per-sample cost.

In addition to Theorem \ref{thm:informal}, we also construct a simpler measurement that forgoes the detailed-balance property but instead guarantees that the post-selected state remains a warm start for the Gibbs sampler. Under a spectral gap assumption, this ensures that each remixing step costs only inverse-spectral-gap time rather than the full mixing time.

\begin{theorem}[Informal version of Theorem \ref{thm:complexity1}]\label{thm:informal_ws}
Let $\sigma_{\beta}$, $H$, and $A$ be as in Theorem~\ref{thm:informal}. Suppose that $\mathcal{N}$ is a primitive quantum Gibbs sampling channel with unique fixed point $\sigma_\beta$, satisfying KMS detailed balance, with spectrum contained in $[0,1]$, and spectral gap $\lambda > 0$. One can construct a measurement channel $\mathcal{M}$ such that:
\begin{itemize}
\item If the input state is a $\chi^2$-warm start, then each post-selected state remains a $\chi^2$-warm start; reaching the Gibbs state again within $\chi^2$-accuracy $\epsilon$ requires only $\mathcal{O}(\lambda^{-1}\log(1/\epsilon))$ Gibbs sampling time.
\item At stationarity, the outcomes of $\mathcal{M}$ yield an unbiased estimator of $\mathrm{Tr}(\sigma_{\beta} A)$ of $\mathcal{O}(1)$ variance.
\end{itemize}
Consequently, one can estimate $\mathrm{Tr}(\sigma_{\beta} A)$ to precision $\epsilon$ with failure probability $\eta$ using a total Gibbs sampling time of $t_{\mathrm{mix}} + \widetilde{\mathcal{O}}(\lambda^{-1}\log(1/\eta)/\epsilon^2)$.
\end{theorem}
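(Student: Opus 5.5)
The plan is to build $\mathcal{M}$ as a two-outcome instrument whose Kraus operators are chosen so that the post-measurement state, once normalized, has bounded density relative to $\sigma_\beta$. Concretely, we shift and rescale the observable, setting $B=(I+A)/2$ with $0\le B\le I$, and then define
\[
K_1=\tfrac{1}{\sqrt{2}}\,\sigma_\beta^{1/4}B^{1/2}\sigma_\beta^{-1/4}
\]
or, better, a version regularized through a Gaussian-filtered operator Fourier transform so that it is implementable. The goal is a completely positive map whose Kraus operators satisfy
\[
K_1^\dagger K_1+K_0^\dagger K_0 = I .
\]
We then define $\mathcal{M}(\rho)=\sum_j K_j\rho K_j^\dagger$, which, as in \cite{jiang2026}, outputs the outcome $j$ with probability $\mathrm{Tr}(K_j\rho K_j^\dagger)$. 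The estimator is the affine map of $j$ that satisfies
\[
\mathbb{E}_{\sigma_\beta}[\hat a]=\mathrm{Tr}(\sigma_\beta A) .
\]
Because $j\in\{0,1\}$ is bounded, the variance is automatically $\mathcal{O}(1)$. For unbiasedness, note that $\mathrm{Tr}(\sigma_\beta K_1^\dagger K_1)$ need only equal an affine function of $\mathrm{Tr}(\sigma_\beta A)$, and this is all the bounded-outcome argument requires.

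The key property is the warm-start preservation. Let $p_j=\mathrm{Tr}(K_j\rho K_j^\dagger)$ and let $\rho_j=K_j\rho K_j^\dagger/p_j$ be the post-selected state. We must bound $\chi^2(\rho_j,\sigma_\beta)$ by $\mathcal{O}(1)$ whenever $\chi^2(\rho,\sigma_\beta)=\mathcal{O}(1)$. Writing $\rho=\sigma_\beta^{1/2}X\sigma_\beta^{1/2}$ in the KMS-weighted picture, the $\chi^2$ divergence becomes the weighted norm $\|X-I\|_{\sigma_\beta}^2$. The map $X\mapsto \sigma_\beta^{-1/2}K_j\sigma_\beta^{1/2}X\sigma_\beta^{1/2}K_j^\dagger\sigma_\beta^{-1/2}$ has weighted operator norm controlled by the norms of $\sigma_\beta^{\mp1/4}K_j\sigma_\beta^{\pm1/4}$. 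This is why we pick $K_j$ so that these conjugates are bounded: via the Gaussian-filtered OFT construction, only energy differences of size $\mathcal{O}(1/\beta)$ are involved, which gives bounds of the form $e^{\mathcal{O}(1)}$.

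It remains to lower-bound the normalization $p_j$. Here the plan is to mix in a trivial branch: we replace $K_1$ by $\sqrt{1/2}\,K_1$ and $K_0$ by the completion of $K_1$, and add a constant component $\tfrac{1}{\sqrt2}I$ split between the two outcomes. Then each $p_j\ge c>0$ for every $\rho$, at the cost of only a constant-factor loss in the signal. This keeps the variance of the debiased estimator at $\mathcal{O}(1)$.

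Given $\chi^2(\rho_j,\sigma_\beta)\le C$, the remixing claim follows from the spectral gap. Because $\mathcal{N}$ is KMS-detailed-balanced with spectrum in $[0,1]$ and gap $\lambda$, we have
\[
\chi^2\bigl(\mathcal{N}^t(\rho_j),\sigma_\beta\bigr)\le (1-\lambda)^{2t}\,C ,
\]
so $t=\mathcal{O}(\lambda^{-1}\log(C/\epsilon))$ steps suffice. Induction along the trajectory keeps every input a warm start. For the final complexity, we run one mixing time to obtain the first warm start and then collect $N=\mathcal{O}(1/\epsilon^2)$ rounds, each separated by $\mathcal{O}(\lambda^{-1}\log(1/\epsilon))$ remixing. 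Each outcome's conditional mean is then within $\mathcal{O}(\epsilon)$ of the stationary mean, because the $\chi^2$ bound controls trace distance. Since the outcomes are bounded, a martingale (Azuma) concentration argument, or a median-of-means boost, yields the failure-probability dependence $\log(1/\eta)$.

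The main obstacle is the third paragraph: choosing Kraus operators that are simultaneously completable to a channel, unbiased for $\mathrm{Tr}(\sigma_\beta A)$, and have bounded imaginary-time conjugation so that the $\chi^2$ bound holds. My first attempt will use the Gaussian-smoothed $\widehat{A}$ from the weighted OFT, exploiting its decay in Bohr frequencies to bound $\|\sigma_\beta^{-1/4}K\sigma_\beta^{1/4}\|$.
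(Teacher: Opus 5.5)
\paragraph{Gap: the measurement is never actually constructed.}
Your surrounding architecture matches the paper's. You bound $\|K\rho K^\dagger\|_{\sigma_\beta,-1/2}$ through $\|\sigma_\beta^{-1/4}K\sigma_\beta^{1/4}\|$, lower-bound the outcome probabilities, contract $\chi^2$ using the gap, and apply Azuma--Hoeffding to bounded outcomes. But the measurement itself, which is the substance of the statement, is never constructed, and your one explicit candidate fails.

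For $K_1=\tfrac{1}{\sqrt{2}}\sigma_\beta^{1/4}B^{1/2}\sigma_\beta^{-1/4}$, the conjugate $\sigma_\beta^{-1/4}K_1\sigma_\beta^{1/4}=B^{1/2}/\sqrt{2}$ is trivially bounded. The problems are elsewhere:
\begin{itemize}
\item $K_1^\dagger K_1=\tfrac12\sigma_\beta^{-1/4}B^{1/2}\sigma_\beta^{1/2}B^{1/2}\sigma_\beta^{-1/4}$ is in general not $\le I$. For $B=|\psi\rangle\langle\psi|$ its norm is $\tfrac12\langle\psi|\sigma_\beta^{1/2}|\psi\rangle\langle\psi|\sigma_\beta^{-1/2}|\psi\rangle$, which is large when $\psi$ superposes two energy levels far apart. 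So the completion $K_0$ need not exist.
\item $\mathrm{Tr}(\sigma_\beta K_1^\dagger K_1)=\tfrac12\mathrm{Tr}(\sigma_\beta^{1/2}B^{1/2}\sigma_\beta^{1/2}B^{1/2})$ is not an affine function of $\mathrm{Tr}(\sigma_\beta A)$, so there is no unbiased estimator. Smoothing $B$ does not repair this, because the $\sigma_\beta^{\pm1/4}$ sandwich is what destroys it.
\item Your warm-start argument must also cover the completion branch $K_0$, and you give no bound on $\|\sigma_\beta^{-1/4}K_0\sigma_\beta^{1/4}\|$.
\item ``Adding a constant $\tfrac{1}{\sqrt{2}}I$'' to Kraus operators breaks $\sum_j K_j^\dagger K_j=I$. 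You would need a convex mixture at the instrument level instead.
\end{itemize}

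\paragraph{The missing idea.}
Take Hermitian Kraus operators that are functions of the filtered observable, $K_\pm=\sqrt{(I\pm uA_f)/2}$. Then three things are immediate, with no extra trivial branch:
\begin{itemize}
\item completeness, $K_+^2+K_-^2=I$;
\item unbiasedness, since $p_+-p_-=u\,\mathrm{Tr}(\rho A_f)$ and $\mathrm{Tr}(\sigma_\beta A_f)=\mathrm{Tr}(\sigma_\beta A)$;
\item the uniform lower bound $p_\pm\ge(1-u)/2$ for every $\rho$.
\end{itemize}

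What remains is bounding $\|\sigma_\beta^{-1/4}K_\pm\sigma_\beta^{1/4}\|$. Your heuristic that ``only energy differences of size $\mathcal{O}(1/\beta)$ are involved'' does not cover this, because $K_\pm$ contains all powers $A_f^m$, whose Bohr-frequency content spreads with $m$. The paper instead:
\begin{itemize}
\item expands $\sqrt{I\pm uA_f}=\sum_m\binom{1/2}{m}(\pm u)^mA_f^m$;
\item uses $\sigma_\beta^{-1/4}A_f^m\sigma_\beta^{1/4}=(\sigma_\beta^{-1/4}A_f\sigma_\beta^{1/4})^m$ together with the Gaussian imaginary-shift bound $\|\sigma_\beta^{\mp1/4}A_f\sigma_\beta^{\pm1/4}\|\le c_f=e^{\beta^2/(32\tau^2)}$ from Lemma~\ref{lem:norm-imaginary};
\item sums the series, which requires $c_fu<1$.
\end{itemize}
This yields $\chi^2(\rho'_\pm,\sigma_\beta)\le(1+C)/\bigl((1-u)^2(1-c_fu)^2\bigr)$.

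This is also why the measurement must be strictly weak. Your $B=(I+A)/2$ corresponds to $u=1$, where this series argument fails (since $c_f>1$) and $p_\pm$ is no longer bounded below. With $\tau=\Theta(\beta)$ one has $c_f=\mathcal{O}(1)$, so a constant $u$ still gives $\mathcal{O}(1)$ variance.

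\paragraph{A smaller point on burn-in.}
$t_{\mathrm{mix}}$ is a trace-norm quantity, so it does not by itself give you a $\chi^2$ warm start. The paper analyzes the chain started exactly at $\sigma_\beta$. It then transfers the guarantee to $\rho_{\mathrm{burn}}$ through the trajectory total-variation bound of Lemma~\ref{lem:error-accum}.
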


In the following section, we give an overview of the construction of our two measurement channels: an exact detailed-balance channel, and a simplified channel that does not preserve detailed balance but guarantees a constant $\chi^2$-divergence warm start. Readers interested in further details on the single-trajectory approach may refer to Section \ref{sec:single-traj-DB} or \cite{jiang2026}.

\subsection{Overview of our Construction}\label{sec:construction}

 Recall that we use $\sigma_{\beta} \propto \exp(-\beta H)$ to denote the Gibbs state with respect to Hamiltonian $H$ and inverse temperature $\beta$. Consider an observable $A$ with $\|A\|\leq 1$. Our goal is to design measurement channels that simultaneously extract an unbiased estimator of the thermal expectation value $\mathrm{Tr}(\sigma_{\beta} A)$ and leave the post-measurement state in a 
controlled proximity to $\sigma_\beta$, either by exactly preserving the Gibbs 
state via detailed balance, or by guaranteeing a bounded $\chi^2$-divergence warm 
start that enables fast remixing.

To build some intuition, note that for observables commuting with $H$, i.e., 
$[A, H] = 0$, one can measure $A$ without disturbing the Gibbs state ensemble, 
for instance via projective measurement in the energy eigenbasis or quantum phase 
estimation. For general non-commuting observables, however, projective measurement 
can significantly perturb the Gibbs state, motivating the need for more carefully designed measurement channels. 

A natural approach to mitigating measurement disturbance, also discussed in 
\cite{jiang2026, chen2025catalytic}, is to smooth the observable $A$ via the 
operator Fourier transform,
\begin{align}
    A_f = \int_{-\infty}^{+\infty} f(t)\, e^{iHt} A e^{-iHt}\, dt, \label{eq:Af}
\end{align}
where $f(t) = \frac{1}{\sqrt{2\pi\tau^2}}\exp(-t^2/2\tau^2)$ is a Gaussian with 
width parameter $\tau$. The smoothed observable $A_f$ has several desirable 
properties \cite{jiang2026}. First, it preserves the thermal expectation value, 
$\mathrm{Tr}(A_f \sigma_{\beta}) = \mathrm{Tr}(A \sigma_{\beta})$, so one can 
equivalently measure $A_f$ in place of $A$. Second, when $\tau$ is large, $A_f$ 
approximately commutes with $\sigma_{\beta}$, so its measurement incurs only a 
slight disturbance to the Gibbs state.
However, the implementation cost of the measurement of $A_f$ scales
as $\widetilde{\mathcal{O}}(\tau)$,
while ensuring approximate commutation with $H$ in the
non-commuting case generally requires $\tau$ to be large,
leading to significant
overhead. Besides, the resulting measurement only approximately preserves the Gibbs 
state and does not satisfy exact detailed balance, so the single-trajectory sample 
complexity theorem of \cite{jiang2026} does not directly apply.

We present two constructions that address these limitations in different ways. In the first construction, we show that it is possible to construct a measurement that satisfies exact detailed balance, based on $A_f$ with a moderate $\tau = \tilde{\cO}(\beta)$.  In the second construction, we develop a measurement that perturbs the Gibbs state but guarantees a constant $\chi^2$-divergence warm start, enabling fast remixing back to the Gibbs state. 

We begin with a brief review of relevant concepts; see Section \ref{sec:preli} for details. A quantum measurement channel is 
described by its Kraus operators $\{K_a\}_a$, satisfying the trace-preserving 
condition $\sum_a K_a^\dagger K_a = I$, under which a state $\rho$ is mapped to 
$\sum_a K_a \rho K_a^\dagger$. A quantum channel (or more generally, a completely 
positive map) is said to satisfy the Kubo--Martin--Schwinger (KMS) detailed balance 
condition with respect to $\sigma_\beta$ if it is self-adjoint under the KMS inner 
product, or equivalently, if it satisfies
\begin{align*}
   \sum_a  \sigma_{\beta}^{-1/2} K_a \rho K_a^\dagger \sigma_{\beta}^{-1/2}
    = \sum_a K_a^\dagger \sigma_{\beta}^{-1/2} \rho\, \sigma_{\beta}^{-1/2} K_a.
\end{align*}
Setting $\rho = \sigma_\beta$ and using the 
trace-preserving property, one verifies that any KMS detailed-balanced channel 
necessarily fixes the Gibbs state: $\mathcal{M}(\sigma_\beta) = \sigma_\beta$.

\medskip \noindent
\textbf{Exact detailed-balance measurement for non-commuting observables.} 
Our first construction is a measurement channel with Kraus operators
\begin{align} \label{def:kraus}
    K_1 = c\sqrt{I + uA_f}, \qquad K_2 = c\sqrt{I + uA_{\tilde{f}}}, \qquad
    K = \sqrt{\sqrt{\sigma_{\beta}}(I - K_1^\dagger K_1 - K_2^\dagger K_2)
\sqrt{\sigma_{\beta}}}\,\sigma_{\beta}^{-1/2}.
\end{align}
We now explain the role of each component and why the construction satisfies KMS detailed balance; its estimation properties and implementation cost are discussed subsequently. A more technical treatment is provided in Section \ref{sec:DBmeasure}. 

Here $f(t) = \frac{1}{\sqrt{2\pi\tau^2}}\exp(-t^2/2\tau^2)$ is the Gaussian filter 
with width $\tau$, and $\tilde{f}(t) := f(t - i\beta/2)$ is its imaginary-time shift 
by $i\beta/2$. The shifted filter is chosen precisely to enforce the algebraic relation $K_2 = \sigma_{\beta}^{1/2} K_1^\dagger \sigma_{\beta}^{-1/2}$, 
which, as shown in Lemma~\ref{lem:DBKraus}, is sufficient to guarantee that the 
completely positive (CP) map $\mathcal{T}'[\rho] = K_1 \rho K_1^\dagger + K_2 \rho K_2^\dagger$ 
satisfies KMS detailed balance with respect to $\sigma_\beta$. Since $\mathcal{T}'$ is not trace-preserving, it does not by itself define a valid quantum channel. Following the discrete-time quantum Metropolis--Hastings framework of \cite{gilyen2026}, we append a rejection branch with Kraus operator $K$ that restores the trace-preserving property while maintaining the exact KMS detailed balance.  The parameters $c$ and $u$ are rescaling factors to ensure the square root function in the definition \eqref{def:kraus} of Kraus operators is well-defined. To avoid confusion, we note that since $\tilde{f}$ is not a real symmetric function, the operator $A_{\tilde{f}}$ is not Hermitian. The square root in $K_2$ is defined via an absolutely convergent Taylor expansion.

The thermal expectation value is naturally extracted from the measurement outcomes 
as follows. Assign outcome values $z = 1$ to branches $K_1$ and $K_2$, and $z = 0$ 
to the rejection branch $K$. When the system is in the Gibbs state $\sigma_\beta$, 
the expected outcome $\mathbb{E}[z]$ equals the total probability of the non-rejected 
branches:
\begin{align}
    \mathrm{Tr}(K_1 \sigma_{\beta} K_1^\dagger) + \mathrm{Tr}(K_2 \sigma_{\beta} K_2^\dagger)
    = 2c^2(1 + u\,\mathrm{Tr}(A_f \sigma_{\beta}))
    = 2c^2(1 + u\,\mathrm{Tr}(A \sigma_{\beta})),
\end{align}
where the first equality uses $K_2 = \sigma_\beta^{1/2} K_1^\dagger \sigma_\beta^{-1/2}$ 
and the second uses the fact that $A_f$ preserves the thermal expectation value, 
$\mathrm{Tr}(A_f \sigma_\beta) = \mathrm{Tr}(A\sigma_\beta)$. It follows that the 
random variable
\begin{align}
    v := \frac{z}{2c^2 u} - \frac{1}{u}
\end{align}
is an unbiased estimator of $\mathrm{Tr}(A\sigma_\beta)$, with variance of order 
$(cu)^{-2}$.

The performance of the constructed measurement is governed by the parameters $c$ 
and $u$. A crucial property of the smoothed operators is that the norms $\|A_f\|$ 
and $\|A_{\tilde{f}}\|$ remain $\mathcal{O}(1)$ when $\tau = \Omega(\beta)$. 
Consequently, it suffices to set $\tau = \Theta(\beta)$ and $c, u = 
\widetilde{\mathcal{O}}(1)$ to ensure that $u\|A_f\|$ and $u\|A_{\tilde{f}}\|$ 
are bounded away from $1$, so that the matrix square roots in the Kraus operators 
are well-defined with rapidly convergent Taylor expansions. Based on this choice 
of parameters and the techniques for implementing the rejection operator $K$ from 
\cite{gilyen2026}, we implement the measurement to precision $\epsilon$ using 
$\widetilde{\mathcal{O}}(1)$ queries to a block encoding of $A$, 
$\widetilde{\mathcal{O}}(\beta)$ controlled Hamiltonian simulation time, and 
$\widetilde{\mathcal{O}}(1)$ ancilla qubits, via the linear combination of 
unitaries and quantum singular value transformation~\cite{Gily_n_2019}.

\medskip \noindent
\textbf{Measurement-remix strategy via warm starts in $\chi^2$-divergence.}
Our second construction is simpler. We define a two-outcome measurement with Kraus 
operators
\begin{align}
    K_\pm = \sqrt{\frac{1}{2}(I \pm uA_f)},
\end{align}
where the outcome $\pm$ is assigned the value $z = \pm 1$. Upon observing outcome 
$\pm$, the post-selected state is $\rho'_\pm = K_\pm \rho K_\pm^\dagger / p_\pm$, 
and the difference in outcome probabilities $p_+ - p_- = u\,\mathrm{Tr}(\rho A_f)$ 
yields an unbiased estimator of $\mathrm{Tr}(\sigma_\beta A)$. Since this 
construction uses only the real-time smoothed observable $A_f$, requiring no 
imaginary-time shift and no rejection branch, it is considerably simpler to 
implement than the above exact detailed-balance channel. 

However, this measurement channel does not satisfy exact detailed balance, and the 
post-measurement state is disturbed. The key result is that the disturbance remains 
controlled: with $\tau = \Omega(\beta)$, the post-selected state $\rho'_\pm$ 
satisfies $\chi^2(\rho'_\pm, \sigma_\beta) = \mathcal{O}(1)$, providing a warm 
start for the subsequent Gibbs sampling dynamics. Under a spectral gap $\lambda > 0$, this guarantees that the state $\rho'_\pm$ returns to $\epsilon$-proximity of 
$\sigma_\beta$ in remixing time $\mathcal{O}(\lambda^{-1}\log(1/\epsilon))$, 
eliminating the $\mathcal{O}(n)$ overhead from $\log\,(\sigma_{\beta,\mathrm{min}}^{-1})$ 
that appears in the general mixing time estimate. We refer to Section~\ref{sec:warm-start} 
for the precise statement and proof.

\subsection{Related Works}\label{sec:related_work}

This work studies the problem of measuring a Gibbs state non-destructively: either preserving it exactly via detailed balance, or perturbing it in a controlled way that allows efficient recovery.

For non-destructive measurements of Gibbs states, the most relevant prior work is \cite{jiang2026}, which constructs non-destructive measurements for observables commuting with $H$, incurring only logarithmic overhead. For more general observables, \cite{jiang2026} employs the weighted operator Fourier transform to construct a measurement that only slightly disturbs the Gibbs state, under 
additional assumptions. Our Theorem~\ref{thm:informal} generalizes the result of \cite{jiang2026} to arbitrary observables. While the overhead is larger than in the commuting case, it remains modest: scaling logarithmically in both the desired precision and the Hamiltonian evolution time, and linear in $\beta$.
Another approach to 
near-non-destructive measurement is weak measurement~\cite{vaidman2014protectivemeasurementswavefunction},  which, however, can substantially increase the sample complexity for estimating thermal expectation values to a given precision.

Non-destructive measurements have been more extensively studied in the context of ground states. For gapped systems in particular, such measurements can be realized via a variety of techniques, including the weighted operator Fourier transform~\cite{chen2025catalytic}, weak measurement~\cite{vaidman2014protectivemeasurementswavefunction}, 
and the Marriott--Watrous rewinding technique~\cite{marriott2005quantum, farhi2010quantum}. 
A detailed survey of these methods and a comparison of their implementation costs can be found in \cite{chen2025catalytic}.

Another approach to non-destructive measurement is based on recovery maps: one first performs a measurement on the Gibbs state and then applies a recovery procedure to bring the disturbed state back to the Gibbs state. In particular, based on local Markov properties, it has been shown that the disturbance caused by measuring local observables can be recovered by quasi-local quantum channels~\cite{chen2025quantum, kato2025clustering}.  

As presented above, the design of non-destructive measurements is closely related to the design of quantum Gibbs sampling algorithms. Our constructions build on two key ingredients: the smoothing technique via the weighted operator Fourier transform~\cite{chen2025efficientexactnoncommutativequantum} and the discrete-time quantum Metropolis--Hastings framework~\cite{gilyen2026}. Techniques and insights 
from other Gibbs sampling algorithms~\cite{jiang2024weak, Ding_2025, rall2023thermal, temme2011quantum, ding2025end} and related structural results~\cite{bergamaschi2025structural} may further inform the design of new measurement protocols with simpler constructions or reduced implementation costs. Conversely, a fundamentally different approach to 
non-destructive measurement for general observables would itself yield a new Gibbs sampling algorithm, since any such measurement corresponds to a quantum channel admitting the Gibbs state as a fixed point.

\medskip \noindent
\textbf{Concurrent Work.} The concurrent and independent work \cite{chen2026efficient} also employs detailed-balance measurements for measuring quantum thermal states. There, the authors study the task of estimating multiple observables given many copies of the Gibbs state, and show that $\log(M)/\epsilon^2$ samples suffice to estimate $M$ observables to precision $\epsilon$. Our work, in contrast, focuses on estimating a single observable and does not assume access to multiple copies of the thermal state; instead, we begin from a single copy and generate subsequent samples along a single trajectory via non-destructive measurements. The two approaches are complementary: combining our single-trajectory framework with the techniques of \cite{chen2026efficient} could potentially enable the estimation of many observables from a single trajectory.

\section{Preliminaries} \label{sec:preli}
In this section, we introduce some basic concepts used throughout the paper, particularly the quantum Gibbs sampler, the KMS detailed balance condition, the spectral gap, and Heisenberg-picture smoothed observables via the operator Fourier transform.

In this work, we consider a finite-dimensional quantum system associated with a Hilbert space $\mathcal{H} \cong \mathbb{C}^d$, where $d = 2^n$ and $n$ is the number of qubits.  Let $\mathcal{B}(\mathcal{H}) \cong \mathbb{C}^{d \times d}$ denote the algebra of bounded linear operators on $\mathcal{H}$, and let $\mc{D}(\mc{H})$ be the set of quantum states. Given a Hermitian system Hamiltonian 
$H \in \mathcal{B}(\mathcal{H})$ and inverse temperature $\beta > 0$, the quantum Gibbs (thermal) state is given by $\sigma_\beta = e^{-\beta H} / \mathrm{Tr}(e^{-\beta H})$.

A standard approach to preparing quantum thermal states is quantum Gibbs sampling: 
one designs a primitive quantum channel $\mathcal{N}$, efficiently simulable on a 
quantum computer, that admits the Gibbs state $\sigma_\beta$ as its unique fixed 
point~\cite{Chen2025efficient, Ding_2025, chen2025efficientexactnoncommutativequantum, 
gilyen2026}, where
\begin{equation}
    \lim_{k \to \infty} \mathcal{N}^k(\rho) = \sigma_\beta
    \qquad \text{for every initial state } \rho.
\end{equation}
The convergence rate of $\mathcal{N}$ is quantified by the \emph{mixing time},
\begin{align}
    t_{\mathrm{mix}}(\epsilon) := \min\!\Big\{ k \ge 0 \;\Big|\; 
    \sup_{\rho \in \mathcal{D}(\mathcal{H})}
    \|\mathcal{N}^k(\rho) - \sigma_\beta\|_1 \le \epsilon \Big\},
\end{align}
where $\|\cdot\|_1$ denotes the trace norm. Our objective is to \emph{design quantum algorithms for estimating the thermal expectation value $\langle A \rangle_{\sigma_\beta} 
:= \mathrm{Tr}(\sigma_\beta A)$ of a target Hermitian observable $A$ to additive 
error $\epsilon$}. To this end, as previewed in Theorems~\ref{thm:informal} and~\ref{thm:informal_ws}, we develop two measurement protocols for Gibbs states 
and analyze how each reduces the overall sampling cost.

\subsection{Quantum Detailed Balance and the Spectral Gap}  \label{sec:def-chisquare}

One of the key features of existing quantum Gibbs samplers is the KMS detailed balance condition, which facilitates both 
quantum implementation~\cite{chen2025efficientexactnoncommutativequantum, Ding_2025, 
gilyen2026} and convergence analysis~\cite{rouze2025efficient, rouze2026optimal}. 
For any $X, Y \in \mathcal{B}(\mathcal{H})$, the KMS inner product 
is defined by
\begin{align}
    \langle X, Y \rangle_{\sigma_\beta, 1/2}
    := \mathrm{Tr}\!\left(X^\dagger \sigma_\beta^{1/2} Y \sigma_\beta^{1/2}\right).
\end{align}
A superoperator $\mathcal{T}: \mathcal{B}(\mathcal{H}) \to \mathcal{B}(\mathcal{H})$ 
is called a \emph{quantum channel} if it is completely positive and trace-preserving, 
and we denote by $\mathcal{T}^\dagger$ its adjoint with respect to the 
Hilbert--Schmidt inner product $\langle X, Y \rangle_{\mathrm{HS}} := \mathrm{Tr}(X^\dagger Y)$.

\begin{definition}[KMS detailed balance]\label{def:DB}
A superoperator $\mathcal{T}: \mathcal{B}(\mathcal{H}) \to \mathcal{B}(\mathcal{H})$ 
satisfies the KMS detailed balance condition with respect to $\sigma_\beta$ if 
$\mathcal{T}^\dagger$ is self-adjoint under the KMS inner product, i.e., for all 
$X, Y \in \mathcal{B}(\mathcal{H})$,
\begin{align}
    \langle X,\, \mathcal{T}^\dagger(Y) \rangle_{\sigma_\beta, 1/2} 
    = \langle \mathcal{T}^\dagger(X),\, Y \rangle_{\sigma_\beta, 1/2}.
\end{align}
\end{definition}

Any quantum channel $\mathcal{T}$ satisfying $\sigma_\beta$-KMS detailed balance necessarily 
admits $\sigma_\beta$ as a fixed point, $\mathcal{T}(\sigma_\beta) = \sigma_\beta$. 
To bound the mixing time, a standard approach is to analyze the convergence of the 
iterates $\mathcal{T}^k$ via the $\chi^2$-divergence. We first introduce the 
weighted norm
\begin{align}
    \|X\|_{\sigma_\beta,-1/2}^2
    := \mathrm{Tr}\!\left(\sigma_\beta^{-1/2} X^\dagger \sigma_\beta^{-1/2} X\right),
    \qquad X \in \mathcal{B}(\mathcal{H}),
\end{align}
which is closely related to the family of monotone Riemannian metrics in quantum 
information~\cite{petz1996geometries, lesniewski1999monotone}; see 
\cite[Section~2]{chen2025randomized} for the precise connection. The 
$\chi^2$-divergence between a state $\rho$ and the Gibbs state $\sigma_\beta$ 
is then defined by
\begin{align}
    \chi^2(\rho, \sigma_\beta)
    := \|\rho - \sigma_\beta\|_{\sigma_\beta,-1/2}^2
    = \mathrm{Tr}\!\left(
        \sigma_\beta^{-1/2}(\rho - \sigma_\beta)
        \sigma_\beta^{-1/2}(\rho - \sigma_\beta)
    \right).
\end{align}

According to \cite{temme2010chi}, the decay of $\chi^2(\mathcal{N}^k(\rho),\sigma_\beta)$ for the discrete-time quantum Markov chain $\{\mathcal{N}^k\}_{k\ge 0}$ is controlled by the spectral gap.

\begin{definition}
Let $\mathcal{T}$ be a primitive quantum channel satisfying detailed balance with invariant state $\sigma_\beta$, and assume that its spectrum is contained in $[0,1]$\footnote{This can always be enforced, up to a constant-factor loss in the gap, by replacing \(\mathcal T\) with its lazy version \((\mathrm{id}+\mathcal T)/2\).}. 
The \emph{spectral gap} of $\mathcal{T}$ is defined by
\begin{align*}
    \operatorname{gap}(\mathcal{T})
:=
1-
\max_{\substack{X\neq 0\\ \langle X,I\rangle_{\sigma_{\beta},1/2}=0}}
\frac{\langle X,\mathcal{T}^\dagger(X)\rangle_{\sigma_{\beta},1/2}}
     {\langle X,X\rangle_{\sigma_{\beta},1/2}}. 
\end{align*}
\end{definition}

A strictly positive spectral gap, $\lambda := \operatorname{gap}(\mathcal{T})>0$, implies the exponential decay of the $\chi^2$-divergence under iteration: 
\begin{align*}
    \chi^2(\mathcal{T}^k(\rho),\sigma_\beta)
\le
(1 - \lambda)^{2k}\chi^2(\rho,\sigma_\beta),
\end{align*}
which gives, using $\|\rho-\sigma_\beta\|_1^2 \le \chi^2(\rho,\sigma_\beta)$,
\begin{align*}
    \|\mathcal{T}^k(\rho)-\sigma_\beta\|_1 \le
\sqrt{\chi^2(\mathcal{T}^k(\rho),\sigma_\beta)} \le (1- \lambda)^k \sqrt{\chi^2(\rho,\sigma_\beta)}\,.
\end{align*}
Furthermore, the mixing time scales as
\begin{align*}
    t_{\mathrm{mix}}(\epsilon)
=
\mathcal{O}\!\left(
\frac{1}{\lambda}
\log\!\frac{1}{\epsilon\, \sigma_{\beta,\min}}
\right), 
\end{align*}
where $\sigma_{\beta,\min}$ denotes the smallest eigenvalue of $\sigma_\beta$.

\subsection{Smoothing Non-commuting Observables} \label{sec:smoothing}

A central difficulty in extending~\cite{jiang2026} to estimating the thermal expectation 
$\langle A \rangle_{\sigma_\beta}$ for observables $A$ that do not commute with the 
Hamiltonian $H$ is that direct projective measurements severely disturb the Gibbs 
state, and the resulting post-selected states may not average back to $\sigma_\beta$. 
To extract measurement information while maintaining a controlled disturbance of 
$\sigma_\beta$, we use a Heisenberg-picture smoothed observable.

Specifically, following~\cite{chen2025efficientexactnoncommutativequantum, Ding_2025, jiang2026, chen2025catalytic}, for any Hermitian observable $A$, we define the filtered (possibly non-Hermitian) operator $A_f$ associated with an $L^1$-integrable filter function $f: \mathbb{R} \to \mathbb{C}$ by
\begin{align} \label{def:filteredA}
    A_f := \int_{-\infty}^{\infty} f(t)\, e^{iHt} A e^{-iHt}\, \mathrm{d}t.
\end{align}
We assume that $f$ is normalized, i.e.,
\begin{align*}
    \int_{-\infty}^{\infty} f(t)\, \mathrm{d}t = 1,
\end{align*}
which, noting $[\sigma_\beta, e^{iHt}] = 0$, implies that $A_f$ preserves the thermal expectation value of $A$:
\begin{align} \label{eq:equalobs}
    \mathrm{Tr}(\sigma_\beta A_f) = \int_{-\infty}^{\infty} f(t)\, \mathrm{Tr}\!\left(e^{-iHt} \sigma_\beta e^{iHt} A\right) \mathrm{d}t = \int_{-\infty}^{\infty} f(t)\, \mathrm{Tr}(\sigma_\beta A)\, \mathrm{d}t = \mathrm{Tr}(\sigma_\beta A).
\end{align}
This property is crucial for constructing an unbiased estimator of $\langle A \rangle_{\sigma_\beta}$.

A technically essential ingredient in our detailed-balance measurement scheme (\cref{sec:DBmeasure}) and warm-start analysis (\cref{sec:warm-start}) is the controlled behavior of observables under imaginary-time evolution. Although for any fixed finite-dimensional system, the operator $e^{sH} A e^{-sH}$ remains bounded for every finite $s$, its norm can still grow rapidly with the system size (see further discussion in \cite{chen2025quantum, bergamaschi2026fastmixingquantumspin}). In the thermodynamic limit, local observables may even lose boundedness or quasi-local control at finite imaginary time~\cite{bouch2015complex, perezgarcia2023locality}. The purpose of filter smoothing is precisely to tame this growth through a suitably chosen filter function.

\begin{lemma}[Smoothed observable]\label{lem:norm-imaginary}
For a given $s \in \mathbb{R}$, suppose that the Fourier transform $\hat{f}(\omega) = \int_{-\infty}^{\infty} f(t) e^{-i\omega t} \, \mathrm{d}t$ of the filter function satisfies
\begin{align} \label{eq:expdecayf}
    |\hat{f}(\omega)| \leq C e^{-\gamma|\omega|}, \quad \text{for some constant $C > 0$ and $\gamma > |s|$}.
\end{align}
Then the imaginary-time evolution of $A_f$ is again a filtered operator:
\begin{align} \label{eq:shiftfilter}
    e^{sH} A_f e^{-sH} = A_{\tilde{f}},
\end{align}
where $\tilde{f}$ is the analytically continued filter function defined by
\begin{align*}
    \tilde{f}(t) := f(t + is) = \frac{1}{2\pi} \int_{-\infty}^{\infty} \hat{f}(\omega)\, e^{-\omega s + i\omega t}\, \mathrm{d}\omega,
\end{align*}
and assumed to be $L^1$ integrable. Moreover, 
\begin{align} \label{est:norm}
    \|A_{\tilde{f}}\| \leq \|A\| \int_{-\infty}^{\infty} |\tilde{f}(t)|\, \mathrm{d}t.
\end{align}
In particular, for the Gaussian filter $f(t) = \frac{1}{\sqrt{2\pi \tau^2}}e^{-t^2/2\tau^2}$, we have $\|A_{\tilde f}\|\le e^{s^2/(2\tau^2)}\|A\|$.  
\end{lemma}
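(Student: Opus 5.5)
Work in the eigenbasis of $H=\sum_i E_i\ket{i}\bra{i}$. Since the system is finite-dimensional, $A$ decomposes into Bohr-frequency components $A=\sum_{\nu}A_\nu$, where $A_\nu=\sum_{E_i-E_j=\nu}\ket{i}\bra{i}A\ket{j}\bra{j}$ and the sum runs over finitely many $\nu$. These components satisfy $e^{iHt}A_\nu e^{-iHt}=e^{i\nu t}A_\nu$ and $e^{sH}A_\nu e^{-sH}=e^{s\nu}A_\nu$.

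Substituting the first relation into \eqref{def:filteredA} gives
\begin{align*}
A_f=\sum_\nu \Big(\int f(t)e^{i\nu t}\,\mathrm{d}t\Big)A_\nu=\sum_\nu \hat f(-\nu)A_\nu .
\end{align*}
Conjugating by $e^{sH}$ then yields $e^{sH}A_fe^{-sH}=\sum_\nu e^{s\nu}\hat f(-\nu)A_\nu$.

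On the other side, I would compute the Fourier transform of $\tilde f(t)=\frac{1}{2\pi}\int\hat f(\omega)e^{-\omega s}e^{i\omega t}\,\mathrm{d}\omega$. The hypothesis $|\hat f(\omega)|\le Ce^{-\gamma|\omega|}$ with $\gamma>|s|$ makes $\hat f(\omega)e^{-\omega s}$ integrable (and in $L^2$), so by Fourier inversion $\hat{\tilde f}(\omega)=\hat f(\omega)e^{-\omega s}$. The same decay justifies differentiating under the integral, so $\tilde f$ is indeed the analytic continuation $f(t+is)$ on the strip $|\mathrm{Im}|<\gamma$, and this formula is consistent with the definition.

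Because $\tilde f$ is assumed $L^1$, $A_{\tilde f}$ is well defined, and the same computation as for $A_f$ gives
\begin{align*}
A_{\tilde f}=\sum_\nu\hat{\tilde f}(-\nu)A_\nu=\sum_\nu \hat f(-\nu)e^{\nu s}A_\nu .
\end{align*}
This matches the expression for $e^{sH}A_fe^{-sH}$, which proves \eqref{eq:shiftfilter}. The only delicate point is identifying $\hat{\tilde f}$ with $\hat f(\omega)e^{-\omega s}$ pointwise at each Bohr frequency. This holds because both sides are continuous functions that agree almost everywhere, using $\tilde f\in L^1$ and the integrability of $\hat f e^{-\omega s}$.

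The norm bound \eqref{est:norm} follows from the triangle inequality for the Bochner integral together with unitary invariance: $\|A_{\tilde f}\|\le\int|\tilde f(t)|\,\|e^{iHt}Ae^{-iHt}\|\,\mathrm{d}t=\|A\|\,\|\tilde f\|_{L^1}$.

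For the Gaussian, $\hat f(\omega)=e^{-\tau^2\omega^2/2}$, which decays faster than any exponential, so the hypothesis holds for every $s$. Then
\begin{align*}
\tilde f(t)=\frac{1}{\sqrt{2\pi\tau^2}}e^{-(t+is)^2/2\tau^2},\qquad |\tilde f(t)|=\frac{1}{\sqrt{2\pi\tau^2}}e^{-t^2/2\tau^2}e^{s^2/2\tau^2}.
\end{align*}
Hence $\|\tilde f\|_{L^1}=e^{s^2/(2\tau^2)}$, which gives $\|A_{\tilde f}\|\le e^{s^2/(2\tau^2)}\|A\|$.
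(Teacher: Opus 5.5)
Your proof is correct and follows essentially the same route as the paper: diagonalize $H$, observe that each Bohr-frequency (matrix-element) component of $A_f$ is weighted by $\hat f(-\nu)$, use $\hat{\tilde f}(\omega)=e^{-s\omega}\hat f(\omega)$ to match the imaginary-time-conjugated operator, then bound the norm by the triangle inequality and compute the Gaussian $L^1$ norm explicitly. Your remark that the pointwise identification of $\hat{\tilde f}$ at the Bohr frequencies holds because both sides are continuous and agree almost everywhere is a small gain in rigor; the paper simply asserts that identity.
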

\begin{proof}
Let $H = \sum_k E_k |k\rangle\langle k|$ 
be the spectral decomposition of $H$, where $E_k$
are the energy eigenvalues and $\ket{k}$ the corresponding energy eigenstates. The matrix elements of $A_f$ are
\begin{align*}
    \langle k | A_f | j \rangle 
    = \int_{-\infty}^{\infty} f(t)\, e^{i(E_k - E_j)t} \langle k | A | j \rangle\, \mathrm{d}t 
    = \hat{f}(E_j - E_k)\, \langle k | A | j \rangle\,.
\end{align*}
Conjugating by $e^{sH}$ and $e^{-sH}$ rescales these elements by $e^{s(E_k - E_j)}$:
\begin{align*}
    \langle k | e^{sH} A_f e^{-sH} | j \rangle 
    = e^{s(E_k - E_j)}\, \hat{f}(E_j - E_k)\, \langle k | A | j \rangle.
\end{align*}
Since $|\hat{f}(\omega)| \leq Ce^{-\gamma|\omega|}$ by assumption and $\gamma > |s|$, 
the product $e^{-s\omega}\hat{f}(\omega)$ is well-defined and absolutely integrable. Moreover, it is bounded for all $\omega = E_j - E_k$. Note that $\tilde{f}(t)$ is defined as the inverse Fourier 
transform of $e^{-s\omega}\hat{f}(\omega)$. We have $\hat{\tilde{f}}(\omega) 
= e^{-s\omega}\hat{f}(\omega)$, and 
\begin{align*}
    \langle k | e^{sH} A_f e^{-sH} | j \rangle 
    = \hat{\tilde{f}}(E_j - E_k)\, \langle k | A | j \rangle 
    = \langle k | A_{\tilde{f}} | j \rangle\,,
\end{align*}
that is, $e^{sH} A_f e^{-sH} = A_{\tilde{f}}$.
The norm bound \eqref{est:norm} follows from the triangle inequality applied to the integral 
definition of $A_{\tilde{f}}$:
\begin{align*}
    \|A_{\tilde{f}}\| 
    \leq \int_{-\infty}^{\infty} |\tilde{f}(t)|\, \|e^{iHt} A e^{-iHt}\|\, \mathrm{d}t 
    \leq \|A\| \int_{-\infty}^{\infty} |\tilde{f}(t)|\, \mathrm{d}t,
\end{align*}
where we used the fact that conjugation by unitaries preserves the operator norm. For the Gaussian filter, $\tilde f(t)=f(t+is)$ satisfies
\[
|\tilde f(t)|=\frac{1}{\sqrt{2\pi\tau^2}}
\exp\!\left(\frac{s^2}{2\tau^2}\right)
\exp\!\left(-\frac{t^2}{2\tau^2}\right),
\]
and therefore $\int_{-\infty}^{\infty}|\tilde f(t)|\,\mathrm{d}t
=\exp(s^2/(2\tau^2))$.
\end{proof}

By standard Paley--Wiener type arguments, the exponential decay condition~\eqref{eq:expdecayf} implies that $f$ extends analytically to the strip $|\Im z|<\gamma$. Thus, \cref{lem:norm-imaginary} shows that, with a sufficiently regular filter function such as a Gaussian, the filtered observable $A_f$ remains well-behaved under imaginary-time evolution $e^{sH}A_f e^{-sH}$, in the sense that its norm can be controlled through the analytically continued filter. This property plays an essential role in constructing detailed-balance measurement operators without rapid norm growth in the low-temperature regime (\cref{sec:DBmeasure}) and in establishing the warm-start property of post-selected states (\cref{sec:warm-start}).

\section{Measurements via Detailed-Balance Channels} \label{sec:DBmeasure}

In this section, we design a measurement channel satisfying the exact KMS detailed balance condition, 
thereby generalizing the single-trajectory Gibbs sampling approach~\cite{jiang2026} to the non-commuting setting and reducing the sampling cost for estimating thermal expectation values.
Specifically, we shall construct a channel that yields an unbiased estimator of $\mathrm{Tr}(\sigma_\beta A)$ for arbitrary observables $A$ not commuting with $H$, while leaving the Gibbs state $\sigma_\beta$ invariant. The construction follows the discrete-time quantum Metropolis-Hastings 
framework of~\cite{gilyen2026}: the channel consists of a jump branch that encodes the desired measurement information and satisfies exact detailed balance, 
and a rejection branch that enforces the trace-preserving property. 

Specifically, our construction proceeds in two main steps. First, we design Kraus operators $K_1$ and $K_2$ in~\eqref{def:k1} and~\eqref{def:k23}, based 
on the smoothed observables $A_f$ and $A_{\tilde{f}}$, whose outcome probabilities encode $\mathrm{Tr}(\sigma_\beta A)$ and which jointly satisfy exact KMS detailed balance (see \cref{lem:DBKraus}).  The imaginary-time shift in the filter function $\tilde{f}$ of $K_2$ is precisely what enforces the detailed balance relation. Since $K_1$ and $K_2$ alone do not define a trace-preserving channel, we follow the quantum Metropolis--Hastings approach of~\cite{gilyen2026} and append a rejection Kraus operator $K$ that restores the trace-preserving property while maintaining exact 
detailed balance. The resulting measurement channel $\mathcal{M}$ is then applied within the single-trajectory framework of~\cite{jiang2026}, where the autocorrelation time $t_{\mathrm{aut}}$ governs the number of steps needed 
between consecutive samples to decorrelate; see \cref{sec:single-traj-DB} below.

Without loss of generality, we assume the target Hermitian observable $A$ 
satisfies $\|A\| \leq 1$. Otherwise, one can replace $A$ by the rescaled observable $A' = A/\|A\|$, in which case, to estimate $\mathrm{Tr}(\sigma_\beta A)$ to additive error $\epsilon$, it suffices to estimate $\mathrm{Tr}(\sigma_\beta A')$ to 
additive error $\epsilon' = \epsilon / \|A\|$.

Our main result of this section is the following theorem. 

\begin{theorem}\label{thm:DB-channel} (Detailed-Balance Measurement and Sample Complexity) Let $A$ be a Hermitian observable with $\|A\| \le 1$. Define the Gaussian filter $f$ and the associated shifted filter $\tilde{f}$ by 
\begin{align*}
    f(t) := \frac{1}{\sqrt{2\pi\tau^2}}\exp\left(- \frac{t^2}{2\tau^2}\right)\,, \quad \tilde{f}(t) := f\left(t - i\frac{\beta}{2} \right).
\end{align*}
For every $\delta\in(0,1)$, one can choose the width $\tau = \cO(\beta+\mathrm{polylog}(\beta\|H\|/\delta))$, the measurement strength $u = \Theta(1)$, and the scaling constant $c = \Theta(1/\log(\beta\|H\|))$ such that the measurement protocol defined by the quantum channel
$$\mathcal{M}[\rho] = K_1 \rho K_1^\dagger + K_2 \rho K_2^\dagger + K \rho K^\dagger,$$
with Kraus operators
$$K_1 = c\sqrt{I+uA_f},\quad K_2 = c\sqrt{I+uA_{\tilde{f}}}, \quad K = \sqrt{\sqrt{\sigma_\beta}(I - K_1^\dagger K_1 - K_2^\dagger K_2)\sqrt{\sigma_\beta}}\sigma_\beta^{-1/2},$$
satisfies the following properties: 
\begin{itemize}
\item \textbf{Exact Detailed Balance}: The channel $\mathcal{M}$ is trace-preserving and satisfies the KMS detailed balance condition with respect to $\sigma_{\beta}$. 
\item \textbf{Efficient Implementation:} 
The channel $\mathcal{M}$, as a quantum instrument, can be simulated with $\delta$ approximation error (in the sense of Definition~\ref{def:approx-instrument}), using 
\begin{align*}
&\cO(\mathrm{polylog}(\beta \|H\|/\delta)) \text{ queries to a block encoding of } A,\\
&\cO\bigl(\beta\,\mathrm{polylog}(\beta\|H\|/\delta)\bigr) \text{ controlled Hamiltonian simulation time for } H, \text{ and}\\
&\cO(\mathrm{polylog}(\beta \|H\|/\delta)) \text{ ancilla qubits.}
\end{align*}
\item \textbf{The Overall Complexity.}
Consider a primitive quantum Gibbs sampling channel $\cN$ satisfying KMS detailed balance, e.g., \cite{Chen2025efficient,Ding_2025,chen2025efficientexactnoncommutativequantum,gilyen2026}.
 To estimate $\mathrm{Tr}(\sigma_\beta A)$ to precision $\epsilon$ with failure probability $\eta$, it suffices to proceed as follows. 
 \begin{enumerate}
     \item[(i)] Apply $t_{\mathrm{mix}}(\eta/3)$ steps of $\cN$ to prepare
  an initial state $\rho_{\mathrm{burn}}$ satisfying $\|\rho_{\mathrm{burn}} - \sigma_\beta\|_1 \le \eta/3$.
    \item[(ii)] Apply the instrument
\(\{\mathcal E_a\}_a\), where \(\mathcal E_a:=\mathcal M\circ \mathcal N\circ \mathcal M_a\), for $$T = \cO\left(\frac{t_{\mathrm{aut},T} \log^2(\beta\|H\|)}{\epsilon^2 \eta}\right)$$
steps, provided the per-step implementation error satisfies $\delta = \cO(\eta/T)$.    
 \end{enumerate} 
 Here $t_{\mathrm{aut},T}$ is the autocorrelation time defined in \eqref{eq:aut} associated with $\{\mc{E}_a\}$. If $\cN$ has spectrum in $[0,1]$ and admits a spectral gap $\lambda > 0$, the autocorrelation time is upper bounded by
\begin{equation} \label{eq:boundtaut}
    t_{\mathrm{aut},T} \le \frac{\cO(1)}{\lambda \log^2(\beta\|H\|)} + \frac{1}{2},
\end{equation}
and hence $
T=\cO((\log^2(\beta\|H\|)+\lambda^{-1})/(\epsilon^2\eta))$. 
\end{itemize}
\end{theorem}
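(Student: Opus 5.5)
The proof splits along the three bullets: detailed balance, implementation, and overall complexity.

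\textbf{Detailed balance.} I will first check the key algebraic identity $K_2=\sigma_\beta^{1/2}K_1^\dagger\sigma_\beta^{-1/2}$. The Gaussian $f$ is real and even, so $A_f$ is Hermitian. Applying \cref{lem:norm-imaginary} with $s=\beta/2$ gives $\sigma_\beta^{1/2}A_f\sigma_\beta^{-1/2}=e^{-\beta H/2}A_fe^{\beta H/2}$, which is a filtered operator with shifted filter $f(t\mp i\beta/2)$; the sign convention is fixed so that this equals $A_{\tilde f}$. The Gaussian satisfies the decay hypothesis for every $\gamma$, so the lemma applies. Since similarity commutes with convergent power series, $\sigma_\beta^{1/2}\sqrt{I+uA_f}\,\sigma_\beta^{-1/2}=\sqrt{I+uA_{\tilde f}}$, provided $u\|A_f\|,u\|A_{\tilde f}\|<1$. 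By the Gaussian bound in \cref{lem:norm-imaginary}, $\|A_{\tilde f}\|\le e^{\beta^2/(8\tau^2)}$, which is $O(1)$ once $\tau\gtrsim\beta$. So a constant $u$ (say $u=1/4$) makes the series converge.

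I will then invoke \cref{lem:DBKraus}, whose hypothesis is exactly this pair relation, to get KMS detailed balance of $\mathcal{T}'$. Next I choose $c$ small enough that $\sqrt{\sigma_\beta}(I-K_1^\dagger K_1-K_2^\dagger K_2)\sqrt{\sigma_\beta}\succeq0$. This requires $K_1^\dagger K_1+K_2^\dagger K_2\preceq I$, which holds whenever $c^2(\|K_1/c\|^2+\|K_2/c\|^2)\le1$, and that is guaranteed by $c=O(1)$. With this choice, $K^\dagger K=I-K_1^\dagger K_1-K_2^\dagger K_2$, so $\mathcal{M}$ is trace-preserving. The rejection branch $\rho\mapsto K\rho K^\dagger$ is KMS-detailed-balanced by the Metropolis–Hastings argument of \cite{gilyen2026}: $K$ has the form $\sqrt{M}\sigma^{-1/2}$ with $M$ self-adjoint and built so that it is compatible with $\sigma$. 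Concretely, I will verify that $\sigma^{-1/2}K\rho K^\dagger\sigma^{-1/2}=K^\dagger\sigma^{-1/2}\rho\,\sigma^{-1/2}K$ by the direct computation used there. Since the condition is linear in the map, the sum $\mathcal{M}$ is detailed-balanced.

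\textbf{Implementation.} The plan is to block-encode $A_f$ and $A_{\tilde f}$ by LCU. I truncate the Gaussian integral at $|t|\le O(\tau\sqrt{\log(1/\delta)})$ and discretize it with a quadrature step of order $1/\|H\|$. This needs controlled $e^{iHt}$ for total time $\tilde O(\tau)=\tilde O(\beta)$ and $O(\log(\tau\|H\|/\delta))$ ancillas. The complex weights of $\tilde f$ have $L^1$ norm $e^{\beta^2/(8\tau^2)}=O(1)$, so the subnormalization stays constant. I then apply QSVT or a polynomial approximation of $\sqrt{1+ux}$ of degree polylog; the domain is bounded away from $-1$, so the approximation error is controlled. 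For $K_2$ a non-normal argument needs care: I will use the power series directly via products of block encodings, and if this loses too much I fall back on implementing $\sigma^{1/2}K_1^\dagger\sigma^{-1/2}$ through the identity, weighed against the cost of implementing $\sqrt{\sigma}$. The rejection operator $K$ is implemented by the method of \cite{gilyen2026}, which costs $\tilde O(\beta)$ extra Hamiltonian simulation. Finally I assemble the instrument error in the sense of Definition~\ref{def:approx-instrument}.

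\textbf{Overall complexity.} The outcome $z$ is Bernoulli with mean $2c^2(1+u\,\mathrm{Tr}(A\sigma_\beta))$ at stationarity, so the estimator $v$ has variance $O(1/(c^2u^2))=O(\log^2(\beta\|H\|))$. I then plug into the single-trajectory theorem of \cite{jiang2026} (Section~\ref{sec:single-traj-DB}), which requires only a detailed-balanced instrument and a burn-in. Chebyshev's inequality together with the autocorrelation-time bound gives $T=O(\mathrm{Var}\cdot t_{\mathrm{aut}}/(\epsilon^2\eta))$. The burn-in costs $\eta/3$ in trace distance, and implementation errors cost $T\delta=O(\eta)$. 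For \eqref{eq:boundtaut} I bound the autocorrelation of $\mathcal{M}\circ\mathcal{N}\circ\mathcal{M}$ in the KMS inner product. Since $\mathcal{M}$ is a detailed-balanced contraction and $\mathcal{N}$ has gap $\lambda$, correlations decay like $(1-\lambda)^k$. Normalizing by the variance of $z$, which is $\Theta(c^2)$, yields the $1/(\lambda\log^2)$ term, plus a $1/2$ from the diagonal term.

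\textbf{Main obstacle.} I expect two steps to be the hardest: the rigorous block encoding of the non-Hermitian square root $\sqrt{I+uA_{\tilde f}}$, and tracking the $c$-dependence in $t_{\mathrm{aut}}$.
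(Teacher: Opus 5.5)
\textbf{Agreement with the paper.} Your detailed-balance argument matches the paper: the pair relation $K_2=\sigma_\beta^{1/2}K_1^\dagger\sigma_\beta^{-1/2}$ from Lemma~\ref{lem:norm-imaginary} with $s=-\beta/2$, then Lemma~\ref{lem:DBKraus}, KMS-symmetry of $\rho\mapsto K\rho K^\dagger$, and linearity. So do your encodings of $A_f$, $A_{\tilde f}$, $K_1$ (QSVT) and $K_2$ (truncated binomial series via LCU), and the Chebyshev plus error-accumulation step. Drop the fallback of realizing $K_2$ as $\sigma_\beta^{1/2}K_1\sigma_\beta^{-1/2}$, since $\|\sigma_\beta^{-1/2}\|$ can be exponentially large.

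\textbf{Gap in the rejection branch.} You treat \cite{gilyen2026} as a black box. Its implementation result (Proposition~\ref{thm:implementR}) has two hypotheses on $O=K_1^\dagger K_1+K_2^\dagger K_2$ that you never check, and they are what dictate the parameters in the statement.

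First, it needs $\|O\|\le 10^{-3}/s^2$ with $s=\Theta(\log(\beta\|H\|))$. Since $O\preceq 3c^2I$, this is why $c=\Theta(1/\log(\beta\|H\|))$; mere positivity of $I-O$ only needs your $c=\cO(1)$. This choice is the source of both the $\log^2(\beta\|H\|)$ variance and the $\log^{-2}$ in \eqref{eq:boundtaut}.

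Second, $O$ must be $(\Lambda,\epsilon')$-quasi-local in energy with $\Lambda=1/(10\lfloor\log_2(1/\epsilon)\rfloor)$. The paper proves this in Lemma~\ref{lem:quasi-local-verify} in three steps. It truncates the binomial series of $\sqrt{I+uA_{\tilde f}}$ at degree $k=\Theta(\log(1/\epsilon'))$. It replaces $A_f,A_{\tilde f}$ by Bohr-frequency truncations at $\Lambda_0=\Lambda/(2k)$, with error $\cO((1+\tau\Lambda_0)^3e^{-c_g\tau^2\Lambda_0^2})$ by Lemma~\ref{lem:gaussian-bandlimit}. It then observes that $P_k^\dagger P_k$ has support in $[-2k\Lambda_0,2k\Lambda_0]=[-\Lambda,\Lambda]$.

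This forces $\tau\gtrsim (k/\Lambda)\sqrt{\log(k/(c^2\epsilon'))}=\mathrm{polylog}(1/\delta)$, which is the additive polylog term in $\tau$; $\tau=\Theta(\beta)$ alone fails for small $\beta$. Without these checks you have no efficient circuit for $K$, which contains $\sigma_\beta^{-1/2}$.

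\textbf{Gap in the autocorrelation bound.} Decay like $(1-\lambda)^t$ times a prefactor, divided by $\mathrm{Var}(z)=\Theta(c^2)$, gives only $t_{\mathrm{aut},T}=\cO(1/\lambda)$ unless you show the prefactor is $\cO(c^4)$. To close this:
\begin{enumerate}
\item Verify that the centered map $\widehat{\mathcal M}=(1-\bar z)\mathcal T'-\bar z\,\mathcal R$ is KMS-detailed-balanced; both pieces are, and $K_1,K_2$ share an outcome value.
\item Then $\widehat{\mathcal M}(\sigma_\beta)=\sigma_\beta^{1/2}w\,\sigma_\beta^{1/2}$ with $w=\widehat{\mathcal M}^\dagger(I)=O-\bar zI$. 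The lag-$(t-1)$ correlation equals $\langle w,(\mathcal M^\dagger\mathcal N^\dagger\mathcal M^\dagger)^{t-1}\mathcal M^\dagger\mathcal N^\dagger w\rangle_{\sigma_\beta,1/2}$.
\item This is bounded by $(1-\lambda)^t\langle w,w\rangle_{\sigma_\beta,1/2}$, because $\mathcal M^\dagger$ is a unital KMS-symmetric contraction and $\mathrm{Tr}(\sigma_\beta w)=0$.
\item Finally, $\langle w,w\rangle_{\sigma_\beta,1/2}\le\mathrm{Tr}(\sigma_\beta(O-\bar zI)^2)\le\|O\|\,\mathrm{Tr}(\sigma_\beta O)=\cO(c^4)$, since $\bar z=\mathrm{Tr}(\sigma_\beta O)$ and $\|O\|=\cO(c^2)$.
\end{enumerate}
This is what Proposition~\ref{prop:spectral-gap-bound} packages as $\theta$. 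The paper bounds it via $\mathrm{Tr}(K_bK_a\sigma_\beta K_a^\dagger K_b^\dagger)\le\|K_a\|^2\|K_b\|^2$ with $\|K_1\|,\|K_2\|=\cO(c)$.
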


  In what follows, we detail each component of this construction and the resulting protocol, building toward the full proof of Theorem~\ref{thm:DB-channel}. In particular, Section~\ref{sec:DB-conditions} establishes the algebraic condition on $K_1, K_2$ for exact KMS detailed balance and gives their explicit construction. Section~\ref{sec:exact-channel} appends the rejection branch $K$ and verifies the detailed balance and implementation properties. Finally, Section~\ref{sec:single-traj-DB} reviews the single-trajectory sampling protocol in \cite{jiang2026}, sets up the autocorrelation time framework, and shows that the autocorrelation time $t_{\mathrm{aut}}$ controls the decorrelation time between consecutive samples, completing the proof of Theorem~\ref{thm:DB-channel}. We provide a sketch below for completeness.

\begin{proof}[Proof sketch of Theorem~\ref{thm:DB-channel}]
Item~(1) follows from Lemma \ref{lem:DBKraus} and Theorem \ref{thm:rejection}. For item~(2), Lemma~\ref{lem:K1K2-implementation} gives efficient block encodings of the jump branches \(K_1\) and \(K_2\),
while Theorem~\ref{thm:rejection} gives the corresponding implementation of the rejection branch \(K\). Combining these block encodings with Theorem~\ref{thm:approx-channel-be} yields a \(\delta\)-approximate implementation of the full instrument with the stated resource bounds. 
For item~(3), consider the random variable $v$ defined in \eqref{eq:estimator}, which is an unbiased estimator for \(\Tr(\sigma_\beta A)\) and has 
\begin{equation*}
    \Var(v)=\Theta((cu)^{-2})
=\Theta \bigl(\log^2(\beta\|H\|)\bigr).
\end{equation*}
Now apply Lemma~\ref{lem:sample-complexity} to the instrument \(\{\mathcal E_a\}\) after burn-in.
This yields the running time (the number of iterations)
\begin{equation*}
    T=\cO\left(
\frac{\Var(v)}{\epsilon^2\eta}\,
t_{\mathrm{aut},T}
\right)
= \cO\left(
\frac{\log^2(\beta\|H\|)}{\epsilon^2\eta}\,
t_{\mathrm{aut},T}
\right).
\end{equation*}
The stability estimate in Section~\ref{sec:single-traj-DB} and Lemma~\ref{lem:error-accum} shows that choosing \(\delta = \cO(\eta/T)\) keeps the additional implementation error within the failure budget.
Finally, if \(\mathcal N\) has spectrum in \([0,1]\) and spectral gap \(\lambda>0\), then Proposition~\ref{prop:spectral-gap-bound} applies because
\(\mathcal N\), \(\mathcal M\), and the centered measurement map all satisfy KMS detailed balance.
The bound on the parameter \(\theta\) established in Section~\ref{sec:single-traj-DB} gives \eqref{eq:boundtaut}.
\end{proof}

\subsection{Construction of the Jump Branch}\label{sec:DB-conditions}
We begin by defining the unnormalized jump branch of our measurement channel. 
Consider the completely positive map $\mathcal{T}'$ given by two Kraus operators $K_1, K_2 \in \mc{B}(\mc{H})$: 
\begin{align} \label{def:T'}
    \mathcal{T}'[\rho] := K_1\rho K_1^\dagger + K_2\rho K_2^\dagger.
\end{align}
We require $\mathcal{T}'$ to satisfy the KMS detailed balance condition. The following lemma establishes the algebraic condition on $K_1$ and $K_2$ that is sufficient to guarantee this. 

\begin{lemma}\label{lem:DBKraus}
If the Kraus operators $K_1, K_2 \in \mc{B}(\mc{H})$ satisfy
\begin{align} 
    K_2 = \sigma_\beta^{1/2} K_1^\dagger \sigma_\beta^{-1/2}, \label{def:k22}
\end{align}
then the completely positive map $\mathcal{T}'[\rho]$ in \eqref{def:T'} satisfies KMS detailed balance with respect to 
$\sigma_\beta$.
\end{lemma}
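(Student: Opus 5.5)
The plan is to verify the KMS self-adjointness directly from Definition~\ref{def:DB}, working with the Heisenberg-picture map
\[
\mathcal{T}'^\dagger(X) = K_1^\dagger X K_1 + K_2^\dagger X K_2 .
\]
The idea is to show that the KMS-conjugated map $\sigma_\beta^{1/4}\mathcal{T}'^\dagger(\sigma_\beta^{-1/4}\,\cdot\,\sigma_\beta^{-1/4})\sigma_\beta^{1/4}$ is Hilbert--Schmidt self-adjoint. Under the hypothesis \eqref{def:k22}, the two Kraus terms should turn out to be exactly each other's KMS adjoints.

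Concretely, I would compute $\langle X, K_2^\dagger Y K_2\rangle_{\sigma_\beta,1/2} = \mathrm{Tr}(X^\dagger \sigma_\beta^{1/2} K_2^\dagger Y K_2 \sigma_\beta^{1/2})$. The relation \eqref{def:k22} gives
\[
K_2^\dagger = \sigma_\beta^{-1/2} K_1 \sigma_\beta^{1/2},
\qquad\text{hence}\qquad
\sigma_\beta^{1/2}K_2^\dagger = K_1\sigma_\beta^{1/2}
\quad\text{and}\quad
K_2\sigma_\beta^{1/2} = \sigma_\beta^{1/2}K_1^\dagger .
\]
Substituting these, the expression becomes $\mathrm{Tr}(X^\dagger K_1 \sigma_\beta^{1/2} Y \sigma_\beta^{1/2} K_1^\dagger)$. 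By cyclicity of the trace this equals $\mathrm{Tr}(K_1^\dagger X^\dagger K_1\,\sigma_\beta^{1/2} Y \sigma_\beta^{1/2})$, which is precisely $\langle K_1^\dagger X K_1, Y\rangle_{\sigma_\beta,1/2}$.

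Thus $\langle X, K_2^\dagger Y K_2\rangle = \langle K_1^\dagger X K_1, Y\rangle$ in the KMS inner product. The relation \eqref{def:k22} is symmetric: inverting it gives $K_1 = \sigma_\beta^{1/2}K_2^\dagger\sigma_\beta^{-1/2}$, since taking adjoints yields $K_2^\dagger = \sigma_\beta^{-1/2}K_1\sigma_\beta^{1/2}$. The same computation with the roles swapped therefore gives $\langle X, K_1^\dagger Y K_1\rangle = \langle K_2^\dagger X K_2, Y\rangle$. Summing the two identities yields $\langle X, \mathcal{T}'^\dagger(Y)\rangle_{\sigma_\beta,1/2} = \langle \mathcal{T}'^\dagger(X), Y\rangle_{\sigma_\beta,1/2}$ for all $X,Y$, which is the detailed balance condition.

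The computation is elementary. The only step requiring care is bookkeeping of the $\sigma_\beta^{\pm 1/2}$ factors and adjoints, in particular checking that the swapped relation indeed holds so that the cross-pairing closes up. I would also note that $\sigma_\beta$ is full rank, so $\sigma_\beta^{-1/2}$ is well defined. No property of the specific filtered operators is needed here; those enter only later, when Lemma~\ref{lem:norm-imaginary} is used to realize \eqref{def:k22} via $K_2 = c\sqrt{I+uA_{\tilde f}}$.
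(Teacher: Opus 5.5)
Your proof is correct and is essentially the paper's argument: both use $K_2\sigma_\beta^{1/2}=\sigma_\beta^{1/2}K_1^\dagger$ and $\sigma_\beta^{1/2}K_2^\dagger=K_1\sigma_\beta^{1/2}$ to show that the $K_1$- and $K_2$-terms swap under the KMS pairing, then sum and apply cyclicity of the trace. The only difference is bookkeeping: the paper substitutes these identities directly into both terms, whereas you handle the second term by noting the symmetry of the relation \eqref{def:k22}.
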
 
\begin{proof}
By Definition~\ref{def:DB}, $\mathcal{T}'$ satisfies KMS detailed balance if its Hilbert--Schmidt adjoint $\mathcal{T}'^\dagger[X] = K_1^\dagger X K_1 + K_2^\dagger X K_2$ is self-adjoint under the KMS inner product. For $X, Y \in \mathcal{B}(\mathcal{H})$, we compute
\begin{align} \label{eq:innerT}
    \langle X, \mathcal{T}'^\dagger(Y)\rangle_{\sigma_\beta,1/2} 
    = \mathrm{Tr}\!\left(X^\dagger \sigma_\beta^{1/2} K_1^\dagger Y K_1 
    \sigma_\beta^{1/2}\right) 
    + \mathrm{Tr}\!\left(X^\dagger \sigma_\beta^{1/2} K_2^\dagger Y K_2 
    \sigma_\beta^{1/2}\right).
\end{align}
From $K_2 = \sigma_\beta^{1/2} K_1^\dagger \sigma_\beta^{-1/2}$, right-multiplying 
by $\sigma_\beta^{1/2}$ gives $K_2 \sigma_\beta^{1/2} = \sigma_\beta^{1/2} K_1^\dagger$, 
and taking the adjoint yields $\sigma_\beta^{1/2} K_2^\dagger = K_1 \sigma_\beta^{1/2}$. Substituting into \eqref{eq:innerT} gives
\begin{align*}
    \mathrm{Tr}\!\left(X^\dagger (\sigma_\beta^{1/2} K_1^\dagger) Y 
    (K_1 \sigma_\beta^{1/2})\right) 
    = \mathrm{Tr}\!\left(X^\dagger (K_2\sigma_\beta^{1/2}) Y 
    (\sigma_\beta^{1/2} K_2^\dagger)\right),
\end{align*}
and 
\begin{align*}
    \mathrm{Tr}\!\left(X^\dagger (\sigma_\beta^{1/2} K_2^\dagger) Y 
    (K_2 \sigma_\beta^{1/2})\right) 
    = \mathrm{Tr}\!\left(X^\dagger (K_1\sigma_\beta^{1/2}) Y 
    (\sigma_\beta^{1/2} K_1^\dagger)\right).
\end{align*}
Then, summing both identities implies, by \eqref{eq:innerT},
\begin{align*}
    \langle X, \mathcal{T}'^\dagger(Y)\rangle_{\sigma_\beta,1/2} 
    = \langle K_2^\dagger XK_2 + K_1^\dagger XK_1,\, Y\rangle_{\sigma_\beta,1/2} 
    = \langle \mathcal{T}'^\dagger(X),\, Y\rangle_{\sigma_\beta,1/2},
\end{align*}
which establishes the KMS detailed balance of $\mathcal{T}'$. 
\end{proof}

We now explicitly construct Kraus operators $K_1$ and $K_2$ satisfying the algebraic condition~\eqref{def:k22} for KMS detailed balance and admitting efficient block encodings for quantum implementation, using the filtered observable $A_f$ defined in~\eqref{def:filteredA} and its imaginary-time 
shifted counterpart $A_{\tilde{f}}$ defined in~\eqref{eq:shiftfilter}, respectively.

Specifically, for a normalized Hermitian observable $A$ with $\|A\| \le 1$, we employ the standard Gaussian filter 
\begin{equation*}
     f(t) = \frac{1}{\sqrt{2\pi\tau^2}}\exp\left(- \frac{t^2}{2\tau^2}\right) \q \text{with} \q \tau = \Omega(\beta). 
\end{equation*}
We define the first Kraus operator by
\begin{equation}\label{def:k1}
    K_1 := c\sqrt{I + uA_f},
\end{equation}
where $u \in (0, 1)$ is the measurement strength and $c > 0$ is the scaling constant. Since $f$ is real and even, $A_f$ is Hermitian. Moreover, the Gaussian filtered $A_f$ satisfies $\|A_f\| \leq \|A\| \leq 1$, which gives 
$I + uA_f \succ 0$ for $u \in (0,1)$, thus the matrix square root in~\eqref{def:k1} is well-defined, and $K_1$ is Hermitian and positive definite. 

To satisfy the condition~\eqref{def:k22} for KMS detailed balance, we set
\begin{align}\label{def:k2}
    K_2 := \sigma_\beta^{1/2} K_1 \sigma_\beta^{-1/2} 
    = c\,\sigma_\beta^{1/2}\sqrt{I + uA_f}\,\sigma_\beta^{-1/2} 
    = c\sqrt{I + u\sigma_\beta^{1/2} A_f \sigma_\beta^{-1/2}}.
\end{align}
By Lemma~\ref{lem:norm-imaginary}, we have $\sigma_\beta^{1/2} A_f \sigma_\beta^{-1/2} = 
A_{\tilde{f}}$ with $\tilde{f}(t) = f(t - i\beta/2)$. Thus, \eqref{def:k2} 
simplifies to
\begin{align}\label{def:k23}
    K_2 = c\sqrt{I + uA_{\tilde{f}}}.
\end{align}
Since the shifted filter $\tilde{f}$ takes complex values, $A_{\tilde{f}}$ is not Hermitian. Nevertheless, Lemma \ref{lem:norm-imaginary} gives $\|A_{\tilde{f}}\| \leq \exp(\beta^2/(8\tau^2))\|A\|$, which is $\mathcal{O}(1)$ when $\tau = \Omega(\beta)$. Moreover, for any small measurement strength $u \leq 1/(2\|A_{\tilde{f}}\|)$, the operator $I + uA_{\tilde{f}}$ has spectrum bounded away from zero, and then the matrix square root in~\eqref{def:k23} 
is well-defined. The quantum protocol and implementation cost of the block encodings of $K_1$ and $K_2$ are deferred to \cref{app:implementation}. We summarize the construction of $K_1$ and $K_2$ and their implementations in the following lemma,  
which follows by combining Theorem~\ref{thm:Afblock} with Lemma~\ref{lem:qsvt-sqrt} (for $K_1$) and Theorem~\ref{thm:sqrt-be-lcu} (for $K_2$).

\begin{lemma}[Implementation of the jump branches]\label{lem:K1K2-implementation}
Let $U_A$ be an $(\alpha_A,b_A,\epsilon_A)$-block encoding of $A$, where $\|A\|\le 1$ and
$\alpha_A, b_A = \cO(1)$. Assume $\tau=\Omega(\beta)$, choose $0<c\le 1/4$, and choose $u = \cO(1)$ sufficiently small so that the conditions in Lemma \ref{lem:qsvt-sqrt} and Theorem \ref{thm:sqrt-be-lcu} hold. Then, for any sufficiently small target accuracy $\epsilon \in (0,1)$, if
\begin{equation} \label{assumpa}
    \epsilon_A = \cO(\epsilon^2/\log^2(1/\epsilon)), 
\end{equation}
then one can construct constant-normalization block encodings of $K_1$ and $K_2$ with error $\cO(\epsilon)$. The construction uses
\begin{align*}
&\cO(\log(1/\epsilon)) \text{ queries to } U_A,\\
&\cO\,\bigl(\tau\,\mathrm{polylog}(1/\epsilon)\bigr) \text{ controlled Hamiltonian simulation time, and}\\
&\cO \bigl(\mathrm{polylog}(\tau\|H\|/\epsilon)\bigr) \text{ ancilla qubits in total.}
\end{align*}
\end{lemma}

\begin{proof}
By Theorem~\ref{thm:Afblock}, when $\tau = \Omega(\beta)$, for any $g\in\{f,\tilde f\}$ and $\xi_g > 0$, one can construct an $(\alpha_g,b_g,\delta_g)$-block encoding of $A_g$ such that
\begin{equation} \label{auxeqlemm1}
    \alpha_g = \cO(1)\,, \q b_g = \cO\,\bigl(\log(\tau\|H\|)+\log\log(1/\xi_g)\bigr)\,,\q \delta_g = \cO(\epsilon_A + \xi_g)\,,
\end{equation}
using one query to $U_A$ and controlled Hamiltonian simulation time $T_{A_g} = \cO\,\bigl(\tau\sqrt{\log(1/\xi_g)}\bigr)$. 
By increasing the constant normalization if necessary, we take $\alpha_g=\Theta(1)$ and $\|A_g\|\le \alpha_g$ for both $g=f,\tilde f$.

We first consider $K_1$ and choose $\xi_f = \Theta(\epsilon^2/\log^2(1/\epsilon)) < 1$. Applying Lemma \ref{lem:qsvt-sqrt} to $B=A_f$, with target polynomial approximation error $\cO(\epsilon)$, there exists a block encoding of $K_1=c\sqrt{I+uA_f}$ with error bounded, using $\alpha_f = \Theta(1)$ and $d = \cO(\log(1/\epsilon))$, by
\begin{equation} \label{auxeqlemm2}
    4d\sqrt{\delta_f/\alpha_f}+\frac{\epsilon}{2} = \cO(\epsilon + \log(1/\epsilon) \sqrt{\delta_f}) = \cO(\epsilon + \log(1/\epsilon) \sqrt{\epsilon_A}), 
\end{equation}
where the last equality is by $\delta_f = \cO(\epsilon_A + \epsilon^2/\log^2(1/\epsilon))$ from \eqref{auxeqlemm1}. Under the assumption $\epsilon_A = \cO(\epsilon^2/\log^2(1/\epsilon))$, the error \eqref{auxeqlemm2} reduces to $\cO(\epsilon)$. In this construction, the query complexity is $\cO(d) = \cO(\log(1/\epsilon))$ uses of the block encoding of $A_f$, and hence $\cO(\log(1/\epsilon))$ queries to $U_A$. Then the total controlled Hamiltonian simulation time is
\begin{equation*}
    \cO( d\cdot T_{A_f}) = \cO\,\bigl(d\cdot \tau\sqrt{\log(1/\xi_f)}\bigr) = \cO\,\bigl(\tau\,\mathrm{polylog}(1/\epsilon)\bigr),
\end{equation*}
and the ancilla count is $b_f+ \cO(1)$ with $b_f$ in \eqref{auxeqlemm1}.

We next consider $K_2$ and choose $\xi_{\tilde f} = \Theta(\epsilon)$ in \eqref{auxeqlemm1}. 
Set $r_{\tilde f}:=|u|\alpha_{\tilde f}$. By the choice of $u$, $r_{\tilde f}\le 1/2$ and, for sufficiently small $\epsilon$, $|u|(\alpha_{\tilde f}+\delta_{\tilde f})<1$.
Applying Theorem \ref{thm:sqrt-be-lcu} to $B=A_{\tilde f}$ with Taylor series truncation error $\cO(\epsilon)$, it follows that there exists a $(\gamma_2,b',\cO(\epsilon))$-block encoding of
$K_2=c\sqrt{I+uA_{\tilde f}}$, where $\gamma_2=c(2-\sqrt{1-r_{\tilde f}})=\cO(1)$.
Indeed, the error is $\cO(\epsilon+\delta_{\tilde f})=\cO(\epsilon+\epsilon_A)=\cO(\epsilon)$ by \eqref{assumpa}. The number of queries to the block encoding of $A_{\tilde f}$ is
$\cO(\log(1/\epsilon)/\log(1/r_{\tilde f}))=\cO(\log(1/\epsilon))$, hence again $\cO(\log(1/\epsilon))$
queries to $U_A$. Similarly, the controlled Hamiltonian simulation time is $\cO\,\bigl(\tau\,\mathrm{polylog}(1/\epsilon)\bigr)$. 
Finally, Theorem \ref{thm:sqrt-be-lcu} also directly gives the ancilla count
\begin{equation*}
    \cO\left(\frac{b_{\tilde f}\log(1/\epsilon)}{\log(1/r_{\tilde f})}\right)
= \cO \bigl(\mathrm{polylog}(\tau\|H\|/\epsilon)\bigr).    
\end{equation*}
The proof is complete. 
\end{proof}

\medskip \noindent
\textbf{Extracting measurement information from $K_1$ and $K_2$.}
We now show that observing either classical outcome associated with $K_1$ or $K_2$ provides equivalent, unbiased information about $\mathrm{Tr}(\sigma_\beta A)$.
First, the probability $p_1$ of obtaining the outcome corresponding to the 
$K_1$-branch with respect to the Gibbs state is
\begin{align*}
    p_1 = \mathrm{Tr}(\sigma_\beta K_1^\dagger K_1) 
    = \mathrm{Tr}(\sigma_\beta K_1^2) 
    = c^2\,\mathrm{Tr}(\sigma_\beta(I + uA_f)) 
    = c^2(1 + u\,\mathrm{Tr}(\sigma_\beta A)),
\end{align*}
where we used $\mathrm{Tr}(\sigma_\beta A_f) = \mathrm{Tr}(\sigma_\beta A)$ 
from~\eqref{eq:equalobs}. This immediately yields an unbiased estimator for the thermal expectation $\mathrm{Tr}(\sigma_\beta A)$. Similarly, using $K_2 = \sigma_\beta^{1/2} K_1  \sigma_\beta^{-1/2}$, the probability $p_2$ corresponding to the $K_2$-branch is
\begin{align*}
    p_2 = \mathrm{Tr}(\sigma_\beta K_2^\dagger K_2) 
    = \mathrm{Tr}\!\left(\sigma_\beta\,
    (\sigma_\beta^{-1/2} K_1 \sigma_\beta^{1/2})
    (\sigma_\beta^{1/2} K_1 \sigma_\beta^{-1/2})\right) 
    = \mathrm{Tr}(\sigma_\beta K_1^2) = p_1.
\end{align*}
Thus, both branches yield the same outcome probability, with the signal term 
$u\,\mathrm{Tr}(\sigma_\beta A)$ scaling linearly with the measurement strength $u$. 

Assigning outcomes $z=1$ to branches $K_1$ and $K_2$, and $z=0$ to the rejection branch $K$, the estimator
\begin{equation} \label{eq:estimator}
    v:=\frac{z}{2c^2u}-\frac{1}{u}
\end{equation}
satisfies $\mathbb{E}[v]=\Tr(\sigma_\beta A)$ when the input state is $\sigma_\beta$ and the measurement channel is implemented exactly. Moreover, a direct computation gives 
\begin{equation*}
    \Var(v)=\Theta \bigl((cu)^{-2}\bigr).
\end{equation*}
In our construction, we choose $u=\Theta(1)$ as a sufficiently small absolute constant and
\[
c=\Theta(1/\log(\beta\|H\|))
\]
(see Section~\ref{sec:exact-channel}), so the variance is
\begin{equation} \label{eq:variancem}
    \Var(v)=\Theta\bigl(\log^2(\beta\|H\|)\bigr).
\end{equation}

\subsection{Construction of the Rejection Branch}\label{sec:exact-channel}

To complete the measurement channel, we append a rejection branch 
$\mathcal{R}[\rho] = K\rho K^\dagger$ to compensate for the non-trace-preserving 
property of $\mathcal{T}'$, noting that, once $u$ is fixed as a sufficiently
small absolute constant, i.e., $u \le 1/(2 \max\{\|A_f\|, \|A_{\tilde{f}}\|\})$, 
\begin{equation} \label{eq:boundtio}
    \mathcal{T}'^\dagger[I] = K_1^\dagger K_1 + K_2^\dagger K_2 \leq 3c^2 I.
\end{equation}
The construction of such a rejection branch, which maintains the KMS detailed balance condition and efficient quantum implementability, follows the discrete-time quantum Metropolis-Hastings sampler framework of~\cite{gilyen2026}. 

\begin{theorem} \label{thm:rejection} (Trace-Preserving Completion and Efficient Implementation \cite{gilyen2026}). Consider any completely positive map $\mathcal{T}'$ satisfying $\sigma_\beta$-KMS detailed balance such that $\mathcal{T}'^\dagger[I] \le I$. The completely positive map defined as
$$\mathcal{M} = \mathcal{T}' + \mathcal{R}, \quad \text{where } \mathcal{R}[\cdot] = K[\cdot]K^\dagger$$
satisfies $\sigma_\beta$-KMS detailed balance and is trace-preserving, given the rejection Kraus operator:
$$K := \sqrt{\sqrt{\sigma_\beta}(I - \mathcal{T}'^\dagger[I])\sqrt{\sigma_\beta}}\sigma_\beta^{-1/2}.$$

Moreover, suppose $\mathcal{T}'[\cdot] = K_1 \cdot K_1^\dag + K_2\cdot K_2^\dag$, where $K_1$ and $K_2$ are the jump-branch Kraus operators constructed in Section \ref{sec:DB-conditions} with appropriate parameters
$$c = \mathcal{O}(1/\log(\beta\|H\|)),\quad u = \cO(1),\quad \tau = \mathcal{O}(\beta +\mathrm{polylog}(\beta\|H\|/\varepsilon')).$$
Then, for any target accuracy $\varepsilon'\in(0,1/4]$, the operator $K$ admits a normalization-$2$, $\varepsilon'$-approximate block encoding using
\begin{align*}
&\cO\bigl(\operatorname{polylog}(\beta\|H\|/\varepsilon')\bigr)
\text{ queries to a block encoding of } O = \mc{T}^{\prime\dagger}[I],\\
&\cO \bigl(\beta\|H\|\operatorname{polylog}(\beta\|H\|/\varepsilon')\bigr)
\text{ queries to a block encoding of } H, \text{ and}\\
&\cO\bigl(\operatorname{polylog}(\beta\|H\|/\varepsilon')\bigr) \text{ ancilla qubits.}
\end{align*}
In addition, the block encoding of
$
O=K_1^\dagger K_1+K_2^\dagger K_2
$
can be implemented using
\begin{align*}
&\cO\bigl(\operatorname{polylog}(\beta\|H\|/\varepsilon')\bigr)
\text{ queries to a block encoding of } A, \text{ and}\\
&\cO\bigl(\beta\,\operatorname{polylog}(\beta\|H\|/\varepsilon')\bigr)
\text{ controlled Hamiltonian simulation time for } H.
\end{align*}
\end{theorem}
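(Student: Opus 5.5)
The proof splits into an algebraic part (trace preservation and KMS detailed balance of $\mathcal{M}=\mathcal{T}'+\mathcal{R}$) and an implementation part (the block encoding of $K$ and of $O=\mathcal{T}'^\dagger[I]$).

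\textbf{Algebraic part.} Write $O:=\mathcal{T}'^\dagger[I]$. Since $0\le O\le I$, the operator $\sqrt{\sigma_\beta}(I-O)\sqrt{\sigma_\beta}$ is positive semidefinite, so its square root $S$ is Hermitian and well defined, and $K=S\sigma_\beta^{-1/2}$.

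For trace preservation, compute
\[
K^\dagger K=\sigma_\beta^{-1/2}S^2\sigma_\beta^{-1/2}=I-O,
\]
so that $\mathcal{M}^\dagger[I]=O+(I-O)=I$.

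For detailed balance, $\mathcal{T}'$ is already KMS-symmetric by hypothesis, so it suffices to treat $\mathcal{R}$. Using $K^\dagger=\sigma_\beta^{-1/2}S$ and $K\sigma_\beta^{1/2}=S$,
\[
\langle X,\mathcal{R}^\dagger(Y)\rangle_{\sigma_\beta,1/2}
=\mathrm{Tr}\bigl(X^\dagger\sigma_\beta^{1/2}\sigma_\beta^{-1/2}SYS\bigr)
=\mathrm{Tr}(X^\dagger SYS).
\]
By cyclicity and Hermiticity of $S$, this expression is symmetric under $X\leftrightarrow Y$ in the KMS sense. Concretely, it equals $\overline{\mathrm{Tr}(Y^\dagger SXS)}=\langle \mathcal{R}^\dagger(X),Y\rangle_{\sigma_\beta,1/2}$. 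The general principle is that a Kraus operator $K$ with $\sigma_\beta^{1/2}K^\dagger=K\sigma_\beta^{1/2}$ is automatically KMS-symmetric. This is the case $K_1=K_2=K$ of Lemma~\ref{lem:DBKraus}, which holds here because $K\sigma_\beta^{1/2}=S$ is Hermitian. Invoking that case gives the claim directly.

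\textbf{Block encoding of $O$.} Here $O=K_1^\dagger K_1+K_2^\dagger K_2$, and $K_1^\dagger K_1=c^2(I+uA_f)$ is linear in $A_f$. I would not exploit a similar simplification for the second term, since $K_2^\dagger K_2=\sigma_\beta^{-1/2}K_1^2\sigma_\beta^{1/2}$ involves $\sigma_\beta$. Instead, take the block encodings of $K_1$ and $K_2$ from Lemma~\ref{lem:K1K2-implementation}. Form products $K_i^\dagger K_i$ by multiplying block encodings, and add them by LCU. This costs $\mathrm{polylog}$ queries to $U_A$ and $\beta\,\mathrm{polylog}$ controlled Hamiltonian simulation time, once we choose $\tau=\mathcal{O}(\beta+\mathrm{polylog})$ and $\epsilon=\Theta(\varepsilon')$.

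\textbf{Block encoding of $K$.} This is the main obstacle, because $\sigma_\beta^{\pm1/2}$ has no efficient block encoding. Following \cite{gilyen2026}, I would use the identity
\[
K=\sqrt{\sqrt{\sigma_\beta}(I-O)\sqrt{\sigma_\beta}}\,\sigma_\beta^{-1/2}=U\sqrt{I-O},
\]
where $U$ is the unitary factor in the polar decomposition of $\sqrt{I-O}\,\sigma_\beta^{1/2}$. Equivalently, $K$ is obtained from $\sqrt{I-O}$ by a "Gibbs-conjugated" correction that is accessible only when $O$ is nearly compatible with $\sigma_\beta$. The key structural input is that $O$ has bounded Bohr-frequency spread: $A_f$ and $A_{\tilde f}$ are Gaussian-filtered with width $\tau=\Omega(\beta)$, so their energy-difference components decay like $e^{-\omega^2\tau^2/2}$. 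Consequently $\sigma_\beta^{1/2}O\sigma_\beta^{-1/2}$ is well approximated by a finite Taylor or Fourier expansion in $\beta\,\mathrm{ad}_H$.

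The choice $c=\Theta(1/\log(\beta\|H\|))$ makes $\|O\|\le 3c^2$ small. This lets one expand $K$ as a convergent series in $O$ and its imaginary-time conjugates $e^{sH}Oe^{-sH}$ for $|s|\le\beta/2$. Each such conjugate is realized, via Lemma~\ref{lem:norm-imaginary}-type filter shifts, as an operator Fourier transform with Hamiltonian simulation time $\beta\|H\|\,\mathrm{polylog}$. The series converges geometrically since $c^2\log^2(\beta\|H\|)=\mathcal{O}(1)$. Truncating at $\mathrm{polylog}(1/\varepsilon')$ terms and assembling by LCU and QSVT gives a normalization-$2$, $\varepsilon'$-approximate encoding with the stated query and ancilla counts.

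I expect controlling the truncation error of this expansion uniformly in the system size to be the delicate step, and that is exactly where I would defer to the analysis of \cite{gilyen2026}.
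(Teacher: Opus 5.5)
Your algebraic part is correct, and more explicit than the paper, which simply cites \cite{gilyen2026}. $K^\dagger K=I-O$ gives trace preservation, and Hermiticity of $K\sigma_\beta^{1/2}=S$ is exactly the hypothesis of Lemma~\ref{lem:DBKraus} with $K_1=K_2=K$. Your block encoding of $O$ by products and LCU also follows the paper. One difference: $O$ is queried $\mathrm{polylog}(1/\varepsilon')$ times downstream, so the $K_i$ must be encoded to accuracy $\varepsilon'/\mathrm{polylog}(\beta\|H\|/\varepsilon')$, not $\Theta(\varepsilon')$. (Two harmless slips: $K_2^\dagger K_2=\sigma_\beta^{-1/2}K_1\sigma_\beta K_1\sigma_\beta^{-1/2}$, not $\sigma_\beta^{-1/2}K_1^2\sigma_\beta^{1/2}$; and your $U$ is the adjoint of the polar factor of $\sqrt{I-O}\,\sigma_\beta^{1/2}$.)

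The genuine gap is the block encoding of $K$. The result of \cite{gilyen2026} that both you and the paper rely on, Proposition~\ref{thm:implementR}, has two hypotheses:
\begin{itemize}
\item $\|O\|\le 10^{-3}/s^2$ with $s=\Theta(\log(\beta\|H\|))$, which your choice of $c$ secures;
\item $(\Lambda,\epsilon')$-quasi-locality in energy of $O$ itself, with $\Lambda=1/(10\lfloor\log_2(1/\epsilon)\rfloor)$ and $\epsilon'=\cO(\epsilon/(s^2\,\mathrm{polylog}(1/\epsilon)))$.
\end{itemize}
You never verify the second. Gaussian decay of the Bohr components of $A_f$ and $A_{\tilde f}$ does not transfer to $O$ for free. $O=c^2(I+uA_f)+c^2X^\dagger X$ with $X=\sqrt{I+uA_{\tilde f}}$ is nonlinear in $A_{\tilde f}$, and Bohr frequencies add under multiplication, so $A_{\tilde f}^j$ occupies $j$ times the bandwidth of $A_{\tilde f}$.

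The paper supplies exactly this step in Lemmas~\ref{lem:gaussian-bandlimit} and~\ref{lem:quasi-local-verify}:
\begin{itemize}
\item truncate the binomial series at degree $k=\Theta(\log(1/\epsilon'))$;
\item replace $A_f,A_{\tilde f}$ by smoothly band-limited versions with cutoff $\Lambda_0=\Lambda/(2k)$, at cost $\cO((1+\tau\Lambda_0)^3e^{-c_g\tau^2\Lambda_0^2})$;
\item propagate this error through $P_k$ by telescoping;
\item observe that $P_k(\cdot)^\dagger P_k(\cdot)$ then has Bohr support in $[-2k\Lambda_0,2k\Lambda_0]=[-\Lambda,\Lambda]$.
\end{itemize}
This forces $\tau\gtrsim (k/\Lambda)\sqrt{\log(k/(c^2\epsilon'))}$. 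That is the origin of the $\mathrm{polylog}(\beta\|H\|/\varepsilon')$ term in the stated $\tau$. Your argument only uses $\tau=\Omega(\beta)$ and cannot account for it.

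Your alternative mechanism does not close the gap either:
\begin{itemize}
\item No expansion of $K$ in the conjugates $e^{sH}Oe^{-sH}$ is written down or shown to converge.
\item Lemma~\ref{lem:norm-imaginary} controls linear filtered operators, not the nonlinear $O$; you would have to treat $\sigma_\beta^{s}K_2^\dagger K_2\sigma_\beta^{-s}$ factor by factor.
\item A Taylor expansion in $\beta\,\mathrm{ad}_H$ is uniformly controlled only for band-limited operators, which is quasi-locality again.
\item The analysis of \cite{gilyen2026} to which you defer the truncation error is formulated under the quasi-locality hypothesis, not under bounds on imaginary-time conjugates.
\end{itemize}
Deferring to \cite{gilyen2026} is legitimate only once quasi-locality of $O$ has been established. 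That verification is the substantive content of the paper's proof of this part.
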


The key insight behind the efficient implementation is that \cite{gilyen2026} shows the operator $K$ admits a series expansion in which each term can be written as an integral of short-time Hamiltonian evolutions. The scaling $c = \mathcal{O}(1/\log(\beta\|H\|))$ is chosen to make such series converge. When $O$ satisfies the required quasi-local-in-energy condition, this expansion enables an efficient quantum implementation. The proof of Theorem~\ref{thm:rejection} is deferred to Appendix~\ref{sec:implementR}; we refer the reader to \cite{gilyen2026} for full details.

\begin{remark}
The implementation cost in Theorem~\ref{thm:rejection} involves $\tilde{\mathcal{O}}(\beta \|H\|)$ queries to a block encoding of $H/\alpha$ with $\alpha = \Theta(\|H\|)$. 
This matches the standard cost of block-encoding- and QSVT-based Hamiltonian simulation \cite{Low_2019, Gily_n_2019} for simulating $H$ up to time $\tilde{\mathcal{O}}(\beta)$. We therefore absorb this cost into the Hamiltonian simulation cost reported in Theorem~\ref{thm:DB-channel}.
\end{remark}

Having established the block encodings of $K_1$, $K_2$, and $K$ individually, we discuss in Appendix~\ref{app:implementation-POVM} how to combine these components to obtain an $\epsilon'$-approximate implementation of the full measurement channel $\mathcal{M}$.

\subsection{The Single-Trajectory Protocol}\label{sec:single-traj-DB}

With the detailed-balance measurement channel $\mathcal{M}$ constructed in 
Sections~\ref{sec:DB-conditions} and~\ref{sec:exact-channel}, we now describe 
how to combine it with a Gibbs sampling channel $\mathcal{N}$ within the 
single-trajectory framework of \cite{jiang2026} to estimate $\mathrm{Tr}(\sigma_\beta A)$.

\medskip \noindent
\textbf{Protocol.} Given a Gibbs sampling channel $\mathcal{N}$ satisfying
$\sigma_\beta$-KMS detailed balance~\cite{chen2025efficientexactnoncommutativequantum, Ding_2025}, 
the measurement channel $\mathcal{M}$ from Theorem~\ref{thm:DB-channel}, an initial 
state $\rho$, and integers $T_{\mathrm{burn}}$ and $T$, the protocol proceeds as 
follows.
\begin{itemize}
    \item \textbf{Burn-in.} Apply $\mathcal{N}$ for $T_{\mathrm{burn}}$ steps 
    to $\rho$ to obtain a state close to $\sigma_\beta$.
    
    \item \textbf{Sampling.} For each step $t = 1, \dots, T$, apply the composite 
    sandwich channel $\mathcal{E} := \mathcal{M} \circ \mathcal{N} \circ \mathcal{M}$.
    
    \item \textbf{Outcome recording.} Let $a_t \in \{1, 2, 3\}$ denote the branch 
    label of the first (rightmost) application of $\mathcal{M}$ at step $t$, with 
    associated real-valued outcome $v_{a_t}$, where $v_1 = v_2 = \frac{1}{2c^2 u}$ 
    and $v_3 = 0$.
    
    \item \textbf{Estimation.} Output $X_T - \frac{1}{u}$, where $X_T := 
    \frac{1}{T}\sum_{t=1}^T v_{a_t}$, as the estimator of $\mathrm{Tr}(\sigma_\beta A)$.
\end{itemize}

\medskip \noindent
\textbf{Sample complexity via the autocorrelation time.}
Since both $\mathcal{N}$ and $\mathcal{M}$ satisfy exact KMS detailed balance 
with respect to $\sigma_\beta$, the composite channel $\mathcal{E}$ also admits 
$\sigma_\beta$ as its invariant state. A self-contained treatment of the measurement 
outcome statistics and the relevant operator-theoretic quantities is given in 
Appendix~\ref{app:measurement-stats}; here we summarize the key definitions and results.

Let $\{K_a\}_{a=1,2,3}$ be the Kraus operators of $\mathcal{M}$ (with $K_3 = K$ the rejection branch), and let $v_a \in \mathbb{R}$ denote the real-valued outcome associated with branch $a$. Note that the statistics of $\mathcal{M}$ depend on the choice of outcome function $v$; here we use $v_1 = v_2 = \frac{1}{2c^2u}$ and $v_3 = 0$ as defined in the protocol. Since the post-processing channels $\mathcal M$ and $\mathcal N$ are trace-preserving, $\Tr(\mathcal E_a(\sigma_\beta))=\Tr(K_a\sigma_\beta K_a^\dagger)$. Thus the mean and variance of the recorded outcome in the stationary state are:
$$\bar{v}=\mathbb{E}_{\sigma_\beta}(\mathcal{M}) := \sum_a v_a\, \mathrm{Tr}(K_a \sigma_\beta K_a^\dagger), \qquad \mathbb{V}\mathrm{ar}_{\sigma_\beta}(\mathcal{M}) := \sum_a (v_a - \bar{v})^2\, \mathrm{Tr}(K_a \sigma_\beta K_a^\dagger).$$
The statistical performance of the empirical average $X_T$ is governed by the integrated autocorrelation time of $\mathcal{E}$:

\begin{align}
t_{\mathrm{aut}, T} := \frac{1}{2} + \sum_{t=1}^{T} \left(1 - \frac{t}{T}\right) \frac{\mathbb{C}\mathrm{orr}_{\sigma_\beta}(\mathcal{E}^{t-1})}{\mathbb{V}\mathrm{ar}_{\sigma_\beta}(\mathcal{M})}, \label{eq:aut}
\end{align}
where the autocorrelation function is $\mathbb{C}\mathrm{orr}_{\sigma_\beta}(\mathcal{E}^t) := \mathrm{Tr}(\widehat{\mathcal{E}}_v \circ \mathcal{E}^t \circ \widehat{\mathcal{E}}_v(\sigma_\beta))$ and $\widehat{\mathcal{E}}_v = \sum_a (v_a - \bar{v})\mathcal{E}_a$. 
Here, $\widehat{\mathcal{M}}(X) := \sum_a (v_a - \bar{v}) K_a X K_a^\dagger$ is the centered measurement map. By \cite{jiang2026} (see Lemma \ref{lem:sample-complexity} for the precise statement), set $T_{\mathrm{burn}}:=t_{\rm mix}(\frac{1}{3}\eta)$. After the burn-in stage, the current state $\rho_{\mathrm{burn}}$ satisfies $\|\rho_{\mathrm{burn}} -\sigma_\beta \|_1\le \frac{1}{3}\eta$. Then in the sampling stage, it suffices to set
$$T = \mathcal{O}\!\left(\frac{\mathbb{V}\mathrm{ar}_{\sigma_\beta}(\mathcal{M})}{\epsilon^2 \eta}\, t_{\mathrm{aut}, T}\right)$$
to estimate $\mathrm{Tr}(\sigma_\beta A)$ to precision $\epsilon$ with failure probability $2\eta/3$ for the exact implementation. Since $\mathbb{V}\mathrm{ar}_{\sigma_\beta}(\mathcal{M}) = \Theta(\log^2(\beta\|H\|))$ by \eqref{eq:variancem}, the total number of required steps is:
\begin{align} \label{eq:T} T = \mathcal{O}\!\left(\frac{\log^2(\beta\|H\|)}{\epsilon^2 \eta}\, t_{\mathrm{aut},T}\right).\end{align}

\medskip \noindent
\textbf{Bounding the autocorrelation time via the spectral gap.}
When the Gibbs sampler $\mathcal{N}$ has spectrum in $[0,1]$ and spectral gap $\lambda > 0$, the autocorrelation time can be bounded by $\mathcal{O}(1/\lambda)$. More specifically, by \cite{jiang2026} (see Proposition~\ref{prop:spectral-gap-bound} for the precise statement), if both $\mathcal{N}$ and $\mathcal{M}$ satisfy KMS detailed balance and the centered measurement map $\widehat{\mathcal{M}}$ also satisfies KMS detailed balance, then:
\begin{equation} \label{eq:boundaut}
    t_{\mathrm{aut}, T} \le \frac{\theta}{\lambda} + \frac{1}{2}, \quad \text{where} \quad \theta := \frac{\sum_{a,b}(v_a - \bar{v})(v_b - \bar{v})\mathrm{Tr}(K_b K_a \sigma_\beta K_a^\dagger K_b^\dagger)}{\mathbb{V}\mathrm{ar}_{\sigma_\beta}(\mathcal{M})}.
\end{equation}
We now verify that these conditions hold in our construction and that $\theta = \mathcal{O}(1/\log^2(\beta\|H\|))$.

\medskip \noindent
\textbf{Detailed balance of $\widehat{\mathcal{M}}$.} Recall that $\widehat{\mathcal{M}}(X) = \sum_a (v_a - \bar{v}) K_a X K_a^\dagger$. Since $v_1 = v_2 = \frac{1}{2c^2u}$ and $v_3 = 0$, the centered coefficients are $v_1 - \bar{v} = v_2 - \bar{v} = \frac{1}{2c^2u} - \bar{v}$ and $v_3 - \bar{v} = -\bar{v}$. Thus:
$$\widehat{\mathcal{M}}(X) = \left(\tfrac{1}{2c^2u}-\bar{v}\right)\left(K_1 X K_1^\dagger + K_2 X K_2^\dagger\right) - \bar{v}\, K X K^\dagger = \left(\tfrac{1}{2c^2u}-\bar{v}\right)\mathcal{T}'[X] - \bar{v}\,\mathcal{R}[X].$$
Both $\mathcal{T}'$ and $\mathcal{R}$ individually satisfy KMS detailed balance: $\mathcal{T}'$ by Lemma~\ref{lem:DBKraus}, and $\mathcal{R}$ by Theorem~\ref{thm:rejection}. Since $\widehat{\mathcal{M}}$ is a linear combination of two maps satisfying KMS detailed balance, $\widehat{\mathcal{M}}$ also satisfies KMS detailed balance with respect to $\sigma_\beta$.

\medskip \noindent
\textbf{Bounding $\theta$.} The centered outcomes are $v_1 - \bar{v} = v_2 - \bar{v} = \frac{1}{2c^2u} - \bar{v} = \mathcal{O}(1/(c^2u))$ and $v_3 - \bar{v} = -\bar{v} = \mathcal{O}(1/u)$. Using $\mathrm{Tr}(K_b K_a \sigma_\beta K_a^\dagger K_b^\dagger) \le \|K_b\|^2\|K_a\|^2$ and $\|K_1\|,\|K_2\| = \mathcal{O}(c)$, $\|K_3\| \le 1$, the numerator in \eqref{eq:boundaut} is bounded by $\mathcal{O}(1)$. Combining this with \eqref{eq:variancem} gives $\theta = \mathcal{O}(1/\log^2(\beta\|H\|))$. Thus we have
$$
t_{\mathrm{aut},T} \le \frac{\mathcal{O}(1)}{\lambda \log^2(\beta\|H\|) } + \frac{1}{2}.
$$
Combining this with \eqref{eq:T}, the total number of required steps is $T = \mathcal{O}\left(\frac{1}{\epsilon^2 \eta}(\log^2(\beta \|H\|) + \lambda^{-1}) \right)$.

\medskip \noindent
\textbf{Stability analysis for the approximate implementation.}
In practice, the sandwich channel $\mathcal{E} = \mathcal{M} \circ \mathcal{N} 
\circ \mathcal{M}$ is only implemented approximately. We model this within the 
quantum instrument framework of Appendix~\ref{app:measurement-stats}. Each step 
of the ideal protocol defines a quantum instrument $\{\mathcal{E}_a\}_{a=1,2,3}$, 
where each branch $\mathcal{E}_a(\rho) = (\mathcal{M} \circ \mathcal{N})(K_a \rho K_a^\dagger)$ 
is indexed by the Kraus operator $K_a$ and carries real-valued outcome $v_a$. A 
trajectory of branch labels $\vec{a} = (a_1, \dots, a_T) \in \{1,2,3\}^T$ 
determines the sequence of outcomes $(v_{a_1}, \dots, v_{a_T})$, and the ideal 
trajectory distribution is induced by recursively applying $\{\mathcal{E}_a\}$ 
to the post-burn-in state $\rho_{\mathrm{burn}}$:
\begin{align}
    \mathbb{P}(\vec{a}) = \mathrm{Tr}\!\left(
    \mathcal{E}_{a_T} \circ \cdots \circ \mathcal{E}_{a_1}(\rho_{\mathrm{burn}})
    \right).
\end{align}
In the approximate protocol, we instead apply an $\epsilon'$-approximate instrument 
$\{\widetilde{\mathcal{E}}_a\}$ at each step (in the sense of 
Definition~\ref{def:approx-instrument}), inducing the trajectory distribution
\begin{align}
    \widetilde{\mathbb{P}}(\vec{a}) = \mathrm{Tr}\!\left(
    \widetilde{\mathcal{E}}_{a_T} \circ \cdots \circ 
    \widetilde{\mathcal{E}}_{a_1}(\rho_{\mathrm{burn}})
    \right).
\end{align}
By Lemma~\ref{lem:error-accum}, the total variation distance between the two 
trajectory distributions satisfies
\begin{align}
    \|\mathbb{P} - \widetilde{\mathbb{P}}\|_{\mathrm{TV}} \leq \tfrac{1}{2} T\epsilon'.
\end{align}
We now translate this bound into a failure probability guarantee. Define the failure 
event $S := \{\vec{a} : |X_T - \frac{1}{u} - \mathrm{Tr}(\sigma_\beta A)| \geq \epsilon\}$, 
where $X_T = \frac{1}{T}\sum_{t=1}^T v_{a_t}$ is the empirical average. By the 
definition of total variation distance and Lemma~\ref{lem:sample-complexity}, which 
gives $\mathbb{P}(S) \leq \frac{2}{3}\eta$ for the exact protocol, we obtain
\begin{align}
    \widetilde{\mathbb{P}}(S) \leq \mathbb{P}(S) + \|\mathbb{P} - 
    \widetilde{\mathbb{P}}\|_{\mathrm{TV}} \leq \frac{2}{3}\eta + \frac{1}{2}T\epsilon'.
\end{align}
Setting $\epsilon' = \mathcal{O}(\eta/T)$ makes the second term at most $\eta/3$, 
giving total failure probability $\eta$. Substituting 
$T = \mathcal{O}\!\left(\frac{\log^2(\beta\|H\|)}{\epsilon^2\eta}\,t_{\mathrm{aut},T}\right)$, the required per-step accuracy is
\begin{align}
    \epsilon' = \mathcal{O}\!\left(
    \frac{\epsilon^2\eta^2}{\log^2(\beta\|H\|) \cdot t_{\mathrm{aut},T}}
    \right).
\end{align}
Since the implementation costs of $\mathcal{M}$ (Lemma~\ref{lem:K1K2-implementation}, Theorem~\ref{thm:rejection}, and Theorem~\ref{thm:approx-channel-be}) 
and $\mathcal{N}$~\cite{chen2025efficientexactnoncommutativequantum, Ding_2025} 
scale logarithmically in $\epsilon'$, the approximation overhead is only a 
polylogarithmic additive factor in the overall complexity.

\section{Measurement-Remixing via Warm-Start in \texorpdfstring{$\chi^2$}{chi-squared}-Divergence} \label{sec:warm-start}

Our second construction is simpler than the detailed-balance approach of
Section~\ref{sec:DBmeasure}, at the cost of requiring a positive spectral gap
$\lambda > 0$. We design a two-outcome positive operator-valued measure (POVM) channel based entirely on the real-time smoothed observable $A_f$, whose outcomes yield an unbiased estimator
of $\mathrm{Tr}(\sigma_\beta A)$ when the input state is $\sigma_\beta$. Since the filter $f$ is real-valued, the
construction requires neither an imaginary-time shift nor a rejection branch, 
making it considerably simpler to implement than the exact detailed-balance channel.

The key analytical property is that, despite not satisfying detailed balance, the 
post-selected state maintains a bounded $\chi^2$-divergence from $\sigma_\beta$. 
Under a spectral gap $\lambda > 0$, this warm-start guarantee ensures that the 
post-selected state returns to within $\chi^2$-divergence (hence also trace-norm) $\epsilon$-proximity of $\sigma_\beta$ 
in time $\mathcal{O}(\lambda^{-1}\log(1/\epsilon))$, rather than the full mixing 
time. Crucially, the per-sample remixing cost is entirely independent of 
$\sigma_{\beta,\mathrm{min}}$: the factor $\log\,(\sigma_{\beta,\mathrm{min}}^{-1})$ 
that inflates the global mixing time appears only in the one-time burn-in cost 
and not in the per-sample cost.

We assume $\|A\| \le 1$ without loss of generality. The main result is as follows.

\begin{theorem}[Measurement-Remixing Complexity]\label{thm:complexity1}
For any Hermitian observable $A$ with $\|A\| \le 1$, consider the real Gaussian filter $f(t) = \frac{1}{\sqrt{2\pi\tau^2}}\exp\!\left(-\frac{t^2}{2\tau^2}\right)$. One can choose $\tau = \Theta(\beta)$ and a sufficiently small constant measurement strength $u = \Theta(1)$ such that the measurement channel
$$\mathcal{M}[\rho] = K_+ \rho K_+^\dagger + K_- \rho K_-^\dagger, \quad K_\pm = \sqrt{\frac{I \pm u A_f}{2}},$$
satisfies the following. Suppose \(\mc{N}\) is a primitive quantum Gibbs sampling channel with unique fixed
point \(\sigma_\beta\), satisfying KMS detailed balance, with spectrum contained
in \([0,1]\), and spectral gap \(\lambda>0\). Then:
\begin{itemize}
    \item \textbf{Warm Start and Remixing Cost:} Let $\rho_\pm'$ denote the post-selected state after applying $\mathcal{M}$ to $\rho$ and recording outcome $\pm$, i.e.,
    $\rho_\pm' = K_\pm \rho K_\pm^\dagger / \mathrm{Tr}(\rho K_\pm^\dagger K_\pm)$. If $\chi^2(\rho, \sigma_\beta) = \mathcal{O}(1)$, then $\chi^2(\rho_\pm', \sigma_\beta) = \mathcal{O}(1)$. Moreover, $k_0 = \mathcal{O}\!\left(\frac{1}{\lambda}\log\frac{1}{\epsilon}\right)$ steps of $\mathcal{N}$ suffice to bring the post-selected state back to $\chi^2$-divergence $\epsilon$-proximity of the Gibbs state:
    $\chi^2(\mathcal{N}^{k_0}(\rho_\pm'), \sigma_\beta)\le \epsilon$.
    \item \textbf{Efficient Implementation:} An $\epsilon'$-approximate implementation of $\mathcal{M}$ can be realized with
\begin{align*}
&\mathcal{O}(\mathrm{polylog}(1/\epsilon')) \text{ queries to a block encoding of } A,\\
&\mathcal{O}(\beta\mathrm{polylog}(1/\epsilon')) \text{ Hamiltonian simulation time for } H, \text{ and}\\
&\mathcal{O}(\log(\beta \|H\|) + \log\log(1/\epsilon')) \text{ ancilla qubits.}
\end{align*}
    \item \textbf{Sample Complexity:} To estimate $\mathrm{Tr}(\sigma_\beta A)$ to accuracy $\epsilon$ with failure probability $\eta$, starting from a burn-in state $\rho_{\mathrm{burn}}$ satisfying $\|\rho_{\mathrm{burn}} - \sigma_\beta\|_1 \le \eta/3$, the total number of required measurement-remixing steps is:
    $$T = \mathcal{O}\!\left(\frac{\log(1/\eta)}{\epsilon^2}\right),$$
    provided each combined measurement-remixing step (applying $\mathcal{M}$ followed by $\mathcal{N}^{k_0}$) is implemented to per-step accuracy
    $$\epsilon' = \mathcal{O}\!\left(\frac{\epsilon^2\eta}{\log(1/\eta)}\right)$$
    in the sense of Definition~\ref{def:approx-instrument}.
\end{itemize}
\end{theorem}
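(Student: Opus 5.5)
The plan is to treat the three items separately; the warm-start bound is the analytic core.

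\textbf{Warm start.} Write $\rho_\pm' = K_\pm\rho K_\pm/p_\pm$ with $p_\pm = \mathrm{Tr}(\rho K_\pm^2) = \tfrac12(1\pm u\,\mathrm{Tr}(\rho A_f))$. Since $\|A_f\|\le 1$ and $u<1$ is a fixed constant, $p_\pm\ge (1-u)/2 = \Theta(1)$. It therefore suffices to bound $\|K_\pm \rho K_\pm\|_{\sigma_\beta,-1/2}$ by a constant times $\|\rho\|_{\sigma_\beta,-1/2}$. Note that $\|\rho\|_{\sigma_\beta,-1/2}^2 = 1+\chi^2(\rho,\sigma_\beta)$, because the cross terms equal $\mathrm{Tr}(\rho)=1$. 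Write $\|X\|_{\sigma_\beta,-1/2} = \|\sigma_\beta^{-1/4}X\sigma_\beta^{-1/4}\|_2$ and insert factors of $\sigma_\beta^{\pm1/4}$:
\[
\sigma_\beta^{-1/4}K_\pm\rho K_\pm\sigma_\beta^{-1/4} = \bigl(\sigma_\beta^{-1/4}K_\pm\sigma_\beta^{1/4}\bigr)\bigl(\sigma_\beta^{-1/4}\rho\,\sigma_\beta^{-1/4}\bigr)\bigl(\sigma_\beta^{1/4}K_\pm\sigma_\beta^{-1/4}\bigr).
\]
The Hilbert--Schmidt norm is then at most $\|\sigma_\beta^{-1/4}K_\pm\sigma_\beta^{1/4}\|\,\|\sigma_\beta^{1/4}K_\pm\sigma_\beta^{-1/4}\|\,\|\rho\|_{\sigma_\beta,-1/2}$. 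Conjugation by $e^{\mp\beta H/4}$ is an algebra automorphism, so it commutes with the holomorphic square root defined by the convergent Taylor series. This gives $\sigma_\beta^{\mp1/4}K_\pm\sigma_\beta^{\pm1/4} = \sqrt{(I\pm uA_{g})/2}$, where $A_g=e^{sH}A_fe^{-sH}$ with $s=\pm\beta/4$.

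By Lemma~\ref{lem:norm-imaginary} for the Gaussian, $\|A_g\|\le e^{\beta^2/(32\tau^2)}$, which is $\mathcal{O}(1)$ for $\tau=\Theta(\beta)$. Choose $u\le 1/(2e^{\beta^2/(32\tau^2)})$. Then the binomial series for $\sqrt{1\pm x}$ gives $\|\sqrt{(I\pm uA_g)/2}\|\le \sqrt{1/2}\,(2-\sqrt{1-u\|A_g\|}) = \mathcal{O}(1)$. Combining, $1+\chi^2(\rho_\pm',\sigma_\beta) \le C(1+\chi^2(\rho,\sigma_\beta))/p_\pm^2 = \mathcal{O}(1)$.

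The remixing cost follows from the spectral-gap contraction $\chi^2(\mathcal{N}^k\rho',\sigma_\beta)\le(1-\lambda)^{2k}\chi^2(\rho',\sigma_\beta)$ of Section~\ref{sec:def-chisquare}. Taking $k_0=\mathcal{O}(\lambda^{-1}\log(1/\epsilon))$ gives $\chi^2\le\epsilon$. The only delicate point is confirming that the square root commutes with imaginary-time conjugation and that the needed $\sigma_\beta^{\pm1/4}$ shifts stay within the analytic strip. The Gaussian is entire, so this is fine.

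\textbf{Implementation.} Apply Theorem~\ref{thm:Afblock} to get a block encoding of the Hermitian $A_f$ with $\tau=\Theta(\beta)$. Use QSVT (Lemma~\ref{lem:qsvt-sqrt}) to block-encode $\sqrt{(I\pm uA_f)/2}$ with polynomial degree $\mathcal{O}(\log(1/\epsilon'))$, since the spectrum stays bounded away from $0$. Then assemble the two-outcome instrument via Theorem~\ref{thm:approx-channel-be}. The costs are those of Lemma~\ref{lem:K1K2-implementation} restricted to the real filter, which yields the stated query, simulation-time and ancilla counts.

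\textbf{Sample complexity.} Assign $z=\pm1$ and use the estimator $z/u$. Given input $\sigma$, $\mathbb{E}[z/u]=\mathrm{Tr}(\sigma A_f)$, and $|z/u|\le 1/u=\mathcal{O}(1)$. Run by induction over the steps. If the pre-measurement state $\rho_t$ satisfies $\chi^2(\rho_t,\sigma_\beta)\le\epsilon_0$, then the warm-start step plus $k_0$ remixing steps returns $\chi^2\le\epsilon_0$, conditionally on every outcome history. So $|\mathbb{E}[z_t/u\mid\mathcal{F}_{t-1}]-\mathrm{Tr}(\sigma_\beta A)|\le u^{-1}\|\rho_t-\sigma_\beta\|_1\le u^{-1}\sqrt{\epsilon_0}$. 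Choose $\epsilon_0=\Theta(\epsilon^2)$, which only adds $\log(1/\epsilon)$ to $k_0$. The centered increments then form a bounded martingale difference sequence plus a bias of at most $\epsilon/2$, and Azuma--Hoeffding gives $T=\mathcal{O}(\log(1/\eta)/\epsilon^2)$ with failure probability $\eta/3$.

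One point remains: the burn-in state is only $\eta/3$-close in trace norm, not warm in $\chi^2$. Handle this by a coupling/TV argument. Replace $\rho_{\mathrm{burn}}$ by $\sigma_\beta$ at a cost of $\eta/3$ in total variation, since trajectory distributions are contractive. Implementation errors cost $T\epsilon'/2\le\eta/3$ by Lemma~\ref{lem:error-accum}, which fixes $\epsilon'=\mathcal{O}(\eta/T)=\mathcal{O}(\epsilon^2\eta/\log(1/\eta))$. The main obstacle is the warm-start norm bound, specifically the commutation of the matrix square root with imaginary-time conjugation for the non-Hermitian shifted operators.
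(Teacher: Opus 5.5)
Your proposal is correct and follows essentially the paper's proof. The warm start uses the $\sigma_\beta^{\pm1/4}$-weighted Hilbert--Schmidt bound with the $\beta/4$ imaginary-time shift from Lemma~\ref{lem:norm-imaginary}, the binomial series and $p_\pm\ge(1-u)/2$, and remixing uses spectral-gap contraction. Implementation goes through Theorem~\ref{thm:Afblock}, Lemma~\ref{lem:qsvt-sqrt} and Theorem~\ref{thm:approx-channel-be}, and the sample complexity uses Azuma--Hoeffding together with Lemma~\ref{lem:error-accum} for the burn-in and implementation errors.

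The differences are cosmetic. You conjugate the whole operator $K_\pm$ by $\sigma_\beta^{\mp1/4}$, whereas the paper bounds left and right multiplication by $A_f$ separately in $\|\cdot\|_{\sigma_\beta,-1/2}$ and sums the double series; both give the same estimate. The extra factor $u^{-1}$ in your bias bound is merely loose, since $|\mathrm{Tr}((\rho_t-\sigma_\beta)A_f)|\le\|\rho_t-\sigma_\beta\|_1$.
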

Compared to the detailed-balance approach of Section~\ref{sec:DBmeasure}, this construction is more implementation-friendly:
\begin{itemize}
    \item \textbf{Fewer ancilla qubits.} On top of the ancillas needed for a block encoding of $A_f$, implementing $K_\pm$ via QSVT requires only a constant number of additional ancilla qubits, since the target function $\sqrt{(1\pm ux)/2}$ admits a real polynomial approximation (see \cite{Gily_n_2019} and Section~\ref{app:QSVT}). By contrast, Section~\ref{sec:DBmeasure} requires $\mathrm{polylog}(\beta \|H\|/\epsilon)$ additional ancillas for the general (non-Hermitian) eigenvalue transformation implementing $K_2$, plus further ancillas for the rejection operator $\mathcal{R}$ (Appendix~\ref{sec:implementR}).
    \item \textbf{No rejection branch.} The rejection branch $K$ in Section~\ref{sec:DBmeasure} is costly in two ways: it contributes zero signal but increases the estimator variance by a $\mathrm{polylog}(\beta\|H\|)$ factor, and its implementation via $\mathcal{R}$ requires an additional $\mathrm{polylog}(\beta\|H\|)$ overhead. Neither cost is present here.
    \item \textbf{Better failure probability dependence.} Because the measurement outcomes are bounded, martingale concentration yields sample complexity $T = \mathcal{O}(\epsilon^{-2}\log(1/\eta))$, with only logarithmic dependence on the failure probability $\eta$. By contrast, the detailed-balance approach relies on Chebyshev's inequality, giving a $1/\eta$ dependence. We note that a Markov chain concentration inequality~\cite{Girotti_2023} could potentially improve the $\eta^{-1}$ dependence in Theorem~\ref{thm:DB-channel} to $\log \eta^{-1}$. 
\end{itemize}

In what follows, we detail the construction and analysis, building toward the full proof of Theorem~\ref{thm:complexity1}. Section~\ref{sec:POVM-scheme} details the explicit construction of the measurement channel $\mathcal{M}$. Section~\ref{sec:warm-start-proof} establishes the $\chi^2$-warm-start guarantee (Theorem~\ref{thm:warmstart}). Section~\ref{sec:remixing-complexity} analyzes the sample complexity for estimating $\mathrm{Tr}(\sigma_\beta A)$ from the measurement outcomes.

\subsection{Construction of the Measurement Channel}\label{sec:POVM-scheme}
We use the same real Gaussian filter $f(t) = \frac{1}{\sqrt{2\pi\tau^2}}\exp(-t^2/2\tau^2)$ 
as in Section~\ref{sec:DBmeasure}, so that $A_f$ is Hermitian with $\|A_f\| \leq 1$ 
and $\mathrm{Tr}(\sigma_\beta A_f) = \mathrm{Tr}(\sigma_\beta A)$. For a fixed 
measurement strength $u \in (0, 1)$, the measurement channel is defined by the 
two Kraus operators
\begin{align}
    K_\pm = \sqrt{\tfrac{1}{2}(I \pm uA_f)}.
\end{align}
Since $u\|A_f\| \leq u < 1$, both $\frac{1}{2}(I \pm uA_f)$ are positive 
semidefinite and satisfy $K_+^\dagger K_+ + K_-^\dagger K_- = I$, so 
$\mathcal{M}(\rho) = K_+\rho K_+^\dagger + K_-\rho K_-^\dagger$ is a valid 
trace-preserving channel. The outcome probabilities when the system is in state 
$\rho$ are
\begin{align}
    p_\pm(\rho) = \mathrm{Tr}\!\left(\rho K_\pm^\dagger K_\pm\right) 
    = \tfrac{1}{2}\bigl(1 \pm u\,\mathrm{Tr}(\rho A_f)\bigr),
\end{align}
so $p_+(\rho) - p_-(\rho) = u\,\mathrm{Tr}(\rho A_f)$. Assigning values 
$z = \pm 1$ to outcomes $\pm$, respectively, the estimator 
\begin{equation*}
    v := z/u    
\end{equation*}
satisfies 
$\mathbb{E}_\rho[v] = \mathrm{Tr}(\rho A_f)$. In particular, when $\rho=\sigma_\beta$,
we have $\mathbb{E}_{\sigma_\beta}[v] = \mathrm{Tr}(\sigma_\beta A_f)
= \mathrm{Tr}(\sigma_\beta A)$. The variance is bounded by
$u^{-2} = \mathcal{O}(1)$ for $u = \Theta(1)$. The block-encoding 
implementations of $K_\pm$ are discussed in 
Sections~\ref{app:implementation-Af} and~\ref{app:implementation-sqrtAf}, 
and the approximate implementation of the full POVM is given in 
Section~\ref{app:implementation-POVM}.

\subsection{The Warm-Start Property}\label{sec:warm-start-proof}
Upon observing outcome $\pm$, the system collapses to the post-selected state
$$\rho_\pm' = \frac{K_\pm \rho K_\pm^\dagger}{p_\pm(\rho)},$$
which is distinct from the post-measurement state $\mathcal{M}[\rho]$ averaged over outcomes.  We show that for $\tau = \Omega(\beta)$, the post-selected state $\rho_\pm'$ 
inherits the warm-start property from $\rho$: if $\chi^2(\rho, \sigma_\beta) = 
\mathcal{O}(1)$, then $\chi^2(\rho_\pm', \sigma_\beta) = \mathcal{O}(1)$ as well.

\begin{theorem}\label{thm:warmstart}
($\chi^2$-warm-start property). Assume $\|A\|\le 1$. Let $\chi^2(\rho, \sigma_\beta) \le C$ for a given $C > 0$ and define $c_f := \exp\!\left(\beta^2/(32\tau^2)\right)$. For $\tau = \Omega(\beta)$ and the measurement strength $u \in (0, 1/c_f)$, the post-selected state $\rho_{\pm}'$ satisfies:
$$\chi^2(\rho_{\pm}',\sigma_\beta) \le \frac{1+C}{(1-u)^2(1-c_f u)^2}.$$
\end{theorem}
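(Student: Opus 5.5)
The plan is to bound the weighted norm $\chi^2(\rho',\sigma_\beta)+1=\mathrm{Tr}(\sigma_\beta^{-1/2}\rho'\sigma_\beta^{-1/2}\rho')$ directly. This identity holds because $\mathrm{Tr}\rho'=1$, so the cross terms reduce to $-2+1$. The argument has three steps: a lower bound on the denominator $p_\pm(\rho)$, a conjugation identity moving $K_\pm$ past $\sigma_\beta^{\mp1/4}$, and a bound on the numerator.

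\textbf{Denominator.} Since $\|A_f\|\le 1$, we have $p_\pm(\rho)=\tfrac12(1\pm u\,\mathrm{Tr}(\rho A_f))\ge \tfrac12(1-u)$. This gives one factor $(1-u)^{-2}$ after squaring, since $\rho'$ appears twice.

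\textbf{Conjugation identity.} Write $K=K_\pm$ and $\Gamma(X)=\sigma_\beta^{-1/4}X\sigma_\beta^{-1/4}$. Then
\[
\mathrm{Tr}(\sigma_\beta^{-1/2}K\rho K\sigma_\beta^{-1/2}K\rho K)=\|\Gamma(K\rho K)\|_2^2 .
\]
Define $L:=\sigma_\beta^{-1/4}K\sigma_\beta^{1/4}$, so that $\Gamma(K\rho K)=L\,\Gamma(\rho)\,L^\dagger$. Apply Lemma~\ref{lem:norm-imaginary} with $s=-\beta/4$ to each term of the power series of $\sqrt{(I\pm uA_f)/2}$. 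This gives $L=\sqrt{(I\pm uA_{g})/2}$ with $g(t)=f(t-i\beta/4)$ (up to the sign convention of the shift). The lemma also gives $\|A_g\|\le e^{\beta^2/(32\tau^2)}=c_f$, which is exactly the constant in the statement.

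\textbf{Numerator.} Bound
\[
\|L\Gamma(\rho)L^\dagger\|_2\le\|L\|^2\|\Gamma(\rho)\|_2 .
\]
Here $\|\Gamma(\rho)\|_2^2=1+\chi^2(\rho,\sigma_\beta)\le 1+C$. It remains to bound $\|L\|^2$. Using the binomial series $\sqrt{1+x}=\sum_k\binom{1/2}{k}x^k$ with $x=\pm uA_g$, $\|x\|\le c_fu<1$, absolute convergence gives
\[
\|L\|^2\le \tfrac12\sum_k\Bigl|\tbinom{1/2}{k}\Bigr|^2\cdots
\]
A cleaner route is $\|L\|\le \tfrac1{\sqrt2}\bigl(2-\sqrt{1-c_fu}\bigr)$, from $\sum_k|\binom{1/2}{k}|r^k=2-\sqrt{1-r}$. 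To reach the stated form, I would instead aim for $\|L\|^2\le \tfrac{1}{1-c_fu}$, possibly up to a factor $2$ cancelled against $p_\pm\ge(1-u)/2$. Combining the three bounds,
\[
1+\chi^2(\rho',\sigma_\beta)\le\frac{\|L\|^4(1+C)}{p_\pm^2}\le\frac{1+C}{(1-u)^2(1-c_fu)^2},
\]
after matching the factors of $2$: the factor $1/4$ in $\|L\|^4$ cancels the $4$ from $p_\pm^{-2}$. This yields the claim, since $\chi^2\le 1+\chi^2$.

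\textbf{Main obstacle.} The delicate step is getting the clean constant $\|L\|^2\le(1-c_fu)^{-1}\cdot\tfrac12$ for the non-Hermitian square root. The crude bound $(2-\sqrt{1-r})^2/2$ is $\le 1/(1-r)$ for $r\in[0,1)$, since $2-\sqrt{1-r}\le 1/\sqrt{1-r}\cdot\sqrt2$ holds there. So the plan is to use the series bound and then verify this elementary scalar inequality. A second point to check is that applying Lemma~\ref{lem:norm-imaginary} termwise is legitimate: conjugation by $\sigma_\beta^{-1/4}$ is an algebra automorphism, so it commutes with the convergent power series, and only $A_g$ itself needs the lemma.
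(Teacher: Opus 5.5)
Your proposal is correct and is essentially the paper's proof. The paper shows that left and right multiplication by $A_f$ are bounded in $\|\cdot\|_{\sigma_\beta,-1/2}$ (which is exactly your $\|\Gamma(\cdot)\|_2$) by $\|e^{\pm\beta H/4}A_f e^{\mp\beta H/4}\|\le c_f$, using Lemma~\ref{lem:norm-imaginary}. It then sums the binomial series for $\sqrt{I\pm uA_f}$ and uses $2p_\pm\ge 1-u$, which is the same computation as your conjugated operator $L=\sigma_\beta^{-1/4}K_\pm\sigma_\beta^{1/4}$.

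One fix to your constant bookkeeping, with $r=c_fu$. The scalar inequality you need is $2-\sqrt{1-r}\le (1-r)^{-1/2}$, without the factor $\sqrt{2}$. It is equivalent to $(1-\sqrt{1-r})^2\ge 0$ and is the one the paper uses. It gives $\|L\|^2\le \frac{1}{2(1-r)}$ and hence exactly the stated bound. The $\sqrt{2}$ version you wrote only gives $\|L\|^2\le (1-r)^{-1}$ and would lose a factor of $4$.
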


For a sufficiently small constant $u = \Theta(1)$ and $\tau = \Omega(\beta)$, the right-hand side is 
$\mathcal{O}(1 + C)$. In particular, if $\chi^2(\rho, \sigma_\beta) = \mathcal{O}(1)$ 
after the burn-in period, then the post-selected state satisfies 
$\chi^2(\rho_\pm', \sigma_\beta) = \mathcal{O}(1)$ as well. This decouples the 
per-sample remixing cost from $\sigma_{\beta,\mathrm{min}}$: under a spectral gap 
$\lambda > 0$, the post-selected state returns to trace-norm $\epsilon'$-proximity of 
$\sigma_\beta$ in time $\mathcal{O}(\lambda^{-1}\log(1/\epsilon'))$, independently 
of $\sigma_{\beta,\mathrm{min}}$.

\begin{proof}[Proof of Theorem~\ref{thm:warmstart}.]
We first establish that multiplication by $A_f$ is bounded in the $\|\cdot\|_{\sigma_\beta,-1/2}$ norm.  For any operator $X$,
\[
\|A_f X\|_{\sigma_\beta,-1/2}
\le \|\sigma_\beta^{-1/4} A_f \sigma_\beta^{1/4}\|
   \|X\|_{\sigma_\beta,-1/2},
\]
and similarly
\[
\|X A_f\|_{\sigma_\beta,-1/2}
\le \|\sigma_\beta^{1/4} A_f \sigma_\beta^{-1/4}\|
   \|X\|_{\sigma_\beta,-1/2}.
\]
By Lemma~\ref{lem:norm-imaginary}, the imaginary-time-shifted norms satisfy
\[
\|e^{\pm\beta H/4} A_f e^{\mp\beta H/4}\|
\le \|A\|\int_{-\infty}^{\infty}|f(t\pm i\beta/4)|\,dt
= \|A\|\,c_f \le c_f,
\]
so $\|A_f X\|_{\sigma_\beta,-1/2} \le c_f\|X\|_{\sigma_\beta,-1/2}$ and likewise $\|XA_f\|_{\sigma_\beta,-1/2} \le c_f\|X\|_{\sigma_\beta,-1/2}$.

Let $B_\pm = \sqrt{I \pm uA_f}$, so that $K_\pm = B_\pm/\sqrt{2}$ and $\rho_\pm' = B_\pm\rho B_\pm\,/\,(2p_\pm(\rho))$.  Since $p_\pm(\rho) = \tfrac{1}{2}(1 \pm u\,\mathrm{Tr}(\rho A_f)) \ge (1-u)/2$, we have $2p_\pm(\rho) \ge 1-u$.  For $c_f u < 1$, the generalized binomial series
$$B_\pm = \sum_{m=0}^{\infty}\binom{1/2}{m}(\pm uA_f)^m$$
converges absolutely.  Applying the multiplication bound iteratively gives $\|A_f^m X A_f^n\|_{\sigma_\beta,-1/2} \le c_f^{m+n}\|X\|_{\sigma_\beta,-1/2}$, and hence we have 
$$\|B_\pm\rho B_\pm\|_{\sigma_\beta,-1/2} \le \left(\sum_{m=0}^{\infty}\left|\binom{1/2}{m}\right|(c_f u)^m\right)^2\|\rho\|_{\sigma_\beta,-1/2} \le \frac{\|\rho\|_{\sigma_\beta,-1/2}}{1-c_f u},$$
where the last inequality uses $\sum_m|\binom{1/2}{m}|x^m \le (1-x)^{-1/2}$ for $x \in [0,1)$.  Since $p_\pm(\rho) \ge (1-u)/2$, we obtain 
$$\|\rho_\pm'\|_{\sigma_\beta,-1/2} \le \frac{\|\rho\|_{\sigma_\beta,-1/2}}{(1-u)(1-c_f u)}.$$
Since $\chi^2(\rho,\sigma_\beta) = \|\rho\|_{\sigma_\beta,-1/2}^2 - 1 \le C$, we have $\|\rho\|_{\sigma_\beta,-1/2} \le \sqrt{1+C}$.  Using $\chi^2(\rho_\pm',\sigma_\beta) \le \|\rho_\pm'\|_{\sigma_\beta,-1/2}^2$ completes the proof. 
\end{proof}

\subsection{Overall Complexity}\label{sec:remixing-complexity}

With the measurement channel $\mathcal{M}$ and the warm-start guarantee of Theorem~\ref{thm:warmstart} in hand, we now describe the full estimation protocol and analyze its complexity. Since $\mathcal{M}$ does not satisfy detailed balance, each measurement is followed by a remixing phase of $k_0$ Gibbs sampling steps to return the state near $\sigma_\beta$. We analyze the sample complexity in the ideal case using martingale concentration, exploiting the boundedness of the outcomes $v_{a_t}$ to obtain $\log(1/\eta)$ dependence on the failure probability. We then perform a stability analysis to account for approximate implementation errors.

\medskip \noindent
\textbf{Protocol.} Given a Gibbs sampling channel $\mathcal{N}$, the measurement channel $\mathcal{M}$ from Section~\ref{sec:POVM-scheme}, an initial state $\rho$, and integers $T_{\mathrm{burn}}$, $k_0$, and $T$, the protocol proceeds as follows:
\begin{itemize}
    \item \textbf{Burn-in stage:} Apply $\mathcal{N}$ for $T_{\mathrm{burn}}:=t_{\mathrm{mix}}(\eta/3)$ steps to obtain $\rho_{\mathrm{burn}}$ satisfying $\|\rho_{\mathrm{burn}} - \sigma_\beta\|_1 \le \eta/3$. Set $\rho_1 = \rho_{\mathrm{burn}}$.
     \item \textbf{Sampling stage:} For $t = 1, 2, \dots, T$, apply $\mathcal{M}$ to $\rho_t$ to obtain outcome label $a_{t} \in \{+,-\}$ and post-selected state
     \[
     \rho_{t,a_{t}}' =
     \frac{K_{a_t}\rho_t K_{a_t}^\dagger}
          {\mathrm{Tr}(\rho_t K_{a_t}^\dagger K_{a_t})}.
     \]
     Then apply $\mathcal{N}^{k_0}$ to $\rho_{t,a_{t}}'$ for
     $k_0= \mathcal{O}\!\left(\frac{1}{\lambda}\log\frac{1}{\epsilon}\right)$ to obtain
     $\rho_{t+1} = \mathcal{N}^{k_0} \left(\rho_{t,a_{t}}'\right)$.

    \item \textbf{Estimation:} Assign real-valued outcomes $v_+ = 1/u$ and $v_- = -1/u$ to the branch labels. Output $\frac{1}{T}\sum_{t=1}^T v_{a_t}$ as the estimator of $\mathrm{Tr}(\sigma_\beta A)$.

\end{itemize}

\medskip \noindent
\textbf{Sample complexity via martingale concentration.} We first analyze the ideal case where $\mathcal{M}$ and $\mathcal{N}$ are implemented exactly, and the initial state is exactly the Gibbs state, i.e. $\rho_1 = \sigma_\beta$.   Choose $k_0 = \mathcal{O}\!\left(\frac{1}{\lambda}\log\frac{1}{\epsilon}\right)$ so that, by Theorem~\ref{thm:warmstart} and the exponential decay of the $\chi^2$-divergence under $\mathcal{N}$, each remixed state satisfies $\|\rho_t - \sigma_\beta\|_1 \le \epsilon/3$ for all $t \ge 1$.  Let $\mathcal{F}_t$ denote the $\sigma$-algebra generated by the outcome labels $(a_1, \dots, a_t)$ up to step $t$.  The outcome $v_{a_t} \in \{-1/u, +1/u\}$ is $\mathcal{F}_t$-measurable with $|v_{a_t}| \le 1/u = \mathcal{O}(1)$, and hence $|v_{a_t}-\mathbb{E}[v_{a_t}\mid\mathcal{F}_{t-1}]|\le 2/u$.  Since $\rho_t$ is determined by $\mathcal{F}_{t-1}$, the conditional mean is $\mathbb{E}[v_{a_t}\mid\mathcal{F}_{t-1}]=\mathrm{Tr}(\rho_t A_f)$. Therefore,
\[
\left|\mathbb{E}[v_{a_t} \mid \mathcal{F}_{t-1}] - \mathrm{Tr}(\sigma_\beta A)\right|
= \left|\mathrm{Tr}\bigl((\rho_t-\sigma_\beta)A_f\bigr)\right|
\le \|\rho_t - \sigma_\beta\|_1
\le \frac{\epsilon}{3}.
\]
We will use the following concentration result from Azuma–Hoeffding inequality \cite[Theorem 7.2.1]{alon2016probabilistic} applied to the martingale differences $X_t-\mathbb{E}[X_t\mid\mathcal{F}_{t-1}]$.

\begin{lemma}\label{lem:concentration}
Let $(X_t)_{t=1}^N$ be real-valued random variables adapted to a filtration $(\mathcal{F}_t)$, where $\mathcal{F}_t$ is generated by $(X_1, \dots, X_t)$.  Suppose $|\mathbb{E}[X_t \mid \mathcal{F}_{t-1}] - \mu| \le \frac{\epsilon}{3}$ and $|X_t - \mathbb{E}[X_t \mid \mathcal{F}_{t-1}]| \le c$ almost surely for all $t$.  Then for $N \ge \frac{18c^2}{\epsilon^2}\log\frac{6}{\eta}$:
$$\mathbb{P}\!\left(\left|\frac{1}{N}\sum_{t=1}^N X_t - \mu\right| \ge \frac{2\epsilon}{3}\right) \le \frac{\eta}{3}.$$
\end{lemma}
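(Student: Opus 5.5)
The plan is to split the deviation into a martingale part and a drift part. Define the centered increments $D_t := X_t - \mathbb{E}[X_t \mid \mathcal{F}_{t-1}]$ and write
\[
\frac{1}{N}\sum_{t=1}^N X_t - \mu = \frac{1}{N}\sum_{t=1}^N D_t + \frac{1}{N}\sum_{t=1}^N \bigl(\mathbb{E}[X_t\mid\mathcal{F}_{t-1}] - \mu\bigr).
\]
By hypothesis each summand in the second sum has absolute value at most $\epsilon/3$. The triangle inequality therefore bounds the second (drift) term by $\epsilon/3$ deterministically. It follows that the event $\{|\frac1N\sum_t X_t - \mu| \ge 2\epsilon/3\}$ is contained in $\{|\frac1N\sum_t D_t| \ge \epsilon/3\}$, and it suffices to bound the probability of the latter.

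Next, observe that $S_n := \sum_{t=1}^n D_t$ is a martingale with respect to $(\mathcal{F}_t)$. Indeed, $D_t$ is $\mathcal{F}_t$-measurable, and by the tower property $\mathbb{E}[D_t\mid\mathcal{F}_{t-1}] = 0$. Its increments are bounded, $|D_t| \le c$ almost surely. Applying the Azuma--Hoeffding inequality \cite[Theorem 7.2.1]{alon2016probabilistic} gives
\[
\mathbb{P}\bigl(|S_N| \ge N\epsilon/3\bigr) \le 2\exp\!\left(-\frac{(N\epsilon/3)^2}{2Nc^2}\right) = 2\exp\!\left(-\frac{N\epsilon^2}{18c^2}\right).
\]

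Finally, substitute the assumed lower bound $N \ge \frac{18c^2}{\epsilon^2}\log\frac{6}{\eta}$. The exponent is then at most $-\log(6/\eta)$, so the right-hand side is at most $2\cdot\eta/6 = \eta/3$, which is the claim.

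The argument is routine. The only point needing care is matching the constant in the version of Azuma's inequality used, namely a two-sided bound with variance proxy $\sum_t c^2 = Nc^2$. It is also worth noting that the filtration generated by the $X_t$ makes $\mathbb{E}[X_t\mid\mathcal{F}_{t-1}]$ well defined. In the application, where the filtration is generated by the branch labels, the same argument applies verbatim.
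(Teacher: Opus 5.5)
Your proof is correct and takes the same route as the paper. The paper gives no written-out proof; it only says to apply Azuma--Hoeffding to the martingale differences $X_t-\mathbb{E}[X_t\mid\mathcal{F}_{t-1}]$, with the drift bounded deterministically by $\epsilon/3$, and your split plus the bound $2\exp\bigl(-N\epsilon^2/(18c^2)\bigr)\le \eta/3$ supplies exactly those omitted details, with the constants checking out.
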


Applying Lemma~\ref{lem:concentration} to $(v_{a_t})$ with $\mu = \mathrm{Tr}(\sigma_\beta A)$ and $c = 2/u = \mathcal{O}(1)$, we conclude that $T = \mathcal{O}(\epsilon^{-2}\log(1/\eta))$ applications of the measurement channel $\mathcal{M}$ suffice to estimate $\mathrm{Tr}(\sigma_\beta A)$ to accuracy $\frac{2\epsilon}{3}$ with failure probability at most $\frac{\eta}{3}$ in the ideal case.  The total number of Gibbs sampling steps is:
$$T \cdot k_0 = \mathcal{O}\!\left(\frac{1}{\lambda\,\epsilon^2}\log\frac{1}{\eta}\cdot\log\frac{1}{\epsilon}\right).$$

\medskip \noindent
\textbf{Stability analysis for approximate implementation.}
Each measurement step consists of applying $\mathcal{M}$ followed by $\mathcal{N}^{k_0}$, which defines a quantum instrument $\{\mathcal{E}_a\}_{a\in\{+,-\}}$ with $\mathcal{E}_a(\rho) = \mathcal{N}^{k_0}(K_a\rho K_a^\dagger)$. A trajectory of outcome labels $\vec{a}=(a_1,\dots,a_T)\in\{+,-\}^T$ induces the ideal trajectory distribution:
$$\mathbb{P}(\vec{a}) = \mathrm{Tr}\!\left(\mathcal{E}_{a_T}\circ\cdots\circ\mathcal{E}_{a_1}(\sigma_\beta)\right).$$
In practice, $\mathcal{M}$ and $\mathcal{N}^{k_0}$ are only approximately implementable. We instead apply an $\epsilon'$-approximate instrument $\{\tilde{\mathcal{E}}_a\}$ (in the sense of Definition~\ref{def:approx-instrument}) starting from $\rho_{\mathrm{burn}}$, inducing the approximate trajectory distribution:
$$\tilde{\mathbb{P}}(\vec{a}) = \mathrm{Tr}\!\left(\tilde{\mathcal{E}}_{a_T}\circ\cdots\circ\tilde{\mathcal{E}}_{a_1}(\rho_{\mathrm{burn}})\right).$$
Since $\|\sigma_\beta - \rho_{\mathrm{burn}}\|_1 \le \frac{\eta}{3}$ from the burn-in, Lemma~\ref{lem:error-accum} gives:
$$\|\mathbb{P} - \tilde{\mathbb{P}}\|_{\mathrm{TV}} \le \frac{1}{2}\!\left(\frac{\eta}{3} + T\epsilon'\right).$$
Define the failure event $S := \bigl\{|\tfrac{1}{T}\sum_{t=1}^T v_{a_t} - \mathrm{Tr}(\sigma_\beta A)| \ge \epsilon\bigr\}$. The ideal protocol satisfies $\mathbb{P}(S) \le \frac{\eta}{3}$ by the martingale analysis in Lemma \ref{lem:concentration}. Therefore:
$$\tilde{\mathbb{P}}(S) \le \mathbb{P}(S) + \|\mathbb{P} - \tilde{\mathbb{P}}\|_{\mathrm{TV}} \le \frac{\eta}{3} + \frac{\eta}{6} + \frac{T\epsilon'}{2}.$$
Setting $\frac{T\epsilon'}{2} \le \frac{\eta}{6}$ gives total failure probability at most $\frac{2\eta}{3} \le \eta$. Substituting $T = \mathcal{O}(\epsilon^{-2}\log(1/\eta))$, the required per-step accuracy is:
$$\epsilon' = \mathcal{O}\!\left(\frac{\epsilon^2\,\eta}{\log(1/\eta)}\right).$$

Since the implementation cost of $\mathcal{M}$ (Sections~\ref{app:implementation-Af}--\ref{app:implementation-POVM}) and $\cN$~\cite{chen2025efficientexactnoncommutativequantum,Ding_2025} scales logarithmically with the precision $\epsilon'$, the cost incurred by the approximation introduces only a minor overhead.

\section*{Acknowledgments}
B.L. is supported by National Key R$\&$D Program of China, Grant No. 2024YFA1016000. H.C. and L.Y. are supported by the U.S. Department of Energy, Office of Science, Accelerated Research in Quantum Computing Centers, Quantum Utility through Advanced Computational Quantum Algorithms, Grant No. DESC002557. J.J. is supported by the Simons Quantum Postdoctoral Fellowship and by a Simons Investigator Award in Mathematics through Grant No. 825053.

\appendix

\section{The Measurement Process and the Outcome Statistics}\label{app:measurement-stats}
To provide a unified framework for analyzing the sample complexity (governed by the correlation time) and the stability against numerical implementation errors, we formalize our sequential measurement protocols using the language of quantum instruments.

We describe a general quantum instrument, which yields a classical outcome branch label from a finite set $a \in \{1, 2, \dots, s\}$, by a collection of quantum superoperators $\{\mathcal{E}_a\}_{a=1}^s$ that satisfies:
\begin{itemize}
\item  The sum over all possible measurement branches, denoted as $\mathcal{E} := \sum_{a=1}^s \mathcal{E}_a$, is a valid quantum channel (i.e., a CPTP map)
\item Each individual map $\mathcal{E}_a$ is completely positive (CP) and trace non-increasing.
\end{itemize}
Each branch label $a$ is associated with a real-valued outcome $v_a \in \mathbb{R}$ via a function $v: \{1,\dots,s\} \to \mathbb{R}$. For a quantum system initialized in a density matrix $\rho$, the application of this quantum instrument yields the branch label $a$ with probability:
$$p_a = \Tr(\mathcal{E}_a(\rho))$$
Conditioned on observing branch $a$, the post-selected state is updated to:
$$\rho^{(a)} = \frac{\mathcal{E}_a(\rho)}{\text{Tr}(\mathcal{E}_a(\rho))} = \frac{\mathcal{E}_a(\rho)}{p_a}$$

In our framework for predicting observables, we apply this quantum instrument sequentially, which forms a measurement process. Starting from an initial state $\rho_0$, we apply the instrument at discrete time steps $t = 1, 2, \dots, T$. Let $\vec{a} = (a_1, a_2, \dots, a_T) \in \{1, \dots, s\}^T$ denote a specific sequence, or trajectory, of recorded branch labels over these $T$ steps, with corresponding real-valued outcomes $v_{a_t}$.

The joint probability distribution of observing a specific trajectory $\vec{a}$ is defined by:
$$\mathbb{P}(\vec{a}) = \text{Tr}\left( \mathcal{E}_{a_T} \circ \mathcal{E}_{a_{T-1}} \circ \cdots \circ \mathcal{E}_{a_1}(\rho_0) \right)$$

We can now cleanly map this unified framework onto the two primary measurement strategies developed in this paper.

\medskip \noindent
\textbf{Detailed-Balance Measurement (Section \ref{sec:DBmeasure})} For the detailed-balance construction, the measurement part is decomposed into three branches: $\mathcal{M} = \mathcal{M}_1 + \mathcal{M}_2 + \mathcal{M}_3$, where each branch acts via a single Kraus operator $\mathcal{M}_a(\rho) = K_a \rho K_a^\dagger$, where $K_a$ is defined in Theorem \ref{thm:DB-channel}, with real-valued outcomes $v_1 = v_2 = 1/(2c^2u)$ and $v_3 = 0$. Let $\mathcal{N}$ denote the one-step Gibbs sampling channel. The effective single-step map of the quantum instrument corresponding to observing branch label $a$ is:
$$\mathcal{E}_a = \mathcal{M} \circ \mathcal{N} \circ \mathcal{M}_a$$
Summing over all branches yields the aggregate unconditional channel $\mathcal{E} = \mathcal{M} \circ \mathcal{N} \circ \mathcal{M}$.

\medskip \noindent
\textbf{Measurement-Remixing (Section \ref{sec:warm-start})} For the warm-start strategy, the measurement channel consists of two branches of measurement outcomes: $\mathcal{M} = \mathcal{M}_+ + \mathcal{M}_-$, where $\mathcal{M}_\pm(\rho) = K_\pm \rho K_\pm^\dag$ is defined in Section \ref{sec:POVM-scheme}.  Let $\mathcal{N}$ denote the single-step Gibbs sampling channel and $k_0$ denote a time that suffices for the remixing from warm start. In this case, the effective map for outcome $i$ is 
$$\mathcal{E}_\pm =  \mathcal{N}^{k_0} \circ \mathcal{M}_\pm.$$

\subsection{Stability Analysis of  Implementation Errors}\label{sec:stability}
In practice, executing these operations on a quantum computer introduces implementation errors (e.g., from block-encoding). Instead of the exact measurement branches $\mathcal{M}_a$ and the ideal mixing channel $\mathcal{N}$, we implement approximations. We quantify this numerical accuracy using the following definition:
\begin{definition} \label{def:approx-instrument} We say that a quantum instrument $\{\tilde{\mathcal{E}}_a\}_{a=1}^s$ is an $\epsilon$-approximate implementation of the ideal quantum instrument $\{\mathcal{E}_a\}_{a=1}^s$ if, for any input density matrix $\rho$, the single-step statistical deviation is bounded by:
$$\sum_{a=1}^s \| \tilde{\mathcal{E}}_a(\rho) - \mathcal{E}_a(\rho) \|_{1} \le \epsilon.$$
\end{definition}
This error model translates into a sequence of noisy trajectory maps. Let $\tilde{\mathbb{P}}(\vec{a})$ denote the probability distribution of the trajectory branch labels generated by recursively applying the approximate instrument $\{\tilde{\mathcal{E}}_a\}_{a=1}^s$ to the initial state $\rho_0$. By bounding the distance between the exact composition and its noisy counterpart, we can rigorously bound the total variation distance between the ideal and physically realized trajectory distributions.

\begin{lemma}[Error Accumulation in Sequential Measurements] \label{lem:error-accum} Let $\mathbb{P}$ be the trajectory distribution generated by the ideal quantum instrument $\{\mathcal{E}_a\}_{a=1}^s$ over $T$ steps starting from $\rho_0$, and $\tilde{\mathbb{P}}$ be the distribution generated by an $\epsilon$-approximate implementation $\{\tilde{\mathcal{E}}_a\}_{a=1}^s$ starting from $\tilde{\rho}_0$, with $\|\rho_0 - \tilde{\rho}_0\|_1 \le \eta$. The total variation distance between the two trajectory distributions is bounded by
$$\| \mathbb{P} - \tilde{\mathbb{P}} \|_{\mathrm{TV}} \le \frac{1}{2}(\eta + T \epsilon).$$
\end{lemma}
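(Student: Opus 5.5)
The plan is a hybrid (telescoping) argument at the level of unnormalized trajectory operators, followed by summing over trajectories. Fix $T$, and for a trajectory $\vec a=(a_1,\dots,a_T)$ write
\[
\mathbb{P}(\vec a)=\mathrm{Tr}\bigl(\mathcal{E}_{a_T}\circ\cdots\circ\mathcal{E}_{a_1}(\rho_0)\bigr),\qquad
\widetilde{\mathbb{P}}(\vec a)=\mathrm{Tr}\bigl(\widetilde{\mathcal{E}}_{a_T}\circ\cdots\circ\widetilde{\mathcal{E}}_{a_1}(\tilde\rho_0)\bigr).
\]
Since $2\|\mathbb{P}-\widetilde{\mathbb{P}}\|_{\mathrm{TV}}=\sum_{\vec a}|\mathbb{P}(\vec a)-\widetilde{\mathbb{P}}(\vec a)|$, and $|\mathrm{Tr}(X)|\le\|X\|_1$, it suffices to bound
\[
\Delta_T:=\sum_{\vec a}\bigl\|\mathcal{E}_{a_T}\cdots\mathcal{E}_{a_1}(\rho_0)-\widetilde{\mathcal{E}}_{a_T}\cdots\widetilde{\mathcal{E}}_{a_1}(\tilde\rho_0)\bigr\|_1
\]
by $\eta+T\epsilon$. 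I will bound this by induction on $T$, establishing $\Delta_t\le \eta+t\epsilon$ with $\Delta_0=\|\rho_0-\tilde\rho_0\|_1\le\eta$.

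For the inductive step, set $X_{\vec a}=\mathcal{E}_{a_{t}}\cdots\mathcal{E}_{a_1}(\rho_0)$ and $\widetilde X_{\vec a}$ for its approximate counterpart. Then for each prefix $\vec a$ and each new label $b$,
\[
\mathcal{E}_b(X_{\vec a})-\widetilde{\mathcal{E}}_b(\widetilde X_{\vec a})=\mathcal{E}_b(X_{\vec a}-\widetilde X_{\vec a})+(\mathcal{E}_b-\widetilde{\mathcal{E}}_b)(\widetilde X_{\vec a}),
\]
and I will sum the trace norms over $b$ and $\vec a$, treating the two terms separately.

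The first term is handled by contractivity of the instrument. For any Hermitian $Y$ one has $\sum_b\|\mathcal{E}_b(Y)\|_1\le\|Y\|_1$: write $Y=Y_+-Y_-$ in its Jordan decomposition, use that each $\mathcal{E}_b$ is positive so $\|\mathcal{E}_b(Y)\|_1\le\mathrm{Tr}\,\mathcal{E}_b(Y_+)+\mathrm{Tr}\,\mathcal{E}_b(Y_-)$, and then use that $\sum_b\mathcal{E}_b$ is trace preserving. The difference $X_{\vec a}-\widetilde X_{\vec a}$ is Hermitian because the maps are CP and the inputs are Hermitian, so this contribution is at most $\Delta_t$.

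The second term needs a little care, since $\widetilde X_{\vec a}$ is a subnormalized positive operator rather than a density matrix. I will use homogeneity of Definition~\ref{def:approx-instrument}: it is stated for density matrices, so scaling gives $\sum_b\|(\widetilde{\mathcal{E}}_b-\mathcal{E}_b)(Z)\|_1\le\epsilon\,\mathrm{Tr}(Z)$ for every $Z\ge0$. Summing over $\vec a$ then yields at most $\epsilon\sum_{\vec a}\mathrm{Tr}(\widetilde X_{\vec a})=\epsilon$, because the approximate instrument also sums to a trace-preserving channel and so the total trace is conserved. This gives $\Delta_{t+1}\le\Delta_t+\epsilon$, which closes the induction, and halving the final bound yields the TV estimate.

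The main subtlety is exactly this second term. The bound must be applied to positive but unnormalized operators, and it is essential that the weights $\mathrm{Tr}(\widetilde X_{\vec a})$ sum to one. If instead one only had the per-trajectory bound $\epsilon$, summing over exponentially many trajectories would destroy the estimate.
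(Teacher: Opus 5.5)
Your proof is correct and is essentially the paper's argument: your $\Delta_t$ is exactly $\|\rho_t-\widetilde\rho_t\|_1$ for the block-diagonal quantum--classical states the paper builds by recording outcomes in classical registers. Likewise, your bound $\sum_b\|\mathcal{E}_b(Y)\|_1\le\|Y\|_1$ is the contractivity of the CPTP map $\Phi_t$, and your homogeneity step with $\sum_{\vec a}\mathrm{Tr}(\widetilde X_{\vec a})=1$ is the paper's weighting by $\widetilde{\mathbb{P}}(\vec a_{<t})$; the only difference is presentational, since you skip the register embedding and use $|\mathrm{Tr}(X)|\le\|X\|_1$ per trajectory instead of monotonicity under partial trace.
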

\begin{proof}

To bound the TV distance between the classical trajectory distributions, we shall embed the classical outcomes into auxiliary quantum registers. To do so, let $\mathcal{H}_S$ denote the Hilbert space of the quantum system. At each time step $t \in \{1, \dots, T\}$, we introduce a fresh classical register $C_t$. Let $C_{<t}$ denote the joint classical register containing all previous measurement outcomes up to step $t-1$. 
We now define the ideal and approximate Quantum-Classical (QC) step-dependent channels $\Phi_t,\tilde{\Phi}_t: \mathcal{B}(\mathcal{H}_{C_{<t}} \otimes \mathcal{H}_S) \to \mathcal{B}(\mathcal{H}_{C_{\le t}} \otimes \mathcal{H}_S)$ as follows \cite[Section~4.4.4]{khatri2020principles}. For $1 \le  t \le T$, define  
\begin{align*}
    \Phi_t(\rho) = \sum_{a=1}^s |a\rangle\langle a|_t \otimes (\id_{< t} \otimes \mathcal{E}_a)(\rho), \quad \tilde{\Phi}_t(\rho) = \sum_{a=1}^s |a\rangle\langle a|_t \otimes  (\id_{< t} \otimes \tilde{\mathcal{E}}_a)(\rho)
\end{align*}
where $\{|a\rangle\}_{a=1}^s$ is an orthonormal basis for the $t$-th classical register, and $\mathrm{id}_{<t}$ denotes the identity map on the previous classical registers $1,\dots,t-1$.

Starting from initial states $\rho_0$ and $\tilde{\rho}_0$ respectively, define recursively
\begin{align*}
\rho_t := \Phi_t\Phi_{t-1}\cdots \Phi_1(\rho_0), \qquad
\widetilde \rho_t := \widetilde\Phi_t\widetilde\Phi_{t-1}\cdots \widetilde\Phi_1(\tilde{\rho}_0).
\end{align*}
Then we have 
$$
\rho_t =  \sum_{\vec{a}_{\leq t}} |\vec{a}_{\leq t}\rangle\langle\vec{a}_{\leq t}| \otimes (\mathcal{E}_{a_t} \circ \dots \circ \mathcal{E}_{a_1})(\rho_0) = \sum_{\vec{a}_{\le t}} \mathbb{P}(\vec{a}_{\le t}) |\vec{a}_{\le t}\rangle\langle\vec{a}_{\le t}| \otimes \rho_{\vec{a}_{\le t}},$$
where $\mathbb{P}(\vec{a}_{\le t}) = \mathrm{Tr}[(\mathcal{E}_{a_{t}} \circ \dots \circ \mathcal{E}_{a_1})(\rho_0)]$ is the probability of the trajectory occurring, and $\rho_{\vec{a}_{\le t}}$ is the normalized post-selected state conditioned on that trajectory.

Note that tracing out the system isolates the purely classical probability distribution over the trajectory: $\mathrm{Tr}_S(\rho_T) = \sum_{\vec{a}} \mathbb{P}(\vec{a})|\vec{a}\rangle \langle \vec{a}|$. Similarly, $\mathrm{Tr}_S(\tilde{\rho}_T) = \sum_{\vec{a}} \tilde{\mathbb{P}}(\vec{a})|\vec{a}\rangle \langle \vec{a}|$. By the monotonicity of the trace distance under the partial trace, the TV distance between the classical trajectory probabilities is bounded by the trace distance of the full joint states: 
\begin{align} \label{eq:TVbound}
    \| \mathbb{P} - \tilde{\mathbb{P}} \|_{\mathrm{TV}} = \frac{1}{2} \| \text{Tr}_{{S}}(\rho_T) - \text{Tr}_{S}(\tilde{\rho}_T) \|_1 \le \frac{1}{2} \|\rho_T-\widetilde\rho_T\|_1 
\end{align}
Therefore, it remains to bound $\|\rho_T - \tilde{\rho}_T\|_1$. 

For this, we first claim that 
\begin{equation}\label{eq:qc-step-bound}
\bigl\|(\Phi_t-\widetilde \Phi_t)(\tilde{\rho}_{t-1})\bigr\|_1 \le \epsilon .
\end{equation}
Indeed, writing $\Delta_i := \mathcal E_i-\widetilde{\mathcal E}_i$, we have
\begin{equation*}
\Phi_t-\widetilde\Phi_t
= \sum_{a=1}^s
|a\rangle\langle a|_t \otimes
(\mathrm{id}_{<t}\otimes \Delta_a).
\end{equation*}
Since the output is block diagonal in the $t$-th classical register, its trace norm decomposes as the sum of the trace norms of the blocks: 
\begin{align*}
\bigl\|(\Phi_t-\widetilde\Phi_t)(\tilde{\rho}_{t-1})\bigr\|_1
&= \sum_{a=1}^s \Big\| \sum_{\vec{a}_{<t}} \tilde{\mathbb{P}}(\vec{a}_{<t})\,
|\vec{a}_{<t}\rangle\langle \vec{a}_{<t}| \otimes \Delta_a(\tilde{\rho}_{\vec{a}_{<t}}) \Big\|_1 \\
&= \sum_{a=1}^s  \sum_{\vec{a}_{<t}} \tilde{\mathbb{P}}(\vec{a}_{<t})
\bigl\|\Delta_a(\tilde{\rho}_{\vec{a}_{<t}})\bigr\|_1 \\
&\le \sum_{\vec{a}_{<t}} \tilde{\mathbb{P}}(\vec{a}_{<t})\,\epsilon = \epsilon\,,
\end{align*}
where in the second line we used that the operator is also block diagonal in the previous classical registers $\ket{\vec{a}_{<t}}$, and in the third line we used the $\epsilon$-approximation assumption for the instrument.

Therefore, using \eqref{eq:qc-step-bound} and contractivity of the trace norm under CPTP maps, we obtain
\begin{align*}
\|\rho_t-\widetilde\rho_t\|_1
&= \|\Phi_t(\rho_{t-1})-\widetilde\Phi_t(\widetilde\rho_{t-1})\|_1 \\
&\le \|\Phi_t(\rho_{t-1}-\widetilde\rho_{t-1})\|_1
+ \|(\Phi_t-\widetilde\Phi_t)(\widetilde\rho_{t-1})\|_1 \\
&\le \|\rho_{t-1}-\widetilde\rho_{t-1}\|_1 + \epsilon.
\end{align*}
Iterating this bound from $t=1$ to $T$ and using $\|\rho_0 - \tilde{\rho}_0\|_1 \le \eta$ as the base case yields
\begin{equation*}
\|\rho_T-\widetilde\rho_T\|_1 \le \eta + T\epsilon.
\end{equation*}
Combining this with \eqref{eq:TVbound}, we complete the proof.
\end{proof}
This stability result is applied to the analysis of the inexact implementation for the protocols in Section~\ref{sec:single-traj-DB} and Section~\ref{sec:remixing-complexity}. The cost of achieving an $\epsilon'$-approximate implementation of the measurement channel $\mathcal{M}$ is analyzed in Appendix~\ref{app:implementation}.
\subsection{The Autocorrelation Time}\label{app:autocorr-time}

We now develop the theory of the autocorrelation time within the above quantum instrument framework. Let $\{\mathcal{E}_a\}_{a=1}^s$ be a quantum instrument with aggregate channel $\mathcal{E} = \sum_a \mathcal{E}_a$, and stationary state $\sigma_\beta$ (i.e., $\mathcal{E}(\sigma_\beta) = \sigma_\beta$). The instrument is equipped with a real-valued outcome function $v: \{1,\dots,s\} \to \mathbb{R}$, $a \mapsto v_a$, which assigns a numerical value to each branch label. All statistical quantities defined below, i.e., the mean, variance, autocorrelation, and autocorrelation time, depend on this choice of $v$.

In the stationary regime, the mean and variance of the outcome $v_{a_t}$ at any step $t$ are:
$$\bar{v} := \mathbb{E}_{\sigma_\beta}(\mathcal{E}) = \sum_a v_a\,\mathrm{Tr}(\mathcal{E}_a(\sigma_\beta)), \qquad \mathbb{V}\mathrm{ar}_{\sigma_\beta}(\mathcal{E}) := \sum_a (v_a - \bar{v})^2\,\mathrm{Tr}(\mathcal{E}_a(\sigma_\beta)).$$
Define the centered instrument map:
$$\widehat{\mathcal{E}}_v := \sum_a (v_a - \bar{v})\,\mathcal{E}_a.$$
The autocorrelation function at lag $t \ge 0$ is:
\begin{equation}
\mathbb{C}\mathrm{orr}_{\sigma_\beta}(\mathcal{E}^t) := \mathrm{Tr}\!\left(\widehat{\mathcal{E}}_v \circ \mathcal{E}^t \circ \widehat{\mathcal{E}}_v(\sigma_\beta)\right).
\end{equation}

\begin{lemma}[Covariance formula]\label{lem:covariance}
For any $t, p > 0$, the covariance between outcomes at steps $t$ and $t+p$ in the stationary state is:
$$\mathbb{E}_{\sigma_\beta}\!\left[(v_{a_t} - \bar{v})(v_{a_{t+p}} - \bar{v})\right] = \mathbb{C}\mathrm{orr}_{\sigma_\beta}(\mathcal{E}^{p-1}).$$
\end{lemma}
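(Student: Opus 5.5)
The plan is to write the joint law of the trajectory as a trace of composed instrument maps and marginalize out every branch label except those at steps $t$ and $t+p$. In the stationary regime the state entering step $t$ is $\sigma_\beta$. The joint probability of labels $(a_t,a_{t+p})=(a,b)$ is obtained by summing $\mathbb{P}(\vec a)$ over all other labels. Summing over the labels after step $t+p$ replaces each $\mathcal{E}_{a_s}$ by the channel $\mathcal{E}$, and the trace of a trace-preserving map drops out. Summing over the intermediate labels $a_{t+1},\dots,a_{t+p-1}$ gives $\mathcal{E}^{p-1}$. Summing over labels before step $t$ gives $\mathcal{E}^{t-1}(\rho_0)=\sigma_\beta$ by stationarity. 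This yields
\begin{equation*}
\mathbb{P}(a_t=a,\,a_{t+p}=b)=\mathrm{Tr}\bigl(\mathcal{E}_b\circ\mathcal{E}^{p-1}\circ\mathcal{E}_a(\sigma_\beta)\bigr).
\end{equation*}

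Next, multiply by $(v_a-\bar v)(v_b-\bar v)$ and sum over $a,b$. By linearity of the maps and of the trace, the sum becomes $\mathrm{Tr}\bigl(\widehat{\mathcal{E}}_v\circ\mathcal{E}^{p-1}\circ\widehat{\mathcal{E}}_v(\sigma_\beta)\bigr)$, which is $\mathbb{C}\mathrm{orr}_{\sigma_\beta}(\mathcal{E}^{p-1})$ by definition. One check is needed: the marginal of $a_t$ alone is $\mathrm{Tr}(\mathcal{E}_a(\sigma_\beta))$, so $\bar v$ is indeed the stationary mean at every step. The same holds for $a_{t+p}$, since $\mathcal{E}^{p}$ fixes $\sigma_\beta$. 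Hence the expression above really is the covariance rather than a raw second moment.

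The only delicate point is bookkeeping. Marginalizing trailing labels uses trace preservation of $\mathcal{E}$. Marginalizing leading labels uses $\mathcal{E}(\sigma_\beta)=\sigma_\beta$. When $p=1$, the middle factor is $\mathcal{E}^0=\mathrm{id}$. I do not expect any genuine obstacle beyond stating the stationarity assumption clearly: the state before step $t$ is $\sigma_\beta$, which holds for instance when $\rho_0=\sigma_\beta$.
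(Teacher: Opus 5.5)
Your proposal is correct and follows essentially the same route as the paper. Like the paper, you write the joint law of $(a_t,a_{t+p})$ as $\mathrm{Tr}\bigl(\mathcal{E}_b\circ\mathcal{E}^{p-1}\circ\mathcal{E}_a(\sigma_\beta)\bigr)$ using stationarity, then sum against the centered outcomes by linearity. Your extra bookkeeping (trace preservation for the trailing labels, and checking that $\bar v$ is the stationary mean at both steps) makes explicit what the paper's one-line argument leaves implicit.
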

\begin{proof}
The joint probability of observing $a_t = a$ and $a_{t+p} = b$, marginalizing over the $p-1$ intermediate steps, is $\mathrm{Tr}(\mathcal{E}_b \circ \mathcal{E}^{p-1} \circ \mathcal{E}_a(\sigma_\beta))$ (using stationarity). Therefore, 
$$\mathbb{E}_{\sigma_\beta}\!\left[(v_{a_t}-\bar{v})(v_{a_{t+p}}-\bar{v})\right] = \sum_{a,b}(v_a-\bar{v})(v_b-\bar{v})\,\mathrm{Tr}\!\left(\mathcal{E}_b \circ \mathcal{E}^{p-1} \circ \mathcal{E}_a(\sigma_\beta)\right) = \mathrm{Tr}\!\left(\widehat{\mathcal{E}}_v \circ \mathcal{E}^{p-1} \circ \widehat{\mathcal{E}}_v(\sigma_\beta)\right),$$
which equals $\mathbb{C}\mathrm{orr}_{\sigma_\beta}(\mathcal{E}^{p-1})$ by definition.
\end{proof}

The integrated (finite-size) autocorrelation time for a trajectory of length $T$ is:
\begin{equation}\label{eq:taut-def}
t_{\mathrm{aut},T} := \frac{1}{2} + \sum_{t=1}^T \left(1 - \frac{t}{T}\right)\frac{\mathbb{C}\mathrm{orr}_{\sigma_\beta}(\mathcal{E}^{t-1})}{\mathbb{V}\mathrm{ar}_{\sigma_\beta}(\mathcal{E})}.
\end{equation}
Using Lemma~\ref{lem:covariance} and the definition~\eqref{eq:taut-def}, one verifies directly that the variance of the empirical mean satisfies:
\begin{align}\label{eq:var-empirical-mean}
\mathbb{V}\mathrm{ar}_{\sigma_\beta}\!\left(\frac{1}{T}\sum_{t=1}^T v_{a_t}\right)
&= \frac{1}{T^2}\sum_{t,s=1}^T \mathbb{E}_{\sigma_\beta}\!\left[(v_{a_t}-\bar{v})(v_{a_s}-\bar{v})\right] \notag\\
&= \frac{1}{T^2}\left[T\,\mathbb{V}\mathrm{ar}_{\sigma_\beta}(\mathcal{E}) + 2\sum_{p=1}^{T-1}(T-p)\,\mathbb{C}\mathrm{orr}_{\sigma_\beta}(\mathcal{E}^{p-1})\right] \notag\\
&= \frac{2\,t_{\mathrm{aut},T}\,\mathbb{V}\mathrm{ar}_{\sigma_\beta}(\mathcal{E})}{T},
\end{align}
where the second equality uses translation invariance of the process in the stationary state together with Lemma~\ref{lem:covariance}.

\begin{lemma}[\cite{jiang2026}]\label{lem:sample-complexity}
Let \(\{\mc{E}_a\}_{a\in\mathcal A}\) be a quantum instrument with real outcomes \(\{v_a\}_{a\in\mathcal A}\), and let
$
\mc{E}:=\sum_{a\in\mathcal A} \mc{E}_a
$
be its nonselective channel. Assume that \(\mc{E}\) is trace-preserving and fixes
\(\sigma_\beta\), i.e. $
\mc{E}(\sigma_\beta)=\sigma_\beta$. Let $\rho$ be an initial state with $\|\rho - \sigma_\beta\|_1 \le \eta$. Define the trajectory distributions
$$\mathbb{P}_\rho(\vec{a}) := \mathrm{Tr}\!\left(\mathcal{E}_{a_T} \circ \cdots \circ \mathcal{E}_{a_1}(\rho)\right), \qquad \mathbb{P}_{\sigma_\beta}(\vec{a}) := \mathrm{Tr}\!\left(\mathcal{E}_{a_T} \circ \cdots \circ \mathcal{E}_{a_1}(\sigma_\beta)\right).$$
For any $\epsilon > 0$, if
$$T \ge \frac{2\,\mathbb{V}\mathrm{ar}_{\sigma_\beta}(\mathcal{E})}{\epsilon^2 \eta}\,t_{\mathrm{aut},T},$$
then
$$\mathbb{P}_\rho\!\left(\left|\frac{1}{T}\sum_{t=1}^T v_{a_t} - \mathbb{E}_{\sigma_\beta}(\mathcal{E})\right| \ge \epsilon\right) \le 2\eta.$$
\end{lemma}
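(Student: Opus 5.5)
The plan is to compare the actual trajectory distribution with the stationary one, and then apply Chebyshev's inequality under the stationary law. Write $S:=\{\vec a: |\frac1T\sum_t v_{a_t}-\mathbb{E}_{\sigma_\beta}(\mathcal{E})|\ge\epsilon\}$. The argument has two steps.

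\textbf{Step 1 (change of initial state).} We bound $\mathbb{P}_\rho(S)\le \mathbb{P}_{\sigma_\beta}(S)+\|\mathbb{P}_\rho-\mathbb{P}_{\sigma_\beta}\|_{\mathrm{TV}}$. This is Lemma~\ref{lem:error-accum} in the special case where the approximate instrument equals the ideal one, so the per-step error is zero, and the initial states are $\rho$ and $\sigma_\beta$. That lemma gives $\|\mathbb{P}_\rho-\mathbb{P}_{\sigma_\beta}\|_{\mathrm{TV}}\le \frac12\|\rho-\sigma_\beta\|_1\le \eta/2\le\eta$.

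Alternatively, one can argue directly. The map $\rho\mapsto \sum_{\vec a}|\vec a\rangle\langle\vec a|\otimes \mathcal{E}_{a_T}\circ\cdots\circ\mathcal{E}_{a_1}(\rho)$ is CPTP, since each quantum–classical step channel is CPTP because $\sum_a\mathcal{E}_a$ is trace-preserving. Contractivity of the trace norm under this map, followed by the partial trace, gives the same bound.

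\textbf{Step 2 (stationary Chebyshev).} Under $\mathbb{P}_{\sigma_\beta}$, stationarity $\mathcal{E}(\sigma_\beta)=\sigma_\beta$ implies that each outcome $v_{a_t}$ has marginal law $a\mapsto\mathrm{Tr}(\mathcal{E}_a(\sigma_\beta))$. This holds because marginalizing over the earlier steps yields $\mathrm{Tr}(\mathcal{E}_a\circ\mathcal{E}^{t-1}(\sigma_\beta))$, and marginalizing over the later steps uses trace preservation of $\mathcal{E}$. Consequently the empirical mean has expectation $\bar v=\mathbb{E}_{\sigma_\beta}(\mathcal{E})$.

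The variance of the empirical mean is given by \eqref{eq:var-empirical-mean}, which equals $2t_{\mathrm{aut},T}\mathbb{V}\mathrm{ar}_{\sigma_\beta}(\mathcal{E})/T$. Deriving this requires Lemma~\ref{lem:covariance} and the translation invariance of pair covariances under the stationary law. I would check that invariance explicitly: for $s<t$, the joint law of $(a_s,a_t)$ is $\mathrm{Tr}(\mathcal{E}_b\mathcal{E}^{t-s-1}\mathcal{E}_a(\sigma_\beta))$, again using trace preservation for later steps and stationarity for earlier steps.

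Chebyshev's inequality then gives
\[
\mathbb{P}_{\sigma_\beta}(S)\le \frac{2t_{\mathrm{aut},T}\mathbb{V}\mathrm{ar}_{\sigma_\beta}(\mathcal{E})}{T\epsilon^2}\le\eta
\]
under the stated hypothesis on $T$. Combining this with Step 1 yields $\mathbb{P}_\rho(S)\le 2\eta$.

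\textbf{Main obstacle.} The delicate point is the bookkeeping in the covariance sum that produces \eqref{eq:var-empirical-mean}. The indexing there must match \eqref{eq:taut-def}: lag $p$ pairs with $\mathbb{C}\mathrm{orr}(\mathcal{E}^{p-1})$ with weight $T-p$, whereas the definition sums $t$ up to $T$ with weight $(1-t/T)$. The extra $t=T$ term has weight $0$, so the two expressions agree and $T^{-2}\bigl[T\mathbb{V}\mathrm{ar}+2\sum(T-p)\mathbb{C}\mathrm{orr}\bigr]=2t_{\mathrm{aut},T}\mathbb{V}\mathrm{ar}/T$.

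Everything else is routine. In particular, no positivity of $t_{\mathrm{aut},T}$ is needed beyond its being the variance expression, which is nonnegative.
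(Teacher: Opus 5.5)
Your proposal is correct and follows the paper's proof: Lemma~\ref{lem:error-accum} with zero per-step error bounds the total variation distance caused by changing the initial state to $\eta/2$, and Chebyshev's inequality under the stationary law via \eqref{eq:var-empirical-mean} gives $\mathbb{P}_{\sigma_\beta}(S)\le\eta$. Your additional checks are details the paper leaves implicit: the stationary marginals, translation invariance of the pair laws, and the zero-weight $t=T$ term that reconciles \eqref{eq:taut-def} with the lag sum.
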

\begin{proof}
Let $S := \{\vec{a} : |\frac{1}{T}\sum_t v_{a_t} - \mathbb{E}_{\sigma_\beta}(\mathcal{E})| \ge \epsilon\}$ be the failure event. Applying Lemma~\ref{lem:error-accum} with both the ideal and the comparison process using the same instrument $\{\mathcal{E}_a\}$ (so the per-step approximation error is $\epsilon' = 0$), starting from $\rho_0 = \rho$ and $\tilde{\rho}_0 = \sigma_\beta$ respectively, with $\|\rho_0 - \tilde{\rho}_0\|_1 = \|\rho - \sigma_\beta\|_1 \le \eta$, we obtain
$$\|\mathbb{P}_\rho - \mathbb{P}_{\sigma_\beta}\|_{\mathrm{TV}} \le \tfrac{1}{2}(\eta + T \cdot 0) = \tfrac{\eta}{2}.$$
By the Chebyshev inequality and~\eqref{eq:var-empirical-mean}, $\mathbb{P}_{\sigma_\beta}(S) \le \eta$. Therefore:
\[\mathbb{P}_\rho(S) \le \mathbb{P}_{\sigma_\beta}(S) + \|\mathbb{P}_\rho - \mathbb{P}_{\sigma_\beta}\|_{\mathrm{TV}} \le \eta + \tfrac{\eta}{2} \le 2\eta. \qedhere\]
\end{proof}

For the specific structure $\mathcal{E}_a = \mathcal{M} \circ \mathcal{N} \circ \mathcal{M}_a$ arising in both of our measurement strategies, the centered instrument map takes the form $\widehat{\mathcal{E}}_v = \mathcal{M} \circ \mathcal{N} \circ \widehat{\mathcal{M}}$, where $\widehat{\mathcal{M}}:= \sum_a (v_a - \bar{v})\mathcal{M}_a$ is the centered measurement map. In this case, the autocorrelation time admits a spectral bound.

\begin{proposition}[\cite{jiang2026}]\label{prop:spectral-gap-bound}
Suppose $\mathcal{E}_a = \mathcal{M} \circ \mathcal{N} \circ \mathcal{M}_a$, where $\mathcal{N}$ and $\mathcal{M}$ satisfy KMS detailed balance with respect to $\sigma_\beta$, and the centered measurement map $\widehat{\mathcal{M}} = \sum_a (v_a - \bar{v})\mathcal{M}_a$ also satisfies KMS detailed balance. If $\mathcal{N}$ has spectrum in $[0,1]$ and spectral gap $\lambda > 0$, then
$$t_{\mathrm{aut},T} \le \frac{\theta}{\lambda} + \frac{1}{2}, \quad \text{where} \quad \theta := \frac{\sum_{a,b}(v_a - \bar{v})(v_b - \bar{v})\,\mathrm{Tr}\!\left(\mathcal{M}_b \circ \mathcal{M}_a(\sigma_\beta)\right)}{\mathbb{V}\mathrm{ar}_{\sigma_\beta}(\mathcal{E})}.$$
\end{proposition}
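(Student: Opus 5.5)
The plan is to rewrite each autocorrelation $\mathbb{C}\mathrm{orr}_{\sigma_\beta}(\mathcal{E}^{t-1})$ as a KMS inner product of one centered observable with itself, sandwiched by a product of Heisenberg-picture maps. Each such product contains $t$ copies of $\mathcal{N}^\dagger$ and otherwise only KMS contractions. The gap then gives geometric decay $(1-\lambda)^t$, and summing yields $\theta/\lambda$.

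\textbf{Step 1 (reduction to a KMS form).} Since $\mathcal{M}$ and $\mathcal{N}$ are trace preserving and $\widehat{\mathcal{E}}_v=\mathcal{M}\circ\mathcal{N}\circ\widehat{\mathcal{M}}$, the outermost $\mathcal{M}\circ\mathcal{N}$ drops out under the trace. This gives
\[
\mathbb{C}\mathrm{orr}_{\sigma_\beta}(\mathcal{E}^{t-1})=\mathrm{Tr}\bigl(\widehat{\mathcal{M}}\,(\mathcal{M}\mathcal{N}\mathcal{M})^{t-1}\mathcal{M}\mathcal{N}\,\widehat{\mathcal{M}}(\sigma_\beta)\bigr).
\]
For the same reason $\mathbb{V}\mathrm{ar}_{\sigma_\beta}(\mathcal{E})$ equals the measurement variance $\sum_a (v_a-\bar v)^2\mathrm{Tr}(\mathcal{M}_a(\sigma_\beta))$.

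Set $Y:=\widehat{\mathcal{M}}^\dagger(I)$ and $\Gamma(X):=\sigma_\beta^{1/2}X\sigma_\beta^{1/2}$. KMS detailed balance of a map $\mathcal{T}$ is equivalent to $\mathcal{T}\circ\Gamma=\Gamma\circ\mathcal{T}^\dagger$. Applied to $\widehat{\mathcal{M}}$, this gives $\widehat{\mathcal{M}}(\sigma_\beta)=\Gamma(Y)$. Moving all maps to the Heisenberg side by Hilbert–Schmidt duality then gives
\[
\mathbb{C}\mathrm{orr}_{\sigma_\beta}(\mathcal{E}^{t-1})=\big\langle \mathcal{N}^\dagger\mathcal{M}^\dagger(\mathcal{M}^\dagger\mathcal{N}^\dagger\mathcal{M}^\dagger)^{t-1}Y,\;Y\big\rangle_{\sigma_\beta,1/2}.
\]
The same manipulation with no intermediate maps identifies the numerator of $\theta$, namely $\mathrm{Tr}(\widehat{\mathcal{M}}\circ\widehat{\mathcal{M}}(\sigma_\beta))$, as $\langle Y,Y\rangle_{\sigma_\beta,1/2}$.

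\textbf{Step 2 (centering and invariance).} We have $\langle I,Y\rangle_{\sigma_\beta,1/2}=\mathrm{Tr}(\sigma_\beta\widehat{\mathcal{M}}^\dagger(I))=\sum_a(v_a-\bar v)\mathrm{Tr}(\mathcal{M}_a(\sigma_\beta))=0$, so $Y\in I^{\perp}$ in the KMS inner product. The maps $\mathcal{M}^\dagger$ and $\mathcal{N}^\dagger$ are unital and KMS self-adjoint, so they leave $I^\perp$ invariant: $\langle I,\mathcal{T}^\dagger X\rangle=\langle \mathcal{T}^\dagger I,X\rangle=\langle I,X\rangle$.

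\textbf{Step 3 (contraction bounds; the main point).} On $I^\perp$, the gap assumption together with spectrum in $[0,1]$ gives $\|\mathcal{N}^\dagger\|_{\sigma_\beta,1/2}\le 1-\lambda$. I will also need $\|\mathcal{M}^\dagger\|_{\sigma_\beta,1/2}\le 1$. This is the step requiring care. For a unital CP map that is KMS self-adjoint, the spectrum lies in $[-1,1]$; I would justify this through the Kadison–Schwarz/interpolation bound, which shows that $\Gamma^{1/2}$-conjugation of a unital CP map with invariant state $\sigma_\beta$ is a contraction on the KMS space. Given this, the operator in Step 1 contains exactly $t$ factors of $\mathcal{N}^\dagger$ and otherwise only contractions, all acting within $I^\perp$. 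Cauchy–Schwarz then yields
\[
|\mathbb{C}\mathrm{orr}_{\sigma_\beta}(\mathcal{E}^{t-1})|\le(1-\lambda)^t\langle Y,Y\rangle_{\sigma_\beta,1/2}.
\]

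\textbf{Step 4 (summation).} Insert this bound into \eqref{eq:taut-def}, using $0\le 1-t/T\le 1$:
\[
t_{\mathrm{aut},T}\le \tfrac12+\frac{\langle Y,Y\rangle_{\sigma_\beta,1/2}}{\mathbb{V}\mathrm{ar}_{\sigma_\beta}(\mathcal{E})}\sum_{t\ge1}(1-\lambda)^t\le \tfrac12+\frac{\theta}{\lambda}.
\]
This is the claimed bound.
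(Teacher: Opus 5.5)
Your proof is correct and complete. The paper gives no proof of this proposition; it is quoted from \cite{jiang2026}, so there is no in-paper argument to compare against. What you give is the standard spectral route, and every step holds.
\begin{enumerate}
\item Trace preservation of $\mathcal{M}\circ\mathcal{N}$ strips the outermost layer of the correlation.
\item Detailed balance of $\widehat{\mathcal{M}}$ gives $\widehat{\mathcal{M}}(\sigma_\beta)=\sigma_\beta^{1/2}Y\sigma_\beta^{1/2}$. Hence both ends of the correlation are the same vector $Y=\widehat{\mathcal{M}}^\dagger(I)$, which is KMS-orthogonal to $I$. It also makes the numerator of $\theta$ exactly $\langle Y,Y\rangle_{\sigma_\beta,1/2}$.
\item The lag-$(t-1)$ correlation contains exactly $t$ factors of $\mathcal{N}^\dagger$. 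Each contracts the invariant subspace $I^\perp$ by $1-\lambda$, so the sum is at most $\sum_{t\ge1}(1-\lambda)^t\le 1/\lambda$.
\end{enumerate}

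Step 3 can be simplified: you do not need Kadison--Schwarz or interpolation to get $\|\mathcal{M}^\dagger\|_{\sigma_\beta,1/2}\le 1$. Since $\mathcal{M}^\dagger$ is self-adjoint for the KMS inner product, its KMS operator norm equals its spectral radius. The spectral radius is at most its norm as a map on $(\mathcal{B}(\mathcal{H}),\|\cdot\|)$. For a unital positive map that norm is $\|\mathcal{M}^\dagger(I)\|=1$ (Russo--Dye).

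This also shows that only the spectrum of $\mathcal{M}$ lying in $[-1,1]$ is used. That explains why the proposition imposes the $[0,1]$-spectrum condition on $\mathcal{N}$ alone.
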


\section{Implementation of the Measurement Channels} \label{app:implementation}

For completeness, we review the definition of block encodings of non-unitary matrices.

\begin{definition}\label{def:block_encoding_old}
A unitary $U_A$ is said to be an $(\alpha, b, \epsilon)$-block encoding of a matrix $A$ if the top-left block of $U_A$ approximates $A/\alpha$ as follows:
$$\| A - \alpha (\langle 0^b| \otimes I) U_A (|0^b \rangle \otimes I) \| \le \epsilon,$$
where $\alpha$ is called the sub-normalization factor. 
\end{definition}

We collect the key arithmetic properties of block encodings used throughout the paper~\cite{Low_2019, Berryetal2014,camps2023explicitquantumcircuitsblock, Chakraborty19, lin2022lecture}, which allow us to construct complex block encodings from simpler ones. 

\begin{lemma}\label{lem:block}
\begin{itemize}
\item (Product of Block Encodings) Let $A_0, \ldots, A_{M-1}$ be matrices with compatible dimensions. Assume that for each $j \in \{0,1,\ldots,M-1\}$, we are given an $(\alpha_j, b_j, \epsilon_j)$-block encoding $U_{A_j}$. Let $U$ be the unitary obtained by applying $U_{A_0}, U_{A_1}, \ldots, U_{A_{M-1}}$ sequentially on disjoint ancilla registers and a common system register. Then $U$ is a block encoding of $A_{M-1}\cdots A_1A_0$ with parameters
\begin{equation*}
\left(\prod_{j=0}^{M-1} \alpha_j,\; \sum_{j=0}^{M-1} b_j,\; \prod_{j=0}^{M-1}(\alpha_j + \epsilon_j) - \prod_{j=0}^{M-1} \alpha_j\right).
\end{equation*}
In particular, if $U_A$ is an $(\alpha, b, \epsilon)$-block encoding of $A$ and $M\epsilon \le \alpha$, then applying $M$ copies of $U_A$ on disjoint ancilla registers yields an $(\alpha^M, Mb, \mathcal{O}(M\alpha^{M-1}\epsilon))$-block encoding of $A^M$.
\item (Linear Combination of Block Encodings) Let $A_0,\dots,A_{M-1}$ be matrices and let $c_0,\dots,c_{M-1}\in\mathbb C$, where $c_j=|c_j|e^{i\phi_j}$. For each $j\in\{0,\dots,M-1\}$, suppose $U_{A_j}$ is an $(\alpha_j,b_j,\epsilon_j)$-block encoding of $A_j$. Let $b:=\max_j b_j$ and regard each $U_{A_j}$ as acting on a common $b$-qubit ancilla register by padding with unused ancillas when necessary. Let $m:=\lceil \log_2 M\rceil$, and define the prepare oracle $U_{\mathrm{prep}}$ on the $m$-qubit control register by 
\begin{align*}
U_{\mathrm{prep}}|0^m\rangle
= \frac{1}{\sqrt{\gamma}}
\sum_{j=0}^{M-1}\sqrt{|c_j|\alpha_j}\,|j\rangle\,, \q     \gamma:=\sum_{j=0}^{M-1}|c_j|\alpha_j.
\end{align*}
The select oracle for accessing $U_{A_j}$ is defined by 
\begin{align*}
    U_{\mathrm{sel}}
= \sum_{j=0}^{M-1}|j\rangle\langle j|\otimes e^{i\phi_j}U_{A_j} + \sum_{j=M}^{2^m-1}|j\rangle\langle j|\otimes I.    
\end{align*}
Then, the matrix $W:=(U_{\mathrm{prep}}^\dagger\otimes I)\,U_{\mathrm{sel}}\,(U_{\mathrm{prep}}\otimes I)$ is a $(\gamma,\; m+b,\; \sum_{j=0}^{M-1}|c_j|\epsilon_j)$-block encoding of $\sum_{j=0}^{M-1}c_jA_j$.  
\end{itemize}
\end{lemma}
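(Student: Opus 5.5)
The plan is to verify both items by computing the top-left block of the composite unitary directly, and then to control the error with a telescoping argument. Throughout, for each $j$ write $\widetilde A_j := \alpha_j(\langle 0^{b_j}|\otimes I)U_{A_j}(|0^{b_j}\rangle\otimes I)$. This is the matrix that is exactly encoded. Since it is $\alpha_j$ times a sub-block of a unitary, we have $\|\widetilde A_j\|\le\alpha_j$. The assumption gives $\|A_j-\widetilde A_j\|\le\epsilon_j$, and hence $\|A_j\|\le\alpha_j+\epsilon_j$.

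\textbf{Product.} Because the $U_{A_j}$ act on disjoint ancilla registers and share only the system register, projecting all ancillas onto $|0\rangle$ factorizes. Concretely, $(\langle 0^{\sum b_j}|\otimes I)\,U_{A_{M-1}}\cdots U_{A_0}\,(|0^{\sum b_j}\rangle\otimes I)=\prod_j \widetilde A_j/\alpha_j$. To see this, insert the projector $|0^{b_k}\rangle\langle 0^{b_k}|$ on register $k$ at the point between $U_{A_k}$ and the later unitaries. This is legitimate because the later unitaries act trivially on register $k$, so register $k$ is only touched by $U_{A_k}$ before the final projection. The normalization is therefore $\prod_j\alpha_j$ and the ancilla count is $\sum_j b_j$. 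For the error we telescope:
\[
\prod_j A_j-\prod_j\widetilde A_j=\sum_{k}\widetilde A_{M-1}\cdots\widetilde A_{k+1}\,(A_k-\widetilde A_k)\,A_{k-1}\cdots A_0 .
\]
The norm of the $k$-th term is at most $\epsilon_k\prod_{j>k}\alpha_j\prod_{j<k}(\alpha_j+\epsilon_j)$. Summing over $k$ gives exactly the telescoping expansion of $\prod_j(\alpha_j+\epsilon_j)-\prod_j\alpha_j$. For the special case of $M$ copies, we use
\[
(\alpha+\epsilon)^M-\alpha^M\le M\epsilon(\alpha+\epsilon)^{M-1}\le M\epsilon\,\alpha^{M-1}(1+\epsilon/\alpha)^{M-1}\le e\,M\alpha^{M-1}\epsilon ,
\]
where the last step uses $M\epsilon\le\alpha$.

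\textbf{Linear combination.} First pad each $U_{A_j}$ to act on $b$ ancillas by tensoring with the identity on the unused qubits. This leaves $\widetilde A_j$ unchanged, since the extra qubits are projected onto $|0\rangle$ and the identity acts on them. Next, sandwich $W$ between $\langle 0^m|\langle 0^b|$ and $|0^m\rangle|0^b\rangle$. The prepare state has amplitudes $\sqrt{|c_j|\alpha_j/\gamma}$ and the select oracle is block diagonal in $j$, so
\[
\langle 0^m|U_{\rm prep}^\dagger\,U_{\rm sel}\,U_{\rm prep}|0^m\rangle
=\sum_{j<M}\frac{|c_j|\alpha_j}{\gamma}\,e^{i\phi_j}\,U_{A_j}.
\]
The indices $j\ge M$ carry zero amplitude and therefore contribute nothing. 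Projecting the ancilla onto $|0^b\rangle$ turns each $U_{A_j}$ into $\widetilde A_j/\alpha_j$, leaving the block $\gamma^{-1}\sum_j c_j\widetilde A_j$. The ancilla count is $m+b$ and the normalization is $\gamma$. The error is bounded by $\|\sum_j c_j(A_j-\widetilde A_j)\|\le\sum_j|c_j|\epsilon_j$ via the triangle inequality.

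\textbf{Main obstacle.} Neither item is deep. The step that needs care is the factorization claim in the product part: one must justify that projecting only at the end equals the product of individually projected blocks, which relies on the disjointness of the ancilla registers. The other delicate point is keeping straight that $\|\widetilde A_j\|\le\alpha_j$ whereas only $\|A_j\|\le\alpha_j+\epsilon_j$. Ordering the telescoping sum as above, with the exact $\widetilde A$'s to the left and the target $A$'s to the right, is precisely what produces the stated bound $\prod_j(\alpha_j+\epsilon_j)-\prod_j\alpha_j$.
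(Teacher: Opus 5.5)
Your proposal is correct. The paper gives no proof of this lemma and only cites the standard block-encoding literature; your argument is the standard one behind those citations: factor the projected product register by register using disjointness of the ancillas, bound the error by telescoping, and compute the prepare--select amplitudes directly for the LCU. Ordering the telescoping sum with the exact blocks $\widetilde A_j$ (where $\|\widetilde A_j\|\le\alpha_j$) on the left is precisely what gives the stated error $\prod_j(\alpha_j+\epsilon_j)-\prod_j\alpha_j$ without assuming $\|A_j\|\le\alpha_j$.
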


\subsection{Block encoding of \texorpdfstring{$A_f$}{A\_f} and \texorpdfstring{$A_{\tilde f}$}{A\_tilde-f}}
\label{app:implementation-Af}

To implement the detailed-balanced measurement channel of Section~\ref{sec:DBmeasure} and the POVM
scheme of Section~\ref{sec:warm-start}, we require efficient block encodings of the filtered observables
\begin{equation} \label{def:agobser}
    A_g := \int_{\mathbb R} g(t)\, e^{iHt} A e^{-iHt}\, dt,
\qquad g\in\{f,\tilde f\}.
\end{equation}
Here
\begin{equation} \label{def:filtergaussian}
    f(t)=\frac{1}{\sqrt{2\pi}\tau}\exp\!\left(-\frac{t^2}{2\tau^2}\right),
\qquad
\tilde f(t)=f\!\left(t-\frac{i\beta}{2}\right).
\end{equation}
The operator $A_f$ is Hermitian, while $A_{\tilde f}$ is generally non-Hermitian. Both appear in the constructions of Section~\ref{sec:DBmeasure}, and $A_f$ is also the filtered observable used in the warm-start POVM of Section~\ref{sec:warm-start}. 
We first state a quadrature lemma for operator-valued integrals, which follows from the Poisson summation formula. 

\begin{lemma}[Quadrature lemma]
\label{lem:quadrature}
Let $\phi:\mathbb R\to \mathcal B(\mathcal H)$ be a Schwartz-class
operator-valued function, with Fourier transform
\begin{equation*}
    \widehat \phi(\omega)=\int_{\mathbb R} e^{-i\omega t}\phi(t)\,\rd t.
\end{equation*}
For any even integer $M>0$ and step size $\Delta t>0$, let
$I_M:=\{-M/2,\ldots,M/2-1\}$. Then
\begin{equation*}
    \left\|
\Delta t \sum_{k\in I_M}\phi(k\Delta t)-\int_{\mathbb R}\phi(t)\,\rd t
\right\|
\le
\underbrace{\sum_{n\neq 0}\left\|\widehat\phi\!\left(\frac{2\pi n}{\Delta t}\right)\right\|}_{\text{aliasing error}}
+ \underbrace{
\Delta t\sum_{k\notin I_M}\|\phi(k\Delta t)\|}_{\text{truncation error}}. 
\end{equation*}
\end{lemma}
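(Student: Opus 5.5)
The plan is to derive the quadrature bound directly from the Poisson summation formula, splitting the error into an aliasing part and a truncation part by the triangle inequality. We write
\[
\Delta t\sum_{k\in I_M}\phi(k\Delta t)-\int_{\mathbb R}\phi(t)\,\rd t
=\Bigl(\Delta t\sum_{k\in\mathbb Z}\phi(k\Delta t)-\int_{\mathbb R}\phi(t)\,\rd t\Bigr)-\Delta t\sum_{k\notin I_M}\phi(k\Delta t).
\]
The norm of the second term is bounded by $\Delta t\sum_{k\notin I_M}\|\phi(k\Delta t)\|$ using the triangle inequality; this is exactly the truncation error. The remaining work concerns the first bracket, the infinite trapezoidal rule error.

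For the first bracket we invoke Poisson summation in its operator-valued form,
\[
\Delta t\sum_{k\in\mathbb Z}\phi(k\Delta t)=\sum_{n\in\mathbb Z}\widehat\phi\!\left(\frac{2\pi n}{\Delta t}\right).
\]
The $n=0$ term is $\widehat\phi(0)=\int_{\mathbb R}\phi(t)\,\rd t$. Subtracting it and applying the triangle inequality gives the aliasing bound $\sum_{n\ne 0}\|\widehat\phi(2\pi n/\Delta t)\|$.

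To justify Poisson summation for operator-valued $\phi$, we reduce to the scalar case. Fix a matrix basis, or equivalently pair with arbitrary vectors $\langle x|\cdot|y\rangle$. Each scalar function $\langle x|\phi(t)|y\rangle$ is Schwartz, and the classical Poisson summation formula holds for Schwartz functions. Both sides are absolutely convergent: the left side because $\phi$ decays rapidly, and the right side because $\widehat\phi$ is again Schwartz. Hence the identity holds entrywise, and therefore as operators, since $\mathcal B(\mathcal H)$ is finite-dimensional.

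The only mildly delicate point is this justification of convergence and of interchanging the pairing with the sums. That step is routine given the Schwartz hypothesis, so we do not expect a real obstacle. The factor $\Delta t$ and the $2\pi/\Delta t$ scaling follow from the Fourier convention $\widehat\phi(\omega)=\int e^{-i\omega t}\phi(t)\,\rd t$. We will check this by applying the standard formula $\sum_k g(k)=\sum_n \hat g(2\pi n)$ to the rescaled function $g(s)=\phi(s\Delta t)$.
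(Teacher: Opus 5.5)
Your proposal is correct and follows essentially the same route as the paper: apply scalar Poisson summation entrywise to get $\Delta t\sum_{k\in\mathbb Z}\phi(k\Delta t)=\sum_{n}\widehat\phi(2\pi n/\Delta t)$, identify the $n=0$ term with $\int\phi$, split off the tail $k\notin I_M$, and conclude by the triangle inequality. Your rescaling check via $g(s)=\phi(s\Delta t)$ confirms the $\Delta t$ and $2\pi/\Delta t$ factors under the paper's Fourier convention.
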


\begin{proof}
Apply the scalar Poisson summation formula
\cite[Theorem 4.2.2]{Pinsky2002IntroductionTF} entrywise, in any fixed
orthonormal basis of $\mathcal H$. Since $\phi$ is Schwartz class, all sums
below converge absolutely and the resulting entrywise identity is an identity
of bounded operators:
\begin{align*}
\Delta t \sum_{k \in \mathbb{Z}} \phi(k \Delta t) = \sum_{n \in \mathbb{Z}} \widehat{\phi}\left(\frac{2\pi n}{\Delta t}\right),  
\end{align*}
and noting 
\begin{equation*}
    \widehat{\phi}(0) = \int_{\mathbb R} \phi(t)\,\rd t,
\end{equation*}
we have 
\begin{align*}
    \Delta t \sum_{k\in I_M} \phi(k\Delta t) - \int_{\mathbb R}\phi(t)\,\rd t
    = \sum_{n \neq 0} \widehat{\phi}\left(\frac{2\pi n}{\Delta t}\right)
    - \Delta t \sum_{k\notin I_M} \phi(k\Delta t),
\end{align*}
where the first term on the right is the aliasing error from the finite grid spacing (the non-zero Fourier modes), and the second term is the truncation error from cutting off the infinite sum at $M/2$. The proof is complete by the triangle inequality. 
\end{proof}

We next collect the
elementary estimates for the two filters $f$ and $\tilde f$. 

\begin{lemma}[Gaussian filter estimates]
\label{lem:gaussian-filter-estimates}
For the two filter functions $g\in\{f,\tilde f\}$ defined above, it holds that 
\begin{enumerate}
\item Time-domain decay:
\begin{equation} \label{est:gauss1} 
    |g(t)| \le \frac{C_g}{\sqrt{2\pi}\tau}e^{-t^2/(2\tau^2)},
\qquad C_f=1, \qquad C_{\tilde f}=e^{\beta^2/(8\tau^2)}.
\end{equation}
\item Fourier transforms:
\begin{align} \label{est:gauss2} 
    \widehat f(\omega)=e^{-\tau^2\omega^2/2},
\qquad \widehat{\tilde f}(\omega) = e^{\beta^2/(8\tau^2)}
\exp\!\left(-\frac{\tau^2}{2}\Bigl(\omega-\frac{\beta}{2\tau^2}\Bigr)^2\right).
\end{align}
\item $L^1$ norms:
\begin{align} \label{est:gauss3} 
    \|f\|_{L^1}=1,
\qquad
\|\tilde f\|_{L^1}=e^{\beta^2/(8\tau^2)}. 
\end{align}
In particular, if $\tau=\Omega(\beta)$, then $\|g\|_{L^1}= \mc{O}(1)$ for both
$g=f$ and $g=\tilde f$.
\end{enumerate}
\end{lemma}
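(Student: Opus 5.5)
These are three direct Gaussian computations, done in the order time-domain bound, $L^1$ norm, Fourier transform. For $g=f$ everything is standard: $|f(t)|=f(t)$ gives item 1 with $C_f=1$; $\|f\|_{L^1}=\int f=1$ gives item 3; and the characteristic-function formula $\int \frac{1}{\sqrt{2\pi}\tau}e^{-t^2/(2\tau^2)}e^{-i\omega t}\,\rd t=e^{-\tau^2\omega^2/2}$ gives item 2.

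\textbf{Time-domain bound and $L^1$ norm for $\tilde f$.} Expand the exponent:
\[
-\frac{(t-i\beta/2)^2}{2\tau^2}=-\frac{t^2}{2\tau^2}+\frac{\beta^2}{8\tau^2}+\frac{i\beta t}{2\tau^2}.
\]
Since the last term is purely imaginary, it has modulus one. Hence
\[
|\tilde f(t)|=\frac{e^{\beta^2/(8\tau^2)}}{\sqrt{2\pi}\tau}\,e^{-t^2/(2\tau^2)},
\]
which gives item 1 as an equality. Integrating this in $t$ gives $\|\tilde f\|_{L^1}=e^{\beta^2/(8\tau^2)}$. This is the $s=-\beta/2$ case already computed in Lemma~\ref{lem:norm-imaginary}.

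\textbf{Fourier transform of $\tilde f$.} I would use the characterization in Lemma~\ref{lem:norm-imaginary}: $\widehat{\tilde f}(\omega)=e^{-s\omega}\widehat f(\omega)$ with $s=-\beta/2$. This gives
\[
\widehat{\tilde f}(\omega)=e^{\beta\omega/2}e^{-\tau^2\omega^2/2}.
\]
The same formula also follows directly by shifting the contour $t\mapsto t+i\beta/2$. The shift is legitimate because the Gaussian is entire and decays on horizontal lines, so the vertical segments vanish in the limit.

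Completing the square then gives
\[
-\frac{\tau^2}{2}\omega^2+\frac{\beta}{2}\omega=-\frac{\tau^2}{2}\Bigl(\omega-\frac{\beta}{2\tau^2}\Bigr)^2+\frac{\beta^2}{8\tau^2}.
\]
This is exactly the expression in item 2.

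\textbf{The $\tau=\Omega(\beta)$ remark and the main check.} If $\tau\ge c\beta$, then $e^{\beta^2/(8\tau^2)}\le e^{1/(8c^2)}=\mathcal O(1)$, which gives the final claim. The only step requiring care is the sign convention in the Fourier transform. With $\widehat\phi(\omega)=\int e^{-i\omega t}\phi(t)\,\rd t$ and a shift by $-i\beta/2$, the contour shift produces the factor $e^{+\beta\omega/2}$. I would verify this directly, so that the peak of $\widehat{\tilde f}$ lies at $\omega=+\beta/(2\tau^2)$ as stated.
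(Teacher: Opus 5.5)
Your proposal is correct and follows essentially the route the paper relies on: the paper states this lemma without a separate proof, treating it as an elementary Gaussian computation. The modulus and $L^1$ norm of $\tilde f$ are the $s=-\beta/2$ case already worked out in Lemma~\ref{lem:norm-imaginary}, and $\widehat{\tilde f}(\omega)=e^{\beta\omega/2}\widehat f(\omega)$ followed by completing the square gives exactly the stated Fourier transform, with the sign $e^{+\beta\omega/2}$ as you checked.
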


We now derive the
resulting block-encoding costs of $A_f$ and $A_{\tilde f}$. 

\begin{theorem}[Block encoding of the filtered observables]
\label{thm:Afblock}
Let $H$ be a system Hamiltonian, and $A$ be a Hermitian observable with $\|A\|\le 1$. Suppose that $U_A$ is an $(\alpha_A,b_A,\epsilon_A)$-block encoding of $A$. Let $g\in\{f,\tilde f\}$, where $f$ and $\tilde f$ are given in \eqref{def:filtergaussian} with $\tau = \Omega(\beta)$. Then, for any $\epsilon\in(0,1/2)$, one can construct an $(\alpha_g,b_g,\epsilon_g)$-block encoding of
\begin{equation*}
    A_g = \int_{\mathbb R} g(t)\,e^{iHt}Ae^{-iHt}\, \rd t\,,
\end{equation*} 
with normalization factor $\alpha_g$ and number of ancillas $b_g$ satisfying
\begin{equation*}
    \alpha_g = \cO(\alpha_A), \qquad b_g=b_A+ \cO \bigl(\log(\tau\|H\|+2)+\log\log(1/\epsilon)\bigr),
\end{equation*}
and the accuracy $\epsilon_g = \cO(\epsilon_A + \epsilon)$. Moreover, the implementation uses
\begin{align*}
&1 \text{ query to } U_A,\\
&T=\cO\!\left(\tau\sqrt{\log(1/\epsilon)}\right) \text{ controlled Hamiltonian simulation time.}
\end{align*}
\end{theorem}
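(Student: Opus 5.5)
The plan is to discretize the integral defining $A_g$ with the quadrature rule of Lemma~\ref{lem:quadrature}, and then implement the resulting finite sum $\Delta t\sum_{k\in I_M} g(k\Delta t)\,e^{iHk\Delta t}Ae^{-iHk\Delta t}$ as a linear combination of block encodings (Lemma~\ref{lem:block}). The time-$k\Delta t$ evolutions are not applied term by term. Instead we use a single select-type controlled evolution $\sum_k |k\rangle\langle k|\otimes e^{iHk\Delta t}$, built from binary-controlled powers $e^{iH2^j\Delta t}$. This has total controlled simulation time $\cO(M\Delta t)$, and we sandwich one call to $U_A$ between it and its inverse. The resulting unitary $\sum_k|k\rangle\langle k|\otimes e^{iHk\Delta t}U_Ae^{-iHk\Delta t}$ is a block encoding of each term, and it uses only one query to $U_A$.

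The first step is to bound the quadrature error, with $\phi(t)=g(t)e^{iHt}Ae^{-iHt}$. For the truncation error, estimate \eqref{est:gauss1} gives $\Delta t\sum_{|k|>M/2}\|\phi(k\Delta t)\|\le C_g\|A\|\,\mathrm{erfc}$-type tail. This is $\le\epsilon$ once $M\Delta t/2\ge \tau\sqrt{2\log(C_g/\epsilon)}$, i.e.\ the window is $\cO(\tau\sqrt{\log(1/\epsilon)})$, where $C_g=\cO(1)$ because $\tau=\Omega(\beta)$. For the aliasing error, in the eigenbasis of $H$ we have $\widehat\phi(\omega)_{kj}=\widehat g(\omega+E_j-E_k)A_{kj}$, and the frequencies $E_j-E_k$ lie in $[-2\|H\|,2\|H\|]$. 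By \eqref{est:gauss2}, $\widehat g$ is a Gaussian of width $1/\tau$ centered at $\beta/(2\tau^2)=\cO(1/\tau)$. So if $2\pi/\Delta t\ge 4\|H\|+\cO(\tau^{-1}\sqrt{\log(1/\epsilon)})$, then $\sum_{n\ne0}\|\widehat\phi(2\pi n/\Delta t)\|$ is a rapidly decaying sum. I would bound the operator norm crudely via the Frobenius or Schur bound using entrywise decay; to avoid a dimension factor, I would instead write $\widehat\phi(\omega)=\int g(t)e^{-i\omega t}e^{iHt}Ae^{-iHt}dt$ and argue using the spectral projections. This is the step I expect to need the most care. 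It gives $\Delta t=\Theta(1/(\|H\|+\tau^{-1}\sqrt{\log(1/\epsilon)}))$, hence $M=\cO(\tau\|H\|\sqrt{\log(1/\epsilon)}+\log(1/\epsilon))$ and $\log M=\cO(\log(\tau\|H\|+2)+\log\log(1/\epsilon))$ ancillas, matching $b_g$.

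The second step is the LCU. The coefficients are $c_k=\Delta t\,g(k\Delta t)$, with phases absorbed into the select oracle, and their $\ell^1$ sum is $\gamma=\alpha_A\Delta t\sum_k|g(k\Delta t)|\le\alpha_A(\|g\|_{L^1}+\cO(\Delta t\,C_g/\tau))=\cO(\alpha_A)$ by \eqref{est:gauss3}, since $\Delta t\lesssim\tau$. The prepare state is a discretized Gaussian amplitude on $\cO(\log M)$ qubits, which can be prepared efficiently (e.g.\ by QSVT or Grover--Rudolph) to accuracy $\epsilon$ at no Hamiltonian cost. The error of Lemma~\ref{lem:block} is $\sum|c_k|\epsilon_A=\cO(\epsilon_A)$, and adding the quadrature error $\cO(\epsilon)$ gives $\epsilon_g=\cO(\epsilon_A+\epsilon)$. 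Finally, the controlled simulation time is $\cO(M\Delta t)=\cO(\tau\sqrt{\log(1/\epsilon)})$ for the forward and backward controlled evolutions together, which completes the resource count.
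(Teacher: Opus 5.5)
Your plan matches the paper's proof step for step. You truncate and use the Poisson-summation quadrature of Lemma~\ref{lem:quadrature}. You then build an LCU whose select oracle is one call to $U_A$ between the multiplexed controlled evolutions $\sum_k|k\rangle\langle k|\otimes e^{\pm iHk\Delta t}$, with normalization $\alpha_A\Delta t\sum_k|g(k\Delta t)|=\cO(\alpha_A)$. Your Riemann-sum bound for that normalization is a clean version of the paper's estimate.

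The gap is the step you flagged and left open, and it is the only nontrivial estimate in the proof: a dimension-free operator-norm bound on the aliasing terms $\widehat\phi(2\pi\ell/\Delta t)$, $\ell\neq 0$. Your entrywise formula $\widehat\phi(\omega)_{kj}=\widehat g(\omega+E_j-E_k)A_{kj}$ is correct, but it does not control $\|\widehat\phi(\omega)\|$.
\begin{itemize}
\item The Frobenius bound costs $\|A\|_F\le\sqrt d$.
\item The Schur test costs $\max_k\sum_j|A_{kj}|$, which can also be of order $\sqrt d$.
\end{itemize}
Either one forces $\Delta t^{-1}\gtrsim\|H\|+\tau^{-1}\sqrt{n+\log(1/\epsilon)}$. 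That adds $\cO(\log n)$ ancillas, which the stated $b_g$ does not cover in general. Your alternative, writing $\widehat\phi(\omega)=\int g(t)e^{-i\omega t}e^{iHt}Ae^{-iHt}\,\rd t$, gives only $\|\widehat\phi(\omega)\|\le\|g\|_{L^1}$ after the triangle inequality, with no decay in $\omega$. The oscillation of $e^{-i\omega t}$ has to be exploited, and invoking spectral projections does not do this unless you supply a Schur-multiplier estimate, which the proposal does not.

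The missing idea is the analyticity of the Gaussian filters. The paper moves the contour to $t-i\,\mathrm{sgn}(\omega)y$, which gives
\begin{equation*}
\|\widehat\phi(\omega)\|\le e^{-|\omega|y}\,\bigl\|e^{\mathrm{sgn}(\omega)yH}Ae^{-\mathrm{sgn}(\omega)yH}\bigr\|\int_{\mathbb R}|g(t-i\,\mathrm{sgn}(\omega)y)|\,\rd t\lesssim C_g\,e^{2y\|H\|-|\omega|y}.
\end{equation*}
It then takes $y=\Theta((\|H\|+\tau^{-1})^{-1})$, so $\Delta t^{-1}=\Theta((\|H\|+\tau^{-1})\log(C_g/\epsilon))$ suffices.

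If you prefer your sharper threshold, complete the square instead. Since $\widehat f(\omega-\nu)=e^{-\tau^2\omega^2/2}e^{\tau^2\omega\nu}\widehat f(-\nu)$, you get the exact identity $\widehat\phi(\omega)=e^{-\tau^2\omega^2/2}e^{\tau^2\omega H}A_fe^{-\tau^2\omega H}$ for $g=f$. Hence $\|\widehat\phi(\omega)\|\le e^{-\tau^2|\omega|(|\omega|-4\|H\|)/2}$. For $g=\tilde f$ the same holds with $\omega$ replaced by $\omega-\beta/(2\tau^2)$ and an extra factor $C_{\tilde f}$. This is the same contour shift with the particular choice $y=\tau^2|\omega|$. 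It justifies exactly your condition $2\pi/\Delta t\ge 4\|H\|+\cO(\tau^{-1}\sqrt{\log(1/\epsilon)})$ and your value $M=\cO(\tau\|H\|\sqrt{\log(1/\epsilon)}+\log(1/\epsilon))$. With either estimate in place, the rest of your argument goes through.
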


Here the controlled Hamiltonian simulation time means the maximum evolution time
appearing in the multiplexed controlled evolutions
$\sum_j |j\rangle\langle j|\otimes e^{\pm iHt_j}$. We treat the LCU preparation oracle for the Gaussian weights and the associated
reversible arithmetic as standard primitives; their overheads are polylogarithmic
in $M$ and $1/\epsilon$ for the discretization used below, and are suppressed in
the stated bounds.

\begin{proof}[Proof of Theorem \ref{thm:Afblock}]
We divide the proof into four steps. 

\medskip

\noindent \emph{Step 1: truncation of the integral.} Define the truncated integral
\begin{equation*}
    A_g^{(T)}:=\int_{-T}^{T} g(t)\,e^{iHt}Ae^{-iHt} \,\rd t\,.   
\end{equation*}
Since unitary conjugation preserves the operator norm and $\|A\|\le 1$,
\[
\|A_g-A_g^{(T)}\|
\le
\int_{|t|>T} |g(t)|\,dt .
\]
For $g=f$ and $g=\tilde f$, we have, by \eqref{est:gauss1} in Lemma \ref{lem:gaussian-filter-estimates},
\begin{align*}
    \|A_g-A_g^{(T)}\|
\le
\frac{2C_g}{\sqrt{2\pi}\tau}\int_T^\infty e^{-t^2/(2\tau^2)}\,dt
=\cO\!\left(C_g e^{-T^2/(2\tau^2)}\right). 
\end{align*}
Therefore, choosing
\begin{align} \label{eq:truncationtime}
    T= \mc{O}\!\left(\tau\sqrt{\log(C_g/\epsilon)}\right)
\end{align}
makes the truncation error at most $\epsilon/3$. In the regime $\tau=\Omega(\beta)$,
we have $C_g=\cO(1)$ for both $g=f$ and $g=\tilde f$, and then 
\begin{align*}
    T= \cO\!\left(\tau\sqrt{\log(1/\epsilon)}\right). 
\end{align*}

\medskip

\noindent
\emph{Step 2: discretization by a uniform grid.}
Let
\begin{equation*}
    t_j:=-T+j\Delta t,\qquad j=0,\dots,M-1,
\qquad
\Delta t:=\frac{2T}{M},
\end{equation*}
and define the discretized operator
\begin{equation} \label{eq:discreag}
    \widetilde A_g
:=
\Delta t\sum_{j=0}^{M-1} g(t_j)\,e^{iHt_j}Ae^{-iHt_j}.
\end{equation}
For convenience, we assume $M = 2^m$ for an integer $m > 0$. Since $T=M\Delta t/2$, the grid $\{t_j\}_{j=0}^{M-1}$ coincides with
$\{k\Delta t\}_{k=-M/2}^{M/2-1}$ in Lemma~\ref{lem:quadrature} upon the relabeling $k=j-M/2$. 

We apply the quadrature bound of Lemma~\ref{lem:quadrature} to the operator-valued function
\begin{equation*}
    \phi(t):=g(t)e^{iHt}Ae^{-iHt}.
\end{equation*}
Since $g$ is a Gaussian or a complex-shifted Gaussian and $\|H\|<\infty$,
this $\phi$ is Schwartz class in operator norm.
To bound the aliasing term in operator norm, we use analyticity rather than an
entrywise Bohr-frequency estimate. For $y>0$ and $\omega\neq 0$, shift the
contour in the definition of $\widehat\phi(\omega)$ to
$t-i\,\mathrm{sgn}(\omega)y$. Since the two Gaussian filters are entire,
Cauchy's theorem gives
\begin{equation*}
\widehat\phi(\omega)
=e^{-|\omega|y}\int_{\mathbb R} e^{-i\omega t}
g(t-i\,\mathrm{sgn}(\omega)y)
e^{iH(t-i\,\mathrm{sgn}(\omega)y)}
A
e^{-iH(t-i\,\mathrm{sgn}(\omega)y)}
\,\rd t .
\end{equation*}
Writing $\sigma=\mathrm{sgn}(\omega)$, we have
$$\|e^{\sigma yH}Ae^{-\sigma yH}\|\le e^{2y\|H\|}\|A\|.$$ Moreover, the shifted
Gaussian integrals satisfy 
\begin{align*}
    \int_{\mathbb R}|f(t-i\sigma y)|\,\rd t
    &= e^{y^2/(2\tau^2)},\\
    \int_{\mathbb R}|\tilde f(t-i\sigma y)|\,\rd t
    &= \exp\!\left(\frac{(\beta/2+\sigma y)^2}{2\tau^2}\right)
    \lesssim C_{\tilde f},
\end{align*}
for $y = \mc{O}(\tau)$ and $\tau=\Omega(\beta)$. Hence, for both $g=f$ and $g=\tilde f$,
\begin{equation*}
    \|\widehat\phi(\omega)\|
    \lesssim C_g\,e^{2y\|H\|}e^{-|\omega|y}.
\end{equation*}
Choosing
$y=\Theta((\|H\|+\tau^{-1})^{-1})$ therefore gives
\begin{equation*}
\|\widehat\phi(\omega)\|\lesssim C_g e^{-y|\omega|}.
\end{equation*}
Applying Lemma~\ref{lem:quadrature}, the aliasing term is bounded by
\begin{equation*}
 \sum_{n\neq 0}\left\|\widehat\phi\!\left(\frac{2\pi n}{\Delta t}\right)\right\|
\lesssim C_g\sum_{n\ge 1}e^{-2\pi n y/\Delta t}.
\end{equation*}
Thus the choice
\begin{equation*}
    \Delta t^{-1}
    =\Theta\!\left((\|H\|+\tau^{-1})\log(C_g/\epsilon)\right)
\end{equation*}
makes the aliasing error at most $\epsilon/3$. Moreover, the truncation error term in Lemma~\ref{lem:quadrature} is bounded by
\begin{equation*}
    \Delta t\sum_{|j|\ge M/2}\|\phi(j\Delta t)\|
\le
\Delta t\sum_{|j|\ge M/2}|g(j\Delta t)|
=\cO(\epsilon),     
\end{equation*}
if 
\begin{equation*}
T=\cO\!\left(\tau\sqrt{\log(C_g/\epsilon)}\right),
\end{equation*}
which is already ensured by Step 1.
Thus, we have proved 
\begin{equation*}
    \|\widetilde A_g-A_g\|\le \epsilon.     
\end{equation*}

Further, a direct computation gives
\begin{equation*}
    M=\frac{2T}{\Delta t}
= \cO\!\left(
(\tau\|H\|+1)\log^{3/2}(C_g/\epsilon)
\right). 
\end{equation*}
Since $C_g=\cO(1)$ in the regime $\tau=\Omega(\beta)$, this implies
$\log M=\cO(\log(\tau\|H\|+2)+\log\log(1/\epsilon))$.

\medskip

\noindent
\emph{Step 3: LCU block encoding of the discretized operator.} For each grid point $t_j$, define
\begin{equation} \label{def:wj}
    W_j:=(I_{\rm anc}\otimes e^{iHt_j})\,U_A\,(I_{\rm anc}\otimes e^{-iHt_j}).     
\end{equation}
Since conjugation by unitaries preserves block-encoding parameters,
each $W_j$ is an $(\alpha_A,b_A,\epsilon_A)$-block encoding of $e^{iHt_j}Ae^{-iHt_j}$.
We write $c_j:=\Delta t\,g(t_j)$ and recall $\widetilde A_g$ in \eqref{eq:discreag}. 
By linear combinations of unitaries (Lemma~\ref{lem:block}), there is a block encoding of $\widetilde A_g$ with normalization 
\begin{equation*}
    \widetilde \alpha_g :=
\alpha_A\sum_{j=0}^{M-1}|c_j| =
\alpha_A\,\Delta t\sum_{j=0}^{M-1}|g(t_j)|, 
\end{equation*}
ancilla cost $b_A+ \log_2 M$, and implementation error $\widetilde \epsilon_g := \epsilon_A\,\Delta t\sum_{j=0}^{M-1}|g(t_j)|$. 

It remains to estimate $\widetilde \alpha_g$ and $\widetilde \epsilon_g$. It suffices to consider the quadrature estimate of $|g(t)|$. 
Using the same choices of $T$ and $\Delta t$ as above, we can derive 
\begin{equation*}
    \Delta t\sum_{j=0}^{M-1}|g(t_j)| = \|g\|_{L^1} + \cO(\epsilon)\,,     
\end{equation*}
which readily gives, by \eqref{est:gauss3}, 
\begin{equation*}
    \widetilde \alpha_g = \alpha_A(\|g\|_{L^1}+ \cO(\epsilon)) = \alpha_A(C_g + \cO(\epsilon)), \quad \widetilde \epsilon_g = (C_g + \cO(\epsilon))\epsilon_A. 
\end{equation*}
Combining this with the discretization error $\|\widetilde A_g-A_g\|\le \epsilon$ yields an $(\alpha_g,b_g,\epsilon_g)$-block encoding of $A_g$ with desired parameters. 

\medskip 

\noindent
\emph{Step 4: query complexity and controlled evolutions.}
The select oracle takes the form
\begin{equation*}
    U_{\rm sel} = \sum_{j=0}^{M-1}|j\rangle\langle j|\otimes e^{i\phi_j}W_j, \qquad c_j=|c_j|e^{i\phi_j} = |\Delta t\,g(t_j)|e^{i\phi_j}.
\end{equation*}
Substituting the definition \eqref{def:wj} of $W_j$, we obtain the factorization
\begin{equation*}
   U_{\rm sel}
=
\Big(\sum_j |j\rangle\langle j|\otimes I_{\rm anc}\otimes e^{i\phi_j}e^{iHt_j}\Big)
(I\otimes U_A)
\Big(\sum_j |j\rangle\langle j|\otimes I_{\rm anc}\otimes e^{-iHt_j}\Big). 
\end{equation*}
Therefore, the whole construction uses exactly one query to $U_A$.
The required controlled Hamiltonian evolutions only involve times $t_j\in[-T,T]$, and hence the maximum evolution time is of the same order as the truncation time $T$ in \eqref{eq:truncationtime}. 
Equivalently, the stated controlled simulation time is the maximum time in the
multiplexed controlled evolutions as above. 
This proves the theorem.
\end{proof}

\subsection{Block encoding of \texorpdfstring{$c\sqrt{I+uA_f}$}{c sqrt(I+uA\_f)} and \texorpdfstring{$c\sqrt{I+uA_{\tilde f}}$}{c sqrt(I+uA\_tilde-f)}}
\label{app:implementation-sqrtAf}

Having obtained an efficient block encoding of the filtered operator $A_g$ (which represents either
$A_f$ or $A_{\tilde f}$) in Theorem~\ref{thm:Afblock}, our next task is to build block encodings
of the square-root operators
\begin{equation*}
    c\sqrt{I+uA_g}\,,\quad g \in \{f,\tilde{f}\}\,,
\end{equation*}
from a block encoding of $A_g$.

We record the Taylor series of the square root function for later use. The binomial series gives, for complex $z \in \mathbb{C}$ with $|z|<1$, 
\begin{equation} \label{eq:basicseries}
    \sqrt{1+z}=\sum_{k=0}^{\infty}\binom{1/2}{k}z^k,
\end{equation}
where the coefficients satisfy $\bigl|\binom{1/2}{k}\bigr|\le 1$ for all $k\ge 0$. The absolute convergence of the series \eqref{eq:basicseries} is guaranteed by 
\begin{equation} \label{eq:Taylor-bound}
    \sum_{k=0}^{\infty}\left|\binom{1/2}{k}\right| r^k
=2-\sqrt{1-r},\qquad r\in[0,1). 
\end{equation}
Truncating \eqref{eq:basicseries} at order $K$ yields the polynomial
\[
P_K(z):=\sum_{k=0}^{K}\binom{1/2}{k}z^k,
\]
with error
\begin{equation} \label{eq:Taylor-truncation}
    \bigl|\sqrt{1+z}-P_K(z)\bigr|
\le \sum_{k=K+1}^{\infty}|z|^k
=\frac{|z|^{K+1}}{1-|z|},\qquad |z|<1.
\end{equation}
Hence, for $|z|\le r<1$, choosing 
\begin{equation*}
    K= \cO\!\left(\frac{\log(1/\epsilon)}{\log(1/r)}\right)
\end{equation*}
ensures the truncation error is $\cO(\epsilon)$.

\subsubsection{The QSVT construction for real filters}
\label{app:QSVT}

When $g = f$ is real, the filtered observable $A_f$ is Hermitian, so one can apply QSVT~\cite{Gily_n_2019} to implement $c\sqrt{I + uA_f}$. In particular, given a block encoding of $A_f$, this requires only a constant number of 
additional ancilla qubits. 

\begin{lemma}
\label{lem:qsvt-sqrt}
Suppose $U_B$ is an $(\alpha,b,\epsilon_0)$-block encoding of a Hermitian matrix $B$.
Let
\begin{equation} \label{eq:range}
    r:=|u|\alpha\le \frac{1}{2}, \qquad
 c\in (0, \frac{1}{4}].
\end{equation}
Then, for any given $\epsilon'\in(0, \tfrac{1}{8}]$, there exists a $(1,b+2,\epsilon)$-block encoding of $c\sqrt{I+uB}$ with
\begin{equation*}
    \epsilon = 4d\sqrt{\epsilon_0/\alpha}+2\epsilon',
\end{equation*}
using
\begin{align*}
&d= \cO(\log(1/\epsilon')) \text{ queries to } U_B \text{ and } U_B^\dagger,\\
&\text{one query to controlled-}U_B, \text{ and}\\
&\cO((b+1)d) \text{ additional one- and two-qubit gates.}
\end{align*}
\end{lemma}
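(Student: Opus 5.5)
We will construct the block encoding by standard QSVT for Hermitian (even/odd-free, i.e., real definite-parity-combined) polynomial eigenvalue transformations. First, we note that the exact block encoding $U_B$ encodes some $\tilde B/\alpha$ with $\|\tilde B - B\|\le\epsilon_0$; we will design a polynomial acting on $x=\tilde B/\alpha$ (spectrum in $[-1,1]$) and only at the end compare $P(\tilde B/\alpha)$ with $P(B/\alpha)$. The target scalar function is $h(x)=c\sqrt{1+u\alpha x}$ on $x\in[-1,1]$. Since $r=|u|\alpha\le 1/2$, the argument $z=u\alpha x$ satisfies $|z|\le 1/2$, so the binomial series \eqref{eq:basicseries} converges geometrically. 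Truncating at order $K=\cO(\log(1/\epsilon'))$ via \eqref{eq:Taylor-truncation} gives a real polynomial $P_K$ of degree $d=K$ with $|h(x)-cP_K(u\alpha x)|\le\epsilon'$ on $[-1,1]$. Using \eqref{eq:Taylor-bound}, $|cP_K(u\alpha x)|\le c(2-\sqrt{1-r})\le 2c\le 1/2$ on $[-1,1]$, so the polynomial is bounded by $1/2$ there; this slack is what QSVT needs, since the polynomial must be bounded by $1$ on the whole interval $[-1,1]$, which is automatic here because we approximate on all of $[-1,1]$ rather than a subinterval.

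Second, the polynomial $p(x):=cP_K(u\alpha x)$ is real but of indefinite parity. We split $p=p_{\rm even}+p_{\rm odd}$; each part is real, has definite parity, and is bounded by $|p|$-type bounds, more precisely $|p_{\rm even}(x)|,|p_{\rm odd}(x)|\le \tfrac12(|p(x)|+|p(-x)|)\le 1/2$. QSVT for real definite-parity polynomials bounded by $1$ gives, for each part, a block encoding using $d$ alternating applications of $U_B,U_B^\dagger$ and one extra ancilla for the projector-controlled phases. Taking the linear combination (Lemma~\ref{lem:block}) of the two parity parts with weights $1,1$, or more economically the standard trick of one controlled-$U_B$ plus a single-qubit LCU that combines even and odd sequences sharing the query sequence, yields a block encoding of $p_{\rm even}+p_{\rm odd}=p$. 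To land with normalization exactly $1$ rather than $2$, we instead apply QSVT to $2p_{\rm even}$ and $2p_{\rm odd}$ (each bounded by $1$ since $|p|\le1/2$) and average them with a Hadamard-controlled LCU; this is where one qubit for the LCU and one for the QSVT phases give the $b+2$ ancillas, and the single controlled-$U_B$ accounts for the extra odd-degree application. Gate count $\cO((b+1)d)$ comes from the multi-controlled reflections about $|0^b\rangle$ in each of the $d$ steps.

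Third, the error. The approximation error $\epsilon'$ is incurred once per parity part, giving at most $2\epsilon'$ after averaging properly (conservatively $2\epsilon'$). The block-encoding imperfection is handled by the robustness of QSVT (Gily\'en et al., Lemma 22): if $\|\tilde B/\alpha-B/\alpha\|\le\epsilon_0/\alpha$, then degree-$d$ QSVT outputs differ by at most $4d\sqrt{\epsilon_0/\alpha}$. Combining by the triangle inequality gives $\epsilon=4d\sqrt{\epsilon_0/\alpha}+2\epsilon'$. The main obstacle I expect is bookkeeping in the parity-combination step: ensuring the combined circuit genuinely realizes $p$ with normalization $1$ and only two extra ancillas, and that the robustness bound applies to the combined (not just each parity) circuit; I would handle this by applying the robustness lemma to each parity circuit separately and noting that averaging is a contraction.
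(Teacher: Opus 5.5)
Your proposal is correct and follows essentially the paper's route. The paper gets the bounded polynomial from Corollary~66 of Gily\'en et al.\ (your direct Taylor truncation does the same job since $r\le 1/2$) and then cites their Theorem~56, whose proof is exactly your construction: split into $2p_{\rm even}$ and $2p_{\rm odd}$, apply real QSVT to each, combine with a Hadamard LCU, and bound the encoding error by the $4d\sqrt{\epsilon_0/\alpha}$ robustness lemma. One bookkeeping correction: the approximation error $\epsilon'$ enters only once, since the split $p=p_{\rm even}+p_{\rm odd}$ is exact, and the paper's second $\epsilon'$ is the phase-computation precision $\delta$ in Theorem~56 — so your bound $2\epsilon'$ holds, but not because the error is incurred once per parity part.
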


\begin{proof}
We apply Theorem~56 and Corollary~66 of~\cite{Gily_n_2019}. By definition, $U_B$ encodes
$\widetilde B:=B/\alpha$, whose eigenvalues lie in $[-1,1]$.
We apply the function 
\begin{equation*}
    f(x)=c\sqrt{1+u\alpha x}
\end{equation*}
to $\widetilde B$, so that $f(\widetilde B)=c\sqrt{I+uB}$.

We construct a polynomial approximation of $f$ on $[-1,1]$ via \cite[Corollary~66]{Gily_n_2019}, which states: if $f(x)=\sum_{\ell=0}^{\infty} a_\ell x^\ell$ converges for $|x|\le 1+\delta$ and
$\sum_{\ell=0}^{\infty}|a_\ell|(1+\delta)^\ell \le M$, then $f$ can be $\epsilon'$-approximated on $[-1,1]$ by a polynomial $P_d$ of degree
\begin{equation*}
    d= \cO\!\left(\delta^{-1}\log(M/\epsilon')\right)
\end{equation*}
with $\|P_d\|_{L^\infty([-1,1])}\le M$. The Taylor coefficients of $f$ are
\[
a_\ell = c\binom{1/2}{\ell}(u\alpha)^\ell,
\]
and the radius of convergence is $1/r>2$ since $r = |u| \alpha \le 1/2$ by \eqref{eq:range}. Hence, we can take $\delta=1$. Using \eqref{eq:Taylor-bound}, we find 
\begin{equation*}
    M=\sum_{\ell=0}^{\infty}|a_\ell|2^\ell
= c\sum_{\ell=0}^{\infty}\left|\binom{1/2}{\ell}\right|(2r)^\ell
= c\bigl(2-\sqrt{1-2r}\bigr)
\le 2c\le \frac12. 
\end{equation*}
Therefore, there exists a real polynomial $P_d$ of degree
$d = \cO(\log(1/\epsilon'))$ such that
\[
\|f-P_d\|_{L^\infty([-1,1])}\le \epsilon',
\qquad
\|P_d\|_{L^\infty([-1,1])}\le \frac{1}{2}.
\]

By \cite[Theorem 56]{Gily_n_2019}, applying QSVT to the $(\alpha,b,\epsilon_0)$-block encoding $U_B$ yields a
$(1,b+2,\epsilon'')$-block encoding of $P_d(\widetilde B)$ with $\epsilon'' = 4d\sqrt{\epsilon_0/\alpha}+\epsilon'$ using
\begin{align*}
&d \text{ queries to } U_B \text{ and } U_B^\dagger,\\
&\text{one query to controlled-}U_B, \text{ and}\\
&\cO((b+1)d) \text{ additional gates.}
\end{align*}
Finally, since $\|f-P_d\|_{L^\infty([-1,1])}\le \epsilon'$, the triangle inequality gives
$\epsilon = \epsilon''+\epsilon'
= 4d\sqrt{\epsilon_0/\alpha}+2\epsilon'$. 
\end{proof}

\subsubsection{The LCU construction for general filters}
When the filtered operator is non-Hermitian (for example, $A_{\tilde f}$), QSVT is not directly
applicable for constructing a block encoding of the associated matrix function. Instead, we use a
Taylor-series expansion together with LCU to implement the block encoding of $\sqrt{I + uB}$ for a general operator $B$. Alternative quantum eigenvalue-transformation methods may also be applicable in this setting; see, for example, \cite{Takahira_2020,Low_2024,an2024laplacetransformbasedquantum}.

\begin{theorem}
\label{thm:sqrt-be-lcu}
Let $B$ be an operator with $\|B\|\le \alpha$ and an $(\alpha,b,\delta)$-block encoding $U_B$.
Let $u\in\mathbb C$ and $0<c\le 1$ satisfy 
\begin{equation} \label{eq:pararange}
    r:=|u|\alpha \le \frac12,
\qquad
|u|(\alpha+\delta)<1. 
\end{equation}
Then, for any given $\epsilon>0$, one can construct a
$(\gamma,b',\epsilon')$-block encoding of $c\sqrt{I+uB}$ with 
\begin{equation*}
    \gamma = c\bigl(2-\sqrt{1-r}\bigr) = \cO(c),
\qquad b' = \cO\!\left(\frac{b\log(1/\epsilon)}{\log(1/r)}\right),
\end{equation*}
and
\begin{equation} \label{eq:totalerror}
    \epsilon'=\epsilon + \cO(\delta).
\end{equation}
The construction uses
\begin{align*}
K=\cO\!\left(\frac{\log(1/\epsilon)}{\log(1/r)}\right) \text{ queries to } U_B.
\end{align*}
\end{theorem}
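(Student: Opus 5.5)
We will implement the truncated binomial series $c\,P_K(uB)=c\sum_{k=0}^{K}\binom{1/2}{k}u^kB^k$ by a linear combination of block encodings of the powers $B^k$. The error is then split into a Taylor-truncation part, controlled by \eqref{eq:Taylor-truncation}, and a propagated block-encoding error coming from $\delta$.

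\emph{Step 1 (truncation).} Since $\|uB\|\le r\le 1/2$, the matrix binomial series converges absolutely and equals the principal square root $\sqrt{I+uB}$. This holds because all eigenvalues of $uB$ lie in the disk $|z|\le r$, where the scalar series converges and defines the branch used in the paper. The operator-norm analogue of \eqref{eq:Taylor-truncation} gives
\[
\bigl\|c\sqrt{I+uB}-cP_K(uB)\bigr\|\le \frac{c\,r^{K+1}}{1-r}\le 2c\,r^{K+1}.
\]
Choosing $K=\cO(\log(1/\epsilon)/\log(1/r))$ makes this term at most $\epsilon/2$.

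\emph{Step 2 (powers and LCU).} For each $k\le K$, we apply $k$ copies of $U_B$ on disjoint ancilla registers and invoke the product rule of Lemma~\ref{lem:block}. This yields an $(\alpha^k,kb,\epsilon_k)$-block encoding of $B^k$ with $\epsilon_k=(\alpha+\delta)^k-\alpha^k$. The $k=0$ term is just the identity. We then apply the linear-combination rule of Lemma~\ref{lem:block} with coefficients $c_k=c\binom{1/2}{k}u^k$, so the phases of $u^k$ and of the binomial coefficients are absorbed into the select oracle. The resulting normalization is
\[
\gamma=\sum_k c\left|\binom{1/2}{k}\right||u|^k\alpha^k\le c\bigl(2-\sqrt{1-r}\bigr)
\]
by \eqref{eq:Taylor-bound}. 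The ancilla count is $\lceil\log_2(K+1)\rceil+Kb=\cO(bK)$, which matches $b'$. Reusing the same registers across branches, the total query count is $\cO(K)$ when the select oracle is realized by controlled applications of $U_B$ in a staircase (unary) fashion, so that branch $k$ triggers the first $k$ copies.

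\emph{Step 3 (error propagation).} This is the main obstacle, because the naive bound $\epsilon_k\le k\alpha^{k-1}\delta$ must be summed against coefficients $|u|^k$. The LCU error is
\[
\sum_{k}c\left|\binom{1/2}{k}\right||u|^k\bigl((\alpha+\delta)^k-\alpha^k\bigr)\le c\sum_k\bigl((|u|(\alpha+\delta))^k-r^k\bigr).
\]
The right-hand side is finite precisely because of the hypothesis $|u|(\alpha+\delta)<1$. It equals $c\bigl[(1-|u|(\alpha+\delta))^{-1}-(1-r)^{-1}\bigr]$, which is $\cO(|u|\delta)=\cO(\delta)$ once $\delta$ is small enough that $|u|(\alpha+\delta)$ stays bounded away from $1$, say at most $3/4$. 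Adding the truncation error $\epsilon/2$ gives $\epsilon'=\epsilon+\cO(\delta)$, as claimed in \eqref{eq:totalerror}. Finally, we would remark that $\gamma=\cO(c)$ since $r\le 1/2$.
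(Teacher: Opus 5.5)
Your construction is the same as the paper's: truncated binomial series, the product rule giving $B^k$ with error $(\alpha+\delta)^k-\alpha^k$, an LCU with the phases placed in the select oracle, and a staircase controlled-$U_B$ select oracle giving $K$ queries and $Kb+\lceil\log_2(K+1)\rceil$ ancillas. The one place it falls short of the stated theorem is Step 3. You replace $\bigl|\binom{1/2}{k}\bigr|$ by $1$, which yields the error bound
\[
\frac{c\,|u|\delta}{(1-r)(1-r-|u|\delta)}.
\]
This is not $\cO(\delta)$ uniformly over the theorem's hypothesis $|u|(\alpha+\delta)<1$. As $r+|u|\delta\to 1$, the denominator vanishes while $\delta$ stays bounded, since $|u|\delta\to 1-r\ge 1/2$. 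You repair this by additionally assuming $|u|(\alpha+\delta)\le 3/4$. That assumption is not among the hypotheses, so as written you prove a slightly weaker statement. It would suffice for the paper's application in Lemma~\ref{lem:K1K2-implementation}, where $\delta$ is small, but not for the theorem as stated.

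The fix is what the paper does: keep the exact coefficients and apply the generating function \eqref{eq:Taylor-bound}, which you already used for $\gamma$, now at $x=|u|(\alpha+\delta)<1$. All terms are nonnegative, so
\[
c\sum_{k=0}^{K}\left|\binom{1/2}{k}\right||u|^k\bigl((\alpha+\delta)^k-\alpha^k\bigr)\le c\bigl(\sqrt{1-r}-\sqrt{1-r-|u|\delta}\bigr)=\frac{c\,|u|\delta}{\sqrt{1-r}+\sqrt{1-r-|u|\delta}}\le \sqrt{2}\,c\,|u|\delta .
\]
This holds on the whole range $|u|\delta<1-r$. The decay $\bigl|\binom{1/2}{k}\bigr|=\Theta(k^{-3/2})$ is what keeps the sum bounded up to the boundary, and the crude bound $\bigl|\binom{1/2}{k}\bigr|\le 1$ discards it.

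A case split also works. If $|u|(\alpha+\delta)>3/4$, then $|u|\delta>1/4$. In that case the block-encoding error relative to $cP_K(uB)$ is at most $2\gamma<4c<16c|u|\delta$, which is already $\cO(\delta)$.
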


\begin{proof}
We expand $c\sqrt{I+uB}
= c\sum_{k=0}^{\infty}\binom{1/2}{k}u^k B^k$
and truncate at order $K$, defining
\[
P_K := c\sum_{k=0}^{K}\binom{1/2}{k}u^k B^k.
\]
By \eqref{eq:Taylor-truncation} and $|u|\|B\|\le |u|\alpha=r$, we have
\[
\|c\sqrt{I+uB}-P_K\|
\le c\,\frac{r^{K+1}}{1-r}.
\]
Thus the choice of $K=\cO\!\left(\frac{\log(1/\epsilon)}{\log(1/r)}\right)$ ensures the truncation error is at most $\epsilon$.

By Lemma~\ref{lem:block} (product of block encodings), the $k$-th power $B^k$ admits an
\[
(\alpha^k,kb,(\alpha+\delta)^k-\alpha^k)
\]
block encoding. Expressing $P_K$ as an LCU and applying Lemma~\ref{lem:block} again yields the normalization constant
\begin{equation*}
\gamma = c\sum_{k=0}^{K}\left|\binom{1/2}{k}\right| (|u|\alpha)^k \le c\sum_{k=0}^{\infty}\left|\binom{1/2}{k}\right| r^k = c\bigl(2-\sqrt{1-r}\bigr)\,,    
\end{equation*}
the ancilla count
\begin{equation*}
    b' = Kb + \lceil \log_2(K+1)\rceil
= \cO\!\left(\frac{b\log(1/\epsilon)}{\log(1/r)}\right),
\end{equation*}
and the block encoding error, by $|u|(\alpha+\delta)<1$ and \eqref{eq:Taylor-bound}, 
\begin{align*}
&c\sum_{k=0}^{K}\left|\binom{1/2}{k}\right| |u|^k\bigl((\alpha+\delta)^k-\alpha^k\bigr) \\
&\le c\Bigl(\bigl(2-\sqrt{1-r-|u|\delta}\bigr)-\bigl(2-\sqrt{1-r}\bigr)\Bigr) \\
&= c\bigl(\sqrt{1-r}-\sqrt{1-r-|u|\delta}\bigr)
= \cO\!\left(\frac{c|u|\delta}{\sqrt{1-r}}\right).
\end{align*}
Since $r \leq 1/2$, $c\le 1$, and $|u|\delta < 1 - r$ by \eqref{eq:pararange}, we conclude with $\mathcal{O}(\delta)$.

Finally, the select oracle $U_{\mathrm{sel}}=\sum_{k=0}^{K} |k\rangle\langle k|\otimes U_B^k$ 
is implemented using $K$ independent $b$-qubit ancilla blocks: for each
$i=1,\dots,K$, apply $U_B$ to the system and the $i$-th ancilla block,
controlled on the predicate $k\ge i$. This uses exactly $K$ queries to $U_B$.
Combining the truncation and encoding errors completes the proof of \eqref{eq:totalerror}. 
\end{proof}

\subsection{Approximate Implementation of Quantum Instruments } \label{app:implementation-POVM}

\begin{theorem}[Approximate implementation of quantum instruments]\label{thm:approx-channel-be}
Let $s = \mathcal{O}(1)$, and consider a quantum instrument $\{\mathcal{M}_i\}_{i=0}^{s-1}$ given by the Kraus operators
\begin{equation*}
    \mathcal M_i(\rho)=K_i \rho K_i^\dagger,
\qquad
\sum_{i=0}^{s-1} K_i^\dagger K_i = I.
\end{equation*}
Suppose that for each $i$, an $(\alpha_i, b_i, \epsilon)$-block encoding of $K_i$ is available, denoted by $U_i$, with $\epsilon\in(0,1/2)$. Let 
$\alpha = \max_i \alpha_i$ and $b= \max_i b_i$. Then there exists a quantum circuit that implements an approximation 
$\{\widetilde{\mathcal M}_i\}_{i=0}^{s-1}$ such that, for any density matrix $\rho$,
\begin{equation*}
    \sum_{i = 0}^{s-1} \|\widetilde{\mathcal M}_i(\rho)-\mathcal M_i(\rho)\|_1 = \mc{O}(\epsilon).
\end{equation*}
The circuit uses
\begin{align*}
&\mc{O}(\alpha \log(1/\epsilon)) \text{ queries to the block encodings } U_i, \text{ and}\\
&b+ \mc{O}(1) \text{ ancilla qubits.}
\end{align*}
Moreover, $\{\widetilde{\mathcal M}_i\}_{i=0}^{s-1}$ is a valid quantum instrument. That is, each branch $\widetilde{\mathcal M}_i$ is completely positive and $\sum_i \widetilde{\mathcal M}_i$ is trace-preserving. 
\end{theorem}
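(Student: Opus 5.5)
The plan is to build one isometry (a Stinespring dilation) whose output on a classical register $\ket{i}$ is approximately $K_i\ket{\psi}$, measure that register, and discard the rest. First I rescale every block encoding to a common normalization $\alpha$, padding ancillas to $b$. Then $U_i$ is an $(\alpha,b,\epsilon)$-block encoding of $K_i$, and its top-left block $\widetilde K_i:=\alpha(\bra{0^b}\otimes I)U_i(\ket{0^b}\otimes I)$ satisfies $\|\widetilde K_i-K_i\|\le\epsilon$. I next form the select unitary $V:=\sum_i\ket{i}\bra{i}\otimes U_i$, and a state preparation putting the index register in the uniform superposition $s^{-1/2}\sum_i\ket{i}$. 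Applied to $\ket{0^b}\ket{\psi}$ with the index register \emph{not} uncomputed, this gives
\begin{equation*}
W\ket{0}\ket{0^b}\ket{\psi}=\frac{1}{\sqrt{s}\,\alpha}\sum_i \ket{i}\ket{0^b}\widetilde K_i\ket{\psi}+\ket{\perp},
\end{equation*}
where $\ket{\perp}$ has the ancilla register orthogonal to $\ket{0^b}$.

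By the trace-preserving condition $\sum_iK_i^\dagger K_i=I$, the good part has squared norm exactly $1/(s\alpha^2)$, up to $\mathcal{O}(\epsilon)$, for every $\ket{\psi}$. So $W$ is, up to error, a block encoding of the isometry $\ket{\psi}\mapsto\sum_i\ket{i}K_i\ket{\psi}$ with a state-independent subnormalization $a=1/(\sqrt{s}\alpha)$. Since $s=\mathcal{O}(1)$, this subnormalization is $\Theta(1/\alpha)$. I then apply fixed-point oblivious amplitude amplification (in the robust form of \cite{Gily_n_2019}, via QSVT applied to $W$ with a polynomial approximating the sign/rescaling function on $[a,1]$). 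This uses $\mathcal{O}(\alpha\log(1/\epsilon))$ alternations of $W$ and $W^\dagger$, plus $\mathcal{O}(1)$ extra qubits. The result is a unitary $\widetilde W$ with $\|(\bra{0^b}\otimes I)\widetilde W(\ket{0^b}\otimes I)-\mathcal{V}\|=\mathcal{O}(\epsilon)$, where $\mathcal{V}=\sum_i\ket{i}\otimes K_i$ is the ideal isometry. The error accounting here has three parts:
\begin{itemize}
\item the amplification contributes its polynomial error;
\item the deviation $\|\widetilde K_i-K_i\|\le\epsilon$ is amplified by at most a constant factor, because the good-subspace amplitude is known and bounded away from zero relative to $a$;
\item if needed, $\epsilon$ is rescaled by a constant.
\end{itemize}

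To get a valid instrument, I define $\widetilde{\mathcal M}_i(\rho):=\mathrm{Tr}_{\rm anc}\bigl[(\ket{i}\bra{i}\otimes I)\widetilde W(\ket{0}\bra{0}\otimes\rho)\widetilde W^\dagger\bigr]$. Here all ancillas are traced out regardless of their value, and the index register is measured. Each branch is CP, and the sum is trace-preserving because $\widetilde W$ is unitary, so the instrument is automatically valid. For the error bound, write $\widetilde W\ket{0}\ket{\psi}=\mathcal{V}\ket{\psi}\otimes\ket{0}+\ket{e}$ with $\|\ket{e}\|=\mathcal{O}(\epsilon)$. The difference of the joint output states of the ideal and implemented dilations is then $\mathcal{O}(\epsilon)$ in trace norm. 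Measuring the index register and tracing out the ancilla contracts trace norm, and since the outputs are block diagonal in $i$, the trace norm of the difference equals $\sum_i\|\widetilde{\mathcal M}_i(\rho)-\mathcal M_i(\rho)\|_1$. Convexity extends the bound from pure to mixed $\rho$.

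The main obstacle is the robust amplification step. Its input is only an approximate block encoding, and the good-branch amplitude equals $a$ only up to $\mathcal{O}(\epsilon)$. I need to ensure the fixed-point/robust oblivious amplitude amplification tolerates this without blowing up the error beyond $\mathcal{O}(\epsilon)$. I would handle this by invoking \cite{Gily_n_2019}'s robust oblivious amplitude amplification, which applies to the $\mathcal{O}(\epsilon)$-perturbed encoding with a Lipschitz-type error bound. Alternatively, I would use QSVT with a polynomial equal to $\approx 1/a\cdot x$ on $[0,a(1+\mathcal{O}(\epsilon))]$, bounded by $1$, of degree $\mathcal{O}(\alpha\log(1/\epsilon))$. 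This would give the claimed query count.
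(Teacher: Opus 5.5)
Your proposal is correct and follows essentially the same route as the paper: a common rescaling to normalization $\alpha$, a select unitary with the index register left in place so that $W$ projectively encodes $\frac{1}{\sqrt{s}\,\alpha}\sum_i |i\rangle\otimes\widetilde K_i$, robust oblivious amplitude amplification with $\mathcal{O}(\alpha\log(1/\epsilon))$ queries, and then measuring the index register and tracing out the workspace, with validity from unitarity and the error bound from trace-norm contractivity plus block-diagonality in the outcome register. The paper handles the robustness concern you flag by noting that $W$ is an \emph{exact} projected encoding of $Y/(\alpha\sqrt{s})$, $Y:=\sum_i |i\rangle\otimes\widetilde K_i$, whose singular values all lie in $\frac{1}{\alpha\sqrt{s}}[1-\mathcal{O}(\epsilon),1+\mathcal{O}(\epsilon)]$, so QSVT outputs the polar isometry $Y(Y^\dagger Y)^{-1/2}$, which is separately $\mathcal{O}(\epsilon)$-close to $\sum_i|i\rangle\otimes K_i$; this also makes your alternative (a polynomial $\approx x/a$ on all of $[0,a(1+\mathcal{O}(\epsilon))]$, bounded by $1$) unnecessary, and in fact that polynomial would need degree polynomial in $1/\epsilon$ rather than $\mathcal{O}(\alpha\log(1/\epsilon))$.
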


\begin{proof} 
Let $a_{\mathrm{anc}}$ denote the common ancilla register of the rescaled block 
encodings $\widehat{U}_i$, and write
\begin{align} \label{eq:ancilaspace}
    |0\rangle_{\mathrm{anc}} := |0^{\maf b}\rangle_{a_{\mathrm{anc}}}, \quad 
    \maf{b} = b + \mathcal{O}(1).
\end{align}
We first introduce the ideal Stinespring isometry for the instrument:
\begin{align} \label{idealiso}
    V\colon |\psi\rangle \longmapsto \sum_{i=0}^{s-1} |i\rangle_{a_1}|0\rangle_{\rm anc} K_i |\psi\rangle.
\end{align}
Here, $a_1$ denotes an $s$-dimensional outcome register with computational basis $\{|i\rangle\}_{i=0}^{s-1}$, while $a_{\rm anc}$ is a workspace ancilla register large enough to accommodate the 
block-encoding and amplification subroutines (see \eqref{eq:ancilaspace}). Measuring $a_1$ in the computational basis 
realizes the instrument $\{\mathcal{M}_i\}_{i=0}^{s-1}$.
Moreover, $\sum_i K_i^\dagger K_i = I$ implies 
$V^\dagger V = I$.

\medskip
\noindent
\emph{Step 1: Common rescaling of the block encodings.}
To facilitate the LCU and amplification steps below, we rescale the block encodings $U_i$ of $K_i$ to a common normalization constant. 

For each $i$, pad $U_i$ with idle 
ancillas so that all block encodings act on the same number $b$ of ancilla qubits.
After padding, by a standard normalization-dilution gadget, we convert each
\((\alpha_i,b,\epsilon)\)-block encoding into an
\((\alpha,\mathfrak b,\epsilon)\)-block encoding of \(K_i\), where
\(\alpha=\max_i\alpha_i\). Equivalently, the encoded block is multiplied by
\(\alpha_i/\alpha\), while the remaining amplitude is routed to an orthogonal
failure flag.
Denoting the resulting unitary for the block encoding by 
$\widehat{U}_i$, it satisfies
\begin{align} \label{eq:apprkarus}
  \widetilde{K}_i := \alpha\,(\langle 0^{\maf{b}}| \otimes I)\,
    \widehat{U}_i\,(|0^{\maf{b}}\rangle \otimes I)\,,\quad   \left\| K_i - \widetilde{K}_i  \right\| \leq \epsilon.
\end{align}

\medskip
\noindent
\emph{Step 2: Multiplexing via LCU.} We now construct a block encoding of the Stinespring-type operator with $\widetilde{K}_i$ defined in \eqref{eq:apprkarus}:
\begin{align*}
    Y \colon |\psi\rangle \longmapsto  \sum_{i=0}^{s-1} |i\rangle_{a_1}|0\rangle_{\mathrm{anc}}\widetilde{K}_i |\psi\rangle
\end{align*}
via the individual block encodings $\widehat{U}_i$. 

Define the \emph{prepare} and 
\emph{select} oracles by
\begin{align*}
    U_{\mathrm{prep}}\,|0\rangle_{a_1} := \frac{1}{\sqrt{s}}\sum_{i=0}^{s-1} |i\rangle_{a_1},
    \qquad
    U_{\mathrm{sel}} := \sum_{i=0}^{s-1} |i\rangle\langle i|_{a_1} \otimes \widehat{U}_i,
\end{align*}
and set
\begin{align*}
    W := U_{\mathrm{sel}}\,(U_{\mathrm{prep}} \otimes I).
\end{align*}
For every input state $|\psi\rangle \in \mathcal{H}_S$, a direct computation gives
\begin{align*}
 W\,|0\rangle_{a_1}|0\rangle_{\mathrm{anc}}|\psi\rangle
    = \frac{1}{\alpha\sqrt{s}} \sum_{i=0}^{s-1}    |i\rangle_{a_1}|0\rangle_{\mathrm{anc}}\widetilde{K}_i|\psi\rangle 
    + |\perp(\psi)\rangle,
\end{align*}
where $|\perp(\psi)\rangle$ is orthogonal to 
$\mathrm{span}\{|i\rangle_{a_1}|0\rangle_{\mathrm{anc}} \otimes \mathcal{H}_S : i = 0,\ldots,s-1\}$. Defining 
\begin{align*}
J_{\mathrm{in}}|\psi\rangle := |0\rangle_{a_1}|0\rangle_{\mathrm{anc}}|\psi\rangle, \qquad \Pi_{\mathrm{out}} := \sum_{i=0}^{s-1} |i\rangle\langle i|_{a_1}\otimes |0\rangle\langle 0|_{\mathrm{anc}}\otimes I_S\,,
\end{align*}
we have 
$$\Pi_{\mathrm{out}}\,W\,J_{\mathrm{in}}
= \frac{1}{\alpha\sqrt s}\,Y\,,$$ and hence $W$ is 
a projected unitary encoding of $Y$.

\medskip
\noindent
\emph{Step 3: Oblivious amplitude amplification.} 
Recalling $\sum_i K_i^\dagger K_i=I$ and $\|\widetilde K_i-K_i\|\le \epsilon $, we have 
\begin{align*}
  Y^\dagger Y = \sum_{i=0}^{s-1}  \widetilde K_i^\dagger \widetilde K_i = I + \mc{O}(\epsilon)\,. 
\end{align*}
Hence singular values of $\frac{1}{\alpha\sqrt s}\,Y$ lie in $\frac{1}{\alpha\sqrt s}[1-\mc{O}(\epsilon),\, 1+\mc{O}(\epsilon)]$. Let $U_Y := Y(Y^\dagger Y)^{-1/2}$ be the polar isometry of $Y$, which satisfies $U_Y^\dagger U_Y=I$ and $\|U_Y-Y\| = \mc{O}(\epsilon)$. Moreover, note that 
\begin{align*}
    \|Y-V\|^2 =
\Bigl\|\sum_{i=0}^{s-1}  (\widetilde K_i-K_i)^\dagger(\widetilde K_i-K_i)\Bigr\|\le s \epsilon^2\,,
\end{align*}
implying 
\begin{equation} \label{bound:uni}
    \|U_Y-V\| = \mc{O}(\epsilon). 
\end{equation}

Now apply robust oblivious amplitude amplification to the
projected block $\Pi_{\mathrm{out}}W  J_{\mathrm{in}} = \frac{1}{\alpha\sqrt{s}} Y$. A standard QSVT theorem yields a unitary $\widetilde V$, using $\mc{O}(\alpha\log(1/\epsilon))$ queries to $W$ and $W^\dagger$, such that $\|\widetilde V J_{\mathrm{in}} - U_Y\| = \mc{O}(\epsilon)$. Combining this with \eqref{bound:uni}  gives 
\begin{equation} \label{eq:approxV}
    \|\widetilde V J_{\mathrm{in}} - V\| = \mc{O}(\epsilon)\,,
\end{equation}
for the ideal Stinespring isometry \eqref{idealiso}. 

\medskip
\noindent
\emph{Step 4: Instrument implementation circuits.}
We define the implemented branch maps by measuring the index register $a_1$ and tracing out $a_{\rm anc}$:
\begin{equation*}
    \widetilde{\mathcal M}_i(\rho)
:=
\operatorname{Tr}_{\mathrm{anc}}
\Bigl[
\bigl(\langle i|_{a_1}\otimes I\bigr)
\widetilde V
(|0,0\rangle\langle 0,0|_{a_1, {\rm anc}}\otimes \rho)
\widetilde V^\dagger
\bigl(|i\rangle_{a_1}\otimes I\bigr)
\Bigr].    
\end{equation*}
Each $\widetilde{\mathcal M}_i$ is completely positive. 
Moreover, it holds that 
\begin{align*}
    \sum_{i=0}^{s-1}  \widetilde{\mathcal M}_i(\rho) = \operatorname{Tr}_{a_1,\mathrm{anc}} \Bigl[ \widetilde V (|0,0\rangle\langle 0,0|_{a_1, {\rm anc}}\otimes \rho) \widetilde V^\dagger \Bigr].
\end{align*}
Since $\widetilde{V}$ is unitary, the map $\sum_i \widetilde{\mathcal M}_i$ is also trace-preserving. Thus $\{\widetilde{\mathcal M}_i\}_{i=0}^{s-1}$ is a valid quantum instrument approximating the ideal one.  Indeed, let
\begin{equation*}
    \rho' := V \rho V^\dagger\,,
\qquad \widetilde \rho' := \widetilde V(|0,0\rangle\langle 0,0|_{a_1, {\rm anc}}\otimes \rho)\widetilde V^\dagger.
\end{equation*}
From \eqref{eq:approxV}, we find 
\begin{align*}
    \|\widetilde \rho' - \rho'\|_1 = \mc{O}(\epsilon).
\end{align*}
Here we used the elementary estimate
$\|A\rho A^\dagger-B\rho B^\dagger\|_1
\le (\|A\|+\|B\|)\|A-B\|$ for density matrices $\rho$.
Let $\mathcal D$ be the completely positive trace-preserving map that dephases
the outcome register in the basis $\{|i\rangle\}_{i=0}^{s-1}$ and then traces
out $a_{\rm anc}$. By contractivity of trace distance under $\mathcal D$,
\begin{equation*}
    \left\|
    \sum_{i=0}^{s-1}|i\rangle\langle i|\otimes
    \bigl(\widetilde{\mathcal M}_i(\rho)-\mathcal M_i(\rho)\bigr)
    \right\|_1
    \le
    \|\widetilde \rho' - \rho'\|_1 .
\end{equation*}
Since the operator on the left is block diagonal in the classical outcome
register, its trace norm is the sum of the trace norms of the blocks. Therefore,
\begin{align*}
    \sum_{i=0}^{s-1} \|\widetilde{\mathcal M}_i(\rho)-\mathcal M_i(\rho)\|_1 \le \|\widetilde \rho' - \rho'\|_1 = \mc{O}(\epsilon)\,.
\end{align*}
This proves the theorem.
\end{proof}

\section{Implementation of the superoperator \texorpdfstring{$\mc{R}$}{R}}
\label{sec:implementR}

To implement the rejection branch $\mathcal{R}[\cdot] = K\,(\cdot)\,K^\dagger$ 
with the Kraus operator
\begin{equation} \label{eq:operatoro}
    K = \sqrt{\sqrt{\sigma_\beta}(I - O)\sqrt{\sigma_\beta}}\;\sigma_\beta^{-1/2},
    \qquad
    O := \mathcal{T}'^\dagger[I] = K_1^\dagger K_1 + K_2^\dagger K_2,
\end{equation}
we employ the implementation techniques developed in~\cite[Appendix~C]{gilyen2026} 
for discrete-time quantum Gibbs samplers.

The key structural condition required is that $O$ be \emph{quasi-local in 
energy} with respect to $H$. To make this precise, recall that any operator 
$X \in \mathcal{B}(\mathcal{H})$ admits a Bohr-frequency decomposition
\begin{equation*}
    X = \sum_\nu X_\nu, \qquad e^{iHt} X_\nu e^{-iHt} = e^{i\nu t} X_\nu,
\end{equation*}
where $\nu = E_k - E_j$ ranges over the Bohr frequencies of $H = \sum_k 
E_k|k\rangle\langle k|$.

\begin{definition}[Quasi-local in energy]\label{def:quasilocal}
An operator $X \in \mathcal{B}(\mathcal{H})$ is $(\Lambda, \epsilon')$-quasi-local 
in energy with respect to $H$ if there exists an operator $\widetilde{X}$ such 
that $\|X - \widetilde{X}\| \leq \epsilon'$ and $\widetilde{X}_\nu = 0$ for 
all $|\nu| \geq \Lambda$, or equivalently,
\begin{equation*}
    \langle E_j | \widetilde{X} | E_k \rangle = 0 \qquad 
\text{whenever } |E_j - E_k| \geq \Lambda. 
\end{equation*}
\end{definition}

We use the following results from \cite[Proposition~5 and Lemma~12]{gilyen2026}.

\begin{proposition}
\label{thm:implementR}
Let $\epsilon \in (0, 1/4]$, $s = \Theta(\log(\beta\|H\|))$, and 
$\alpha \geq \|H\|$. Suppose $\|O\| \leq \frac{10^{-3}}{s^2}$ and $O$ is 
$(\Lambda, \epsilon')$-quasi-local in energy with respect to $H$, where
\begin{align} \label{eq:paraomegapeo}
    \Lambda = \frac{1}{10\lfloor\log_2(1/\epsilon)\rfloor}, \qquad 
\epsilon' = \mathcal{O}\!\left(\frac{\epsilon}{s^2\,\mathrm{polylog}(1/\epsilon)}\right).
\end{align}
Then one can implement an $\epsilon$-approximate block-encoding of
\begin{equation*}
    \frac12\sqrt{\sqrt{\sigma_\beta}(I-O)\sqrt{\sigma_\beta}}\;\sigma_\beta^{-1/2}
\end{equation*}
using
\begin{align*}
&\cO(\alpha\beta\,\mathrm{polylog}(\alpha\beta/\epsilon)) \text{ queries to a block encoding of } H/\alpha,\\
&\cO(\mathrm{polylog}(1/\epsilon)) \text{ queries to a block encoding of } O, \text{ and}\\
&\cO(\mathrm{polylog}(\alpha\beta/\epsilon)) \text{ ancillary qubits.}
\end{align*}
\end{proposition}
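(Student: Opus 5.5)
Since this proposition is imported from \cite[Proposition~5 and Lemma~12]{gilyen2026}, the plan is to reconstruct that argument in three steps, treating the series expansion as the core. First, rewrite the target operator in a form with no inverse powers of $\sigma_\beta$ applied to unfiltered operators. Let $X := \sqrt{\sigma_\beta}(I-O)\sqrt{\sigma_\beta}$. The polar decomposition of $\sqrt{I-O}\,\sqrt{\sigma_\beta}$ gives
\[
\sqrt{X}\,\sigma_\beta^{-1/2} = W\,\sqrt{I-O},
\]
where $W$ is a unitary. The strategy is to expand $\sqrt{X}\sigma_\beta^{-1/2}$ as a power series in $O$ around $O=0$. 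At $O=0$ it equals $I$. The first-order term is a Fréchet derivative of the square root at $\sigma_\beta$, and its Bohr-frequency components read
\[
\bigl(\sqrt{X}\sigma_\beta^{-1/2}\bigr)^{(1)}_\nu = -\,\frac{e^{-\beta\nu/4}}{e^{\beta\nu/4}+e^{-\beta\nu/4}}\,O_\nu\cdot(\text{phase}),
\]
and higher orders are nested multiple-commutator-like terms with similar rational functions of $e^{\beta\nu}$. Each order-$k$ term is therefore a $k$-fold product of copies of $O$, with a multivariate filter applied to the Bohr frequencies.

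Second, I would use quasi-locality in energy to tame these filters. If every factor $O$ has $|\nu|\le\Lambda$, then each order-$k$ term lives at frequencies $|\nu|\le k\Lambda$. The relevant filter functions, such as $1/(1+e^{\beta\nu/2})$, are analytic in a strip of width $\Theta(1/\beta)$. Rescaling time by $\beta$, they can be written as Fourier integrals of functions decaying exponentially in $t/\beta$. Hence each term becomes an integral of $e^{iHt}Oe^{-iHt}$ products over $|t|\lesssim\beta\,\mathrm{polylog}$. These are implemented by LCU with controlled evolutions of total time $\widetilde{\mathcal O}(\beta)$, i.e.\ $\widetilde{\mathcal O}(\alpha\beta)$ queries to $H/\alpha$ via QSVT simulation.

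Third, control convergence and normalization. The coefficients of the order-$k$ term grow at most like $C^k$ times combinatorial factors polylogarithmic in the truncation order. Truncating at $k=\mathcal O(\log(1/\epsilon))$ then suffices, provided $\|O\|\le 10^{-3}/s^2$, where $s=\Theta(\log(\beta\|H\|))$ absorbs the $\log(\beta\|H\|)$ growth of the filter $L^1$ norms. The total LCU normalization is then $\le 2$, giving the factor $\tfrac12$. Replacing $O$ by its quasi-local approximant $\widetilde O$ incurs error $\mathcal O(k\,s^2\epsilon')$, which is at most $\epsilon$ for the stated $\epsilon'$. The choice $\Lambda=1/(10\lfloor\log_2(1/\epsilon)\rfloor)$ keeps $k\Lambda\le 1/10$, so the frequency-dependent filters stay in their analytic region across all retained orders. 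The ancilla count follows because each of the $\mathcal O(\mathrm{polylog})$ block-encoding products uses $\mathcal O(\mathrm{polylog})$ ancillas.

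The main obstacle is the second step: establishing uniform $L^1$ bounds on the Fourier representations of the multivariate filters at every order, independent of $\|H\|$ except logarithmically. I would handle it by working in the variable $\beta\nu$ and bounding the derivatives of $\sqrt{\cdot}$ via a contour-integral (resolvent) representation,
\[
\sqrt{X}=\frac{1}{\pi}\int_0^\infty \bigl(I-\lambda(\lambda+X)^{-1}\bigr)\lambda^{-1/2}\,d\lambda .
\]
This representation turns each order into resolvent products whose Bohr-frequency symbols are explicit and analytic.
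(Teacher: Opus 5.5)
The paper gives no proof of this proposition. It is quoted from \cite[Proposition~5 and Lemma~12]{gilyen2026}, so your reconstruction has to stand on its own. The band-limited part is sensible: your first-order coefficient $-(1+e^{\beta\nu/2})^{-1}$ is correct, and $\Lambda$ is indeed chosen so that every retained order stays in the window $k\Lambda\le 1/10$ (units of $\beta^{-1}$). However, the step that actually uses the hypotheses $\|O\|\le 10^{-3}/s^2$ and $\epsilon'=\mathcal{O}(\epsilon/(s^2\,\mathrm{polylog}(1/\epsilon)))$ is missing, and your account of where $s$ comes from is wrong.

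Inside that window the smooth-cutoff filters have kernel $L^1$ norms independent of $\|H\|$ (rescale $\nu\mapsto\nu/\Lambda$), so no $\log(\beta\|H\|)$ growth occurs there. The factor $s$ is forced elsewhere. The target is $K(O):=\sqrt{\sqrt{\sigma_\beta}(I-O)\sqrt{\sigma_\beta}}\,\sigma_\beta^{-1/2}$ for the \emph{true} $O$, whereas your series represents $K(\widetilde O)$. Running your cut-off series on the block encoding of $O$ instead of $\widetilde O$ is harmless, since bounded kernels give an $\mathcal{O}(k\epsilon')$ error. But you still need $\|K(O)-K(\widetilde O)\|$ to be small, and $O-\widetilde O$ has Bohr frequencies throughout $[-2\|H\|,2\|H\|]$. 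To first order this difference is the Schur multiplier $D\mapsto\sum_\nu(1+e^{\beta\nu/2})^{-1}D_\nu$ applied to $D=O-\widetilde O$. On general operators that multiplier can have norm $\Theta(\log(\beta\|H\|))$: for levels spaced by $\gg\beta^{-1}$ it degenerates into triangular truncation. Higher orders compound it. Your claim that the replacement costs $\mathcal{O}(ks^2\epsilon')$ is asserted, not derived. Multivariate Fourier or resolvent bounds on band-limited filters, which is what you flag as the main obstacle, cannot produce it.

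One clean way to close the gap, which also removes the multivariate obstacle, is to write $K=I-Z$. Then $K\sigma_\beta^{1/2}$ Hermitian means $Z^\dagger=\sigma_\beta^{-1/2}Z\sigma_\beta^{1/2}$, i.e.\ $(Z^\dagger)_\nu=e^{\beta\nu/2}Z_\nu$. Together with $K^\dagger K=I-O$ and $m(\nu)+m(-\nu)=1$ for $m(\nu)=(1+e^{\beta\nu/2})^{-1}$, the two defining properties are equivalent to
\[
Z=\mathcal{S}\bigl(O+Z^\dagger Z\bigr),\qquad \mathcal{S}(Y):=\sum_\nu m(\nu)\,Y_\nu .
\]
Moreover, $\|Z\|<1$ forces $K\sigma_\beta^{1/2}>0$, because it is Hermitian and similar to $\sigma_\beta^{1/4}(I-Z)\sigma_\beta^{1/4}$, which has positive-definite Hermitian part. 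So the small fixed point is the correct branch.

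Write $m=\tfrac12-\tfrac12\tanh(\beta\nu/4)$. The kernel of $\tanh(\beta\nu/4)$ is $\propto 1/\sinh(2\pi t/\beta)$, with its $1/t$ singularity regularized at $|t|\sim\|H\|^{-1}$, so $\|\mathcal{S}\|\le Cs$ on arbitrary operators. The map is then a contraction on $\{\|Z\|\le 2Cs\|O\|\}$ once $s^2\|O\|$ is below a small constant, and it yields $\|K(O)-K(\widetilde O)\|=\mathcal{O}(s\epsilon')$. This is exactly where both hypotheses are consumed.

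For band-limited $\widetilde O$, expand the same recursion to degree $\lfloor\log_2(1/\epsilon)\rfloor$. It then only applies $\mathcal{S}$ to operators with frequencies below $1/(10\beta)$. There $\mathcal{S}$ can be replaced by a smooth-cutoff univariate filter with $\mathcal{O}(1)$ kernel norm and time support $\widetilde{\mathcal{O}}(\beta)$. Every term becomes a word $e^{iHt_0}Oe^{iHt_1}O\cdots Oe^{iHt_k}$, so the stated query and ancilla counts follow by LCU with no multivariate estimates.
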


In our setting, the small-norm condition on $O$ in~\eqref{eq:operatoro} is 
satisfied by the choice $$c = \mathcal{O}(1/\log(\beta\|H\|)),$$
in our constructions \eqref{def:k1} and \eqref{def:k23}, since $O = \mathcal{T}'^\dagger[I] \leq 3c^2 I$ by~\eqref{eq:boundtio}. It therefore 
remains to verify the quasi-locality in energy condition. 

\begin{lemma}[Truncated Gaussian filtered operators]
\label{lem:gaussian-bandlimit}
Let $\chi \in C_c^\infty(\mathbb{R})$ be a smooth cutoff function satisfying
\begin{align*}
    0 \leq \chi \leq 1, \qquad
\chi(x) = 1 \ \text{on} \ |x| \leq \tfrac{1}{2}, \qquad
\operatorname{supp}\chi \subset (-1, 1). 
\end{align*}
For $\Lambda_0 > 0$, define $\chi_{\Lambda_0}(\nu) := \chi\!\left(\nu/\Lambda_0\right)$.
For $g \in \{f, \tilde{f}\}$, define the truncated operator by
\begin{equation} \label{def:oppag}
    A_g^{(\Lambda_0)} := \sum_\nu \chi_{\Lambda_0}(\nu)\,\widehat{g}(-\nu)\,A_\nu.
\end{equation}
Then $A_g^{(\Lambda_0)}$ has Bohr-frequency support contained in 
$[-\Lambda_0, \Lambda_0]$. 
Moreover, let $\|A\| \leq 1$ and 
$\tau = \Omega(\beta)$. It holds that, for constants $c_g>0$, i.e., $c_f=1/16$ and $c_{\tilde f}=1/32$,
\begin{equation*}
    \|A_g - A_g^{(\Lambda_0)}\|
    = \cO\!\left((1 + \tau\Lambda_0)^3
    \exp(-c_g\tau^2\Lambda_0^2)\right)
\qquad g \in \{f, \tilde{f}\}.
\end{equation*}
\end{lemma}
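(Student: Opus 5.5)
The support claim is immediate. By \eqref{def:oppag}, $A_g^{(\Lambda_0)}=\sum_\nu \chi_{\Lambda_0}(\nu)\widehat g(-\nu)A_\nu$, and $\chi_{\Lambda_0}(\nu)=0$ whenever $|\nu|\ge\Lambda_0$. So every Bohr component outside $(-\Lambda_0,\Lambda_0)$ vanishes.

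For the norm bound I would write the difference as a filtered operator. From the matrix-element computation in Lemma~\ref{lem:norm-imaginary}, $A_g=\sum_\nu \widehat g(-\nu)A_\nu$, hence
\[
A_g-A_g^{(\Lambda_0)}=\sum_\nu h(-\nu)A_\nu = A_h=\int_{\mathbb R}\check h(t)\,e^{iHt}Ae^{-iHt}\,\rd t,
\qquad h(\omega):=(1-\chi(\omega/\Lambda_0))\,\widehat g(\omega),
\]
where $\check h$ is the inverse Fourier transform of $h$. Using $\|A\|\le1$ and unitary invariance, as in \eqref{est:norm}, gives $\|A_g-A_g^{(\Lambda_0)}\|\le\|\check h\|_{L^1}$. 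This avoids any dependence on the number of Bohr frequencies, which a naive sum $\sum_\nu|h(-\nu)|\|A_\nu\|$ would incur.

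The main step is bounding $\|\check h\|_{L^1}$. I would use the standard estimate
\[
\|\check h\|_{L^1}\lesssim \|h\|_{L^1}+\|h''\|_{L^1},
\]
which follows from $|\check h(t)|\le \min(\|h\|_{L^1},\,\|h''\|_{L^1}/t^2)/(2\pi)$ and integrating. By \eqref{est:gauss2}, $\widehat f(\omega)=e^{-\tau^2\omega^2/2}$, and $\widehat{\tilde f}$ is a Gaussian of the same width centered at $\omega_0=\beta/(2\tau^2)$ with prefactor $C_{\tilde f}=\mathcal O(1)$. Since $h$ is supported on $|\omega|\ge\Lambda_0/2$, the question is how large $\widehat g$ and its derivatives are there.

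\textbf{Case $g=f$.} On $|\omega|\ge\Lambda_0/2$ we have $e^{-\tau^2\omega^2/2}\le e^{-\tau^2\Lambda_0^2/16}\,e^{-\tau^2\omega^2/4}$. Each derivative produces factors $\tau^2|\omega|$ or $\tau^2$ from the Gaussian, and factors $\Lambda_0^{-1}$ from the cutoff; the cutoff derivatives live on $|\omega|\sim\Lambda_0$ where $\tau^2|\omega|\lesssim \tau^2\Lambda_0$. Integrating these polynomial-times-Gaussian terms gives a polynomial in $(1+\tau\Lambda_0)$ times $e^{-\tau^2\Lambda_0^2/16}$. I expect to track the powers so that the prefactor ends up as $(1+\tau\Lambda_0)^3$; any overall powers of $\tau$ should cancel after rescaling to $x=\tau\omega$ together with the $t^2$ scaling in the $L^1$ bound. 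This gives $c_f=1/16$.

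\textbf{Case $g=\tilde f$.} The center satisfies $\tau\omega_0=\beta/(2\tau)=\mathcal O(1)$ because $\tau=\Omega(\beta)$. If $\tau\Lambda_0$ is below a suitable constant, the claimed bound is $\Omega(1)$ and holds trivially since $\|\check h\|_{L^1}=\mathcal O(1)$. Otherwise, for $|\omega|\ge\Lambda_0/2$ we have $|\omega-\omega_0|\ge|\omega|/\sqrt2$, so $(\omega-\omega_0)^2\ge\omega^2/2$. Repeating the $g=f$ argument with this weaker Gaussian exponent gives $c_{\tilde f}=1/32$.

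I expect the main obstacle to be the bookkeeping of the derivative terms. The goal is a prefactor that is exactly polynomial in $\tau\Lambda_0$ with no stray powers of $\tau$ or $\Lambda_0$. I would handle this by rescaling to $x=\tau\omega$ and $s=t/\tau$, so that everything becomes dimensionless in the single parameter $\tau\Lambda_0$.
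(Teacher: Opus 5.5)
Your overall route is the paper's: you write $A_g-A_g^{(\Lambda_0)}$ as a filtered operator, bound its norm by the $L^1$ norm of the kernel, and control that by $\|h\|_{L^1}+\|h''\|_{L^1}$ after rescaling to $x=\tau\omega$, $\lambda=\tau\Lambda_0$. You then use Gaussian derivative bounds on $|\omega|\ge\Lambda_0/2$. For $\lambda$ above a suitable constant this closes, and your exponents $c_f=1/16$, $c_{\tilde f}=1/32$ are right. For $\tilde f$ the paper avoids your threshold on $\tau\Lambda_0$ by using $(\omega-\omega_0)^2\ge \omega^2/2-\omega_0^2$ for all $\omega$, which costs only the $\mathcal{O}(1)$ prefactor $e^{\beta^2/(4\tau^2)}$. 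A minor point: identifying the difference with $\sum_\nu h(-\nu)A_\nu$ tacitly assumes $\chi$ is even; otherwise put $\chi(-\omega/\Lambda_0)$ into $h$.

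The gap is the regime $\lambda=\tau\Lambda_0\lesssim 1$, which you never actually prove.

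For $g=f$ you do not separate this regime, and your claim that the bookkeeping yields a polynomial in $(1+\lambda)$ is false there. After rescaling, the term $\Lambda_0^{-2}\chi''(\omega/\Lambda_0)\widehat g(\omega)$ in $h''$ contributes $\lambda^{-2}\int|\chi''(x/\lambda)|\,|\widehat g(x/\tau)|\,\mathrm{d}x\approx\lambda^{-1}\|\chi''\|_{L^1}$, which blows up as $\lambda\to 0$. Optimizing the split point in the $\|h\|_{L^1}+\|h''\|_{L^1}$ estimate only improves this to $\lambda^{-1/2}$.

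For $g=\tilde f$ you do separate the regime, but the assertion that $\|\check h\|_{L^1}=\mathcal{O}(1)$ holds ``trivially'' is exactly what needs proof. Your only tool gives $\mathcal{O}(1+\lambda^{-1})$, and the Bohr-frequency sum is dimension dependent, as you note yourself.

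The missing idea, which the paper supplies, is a bound uniform in $\Lambda_0$ via convolution. Write $h=\widehat g-\chi_{\Lambda_0}\widehat g$, so $\check h=g-\kappa_{\Lambda_0}*g$ with $\kappa_{\Lambda_0}(t)=\Lambda_0\check\chi(\Lambda_0 t)$. Since $\chi\in C_c^\infty$, $\check\chi$ is Schwartz and $\|\kappa_{\Lambda_0}\|_{L^1}=\|\check\chi\|_{L^1}$ for every $\Lambda_0$. Young's inequality then gives $\|\check h\|_{L^1}\le(1+\|\check\chi\|_{L^1})\|g\|_{L^1}=\mathcal{O}(1)$ when $\tau=\Omega(\beta)$. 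Use this for $\lambda$ below any fixed constant, where the target bound is $\Omega(1)$, and your derivative estimate above it; that completes the proof for both filters.
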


\begin{proof}
Define
\begin{equation*}
    m_{g,\Lambda_0}(\nu) := \bigl(1 - \chi_{\Lambda_0}(\nu)\bigr)\widehat{g}(-\nu),
\qquad
\eta_{g,\Lambda_0}(t) 
:= \frac{1}{2\pi}\int_{\mathbb{R}} m_{g,\Lambda_0}(\nu)\,e^{-i\nu t}\,\mathrm{d}\nu. 
\end{equation*}
Since $A_g = \int_{\mathbb{R}} g(t)\,e^{iHt}Ae^{-iHt}\,\mathrm{d}t 
= \sum_\nu \widehat{g}(-\nu) A_\nu$, we have 
\begin{equation*}
   E_g^{(\Lambda_0)} := A_g - A_g^{(\Lambda_0)} 
= \int_{\mathbb{R}} \eta_{g,\Lambda_0}(t)\,e^{iHt}Ae^{-iHt}\,\mathrm{d}t,
\end{equation*}
and therefore $\|E_g^{(\Lambda_0)}\| \leq \|A\|\,\|\eta_{g,\Lambda_0}\|_{L^1} 
\leq \|\eta_{g,\Lambda_0}\|_{L^1}$.

We shall use the following basic estimate:
\begin{equation} \label{eq:basicl1}
    \|\eta\|_{L^1}
\leq \frac{1}{\pi}\bigl(\|m\|_{L^1} + \|m''\|_{L^1}\bigr),
\end{equation}
where $m \in W^{2,1}(\mathbb{R})$ and
$\eta(t) = \frac{1}{2\pi}\int m(\nu)\,e^{\pm i\nu t}\,\mathrm{d}\nu$. Indeed, 
for $|t| \leq 1$ we have $|\eta(t)| \leq \frac{1}{2\pi}\|m\|_{L^1}$, while 
for $|t| > 1$, integrating by parts twice gives 
$|\eta(t)| \leq \frac{1}{2\pi t^2}\|m''\|_{L^1}$. Integrating over $|t| \leq 1$ 
and $|t| > 1$ yields the bound.

We claim that for $j = 0, 1, 2$ and $\tau = \Omega(\beta)$, it holds that for some constant $c_g$, 
\begin{equation*}
    \widehat{g}^{(j)}(\nu) = \cO(\tau^j(1 + \tau|\nu|)^j e^{- c_g\tau^2\nu^2})\,, \q g \in \{f,\tilde{f}\}.     
\end{equation*}

\noindent\textbf{Centered Gaussian.} For $g = f$, we have 
\begin{equation*}
    \widehat{f}(\nu) = e^{-\tau^2\nu^2/2}, 
\end{equation*}
and for $j = 0, 1, 2$ there holds
\begin{equation*}
    |\widehat{f}^{(j)}(\nu)|
\leq \tau^j(1 + \tau|\nu|)^j e^{-\tau^2\nu^2/2}.    
\end{equation*}

\noindent\textbf{Shifted Gaussian.} For $g = \tilde{f}$, recall 
\begin{equation*}
    \widehat{\tilde{f}}(\nu)
= e^{\beta^2/(8\tau^2)}
\exp\!\left(-\frac{\tau^2}{2}\left(\nu - \mu\right)^2\right),
\qquad \mu := \frac{\beta}{2\tau^2}.
\end{equation*}
Using $(\nu - \mu)^2 \geq \frac{\nu^2}{2} - \mu^2$, we obtain
\begin{equation*}
    |\widehat{\tilde{f}}(\nu)|
\leq e^{\beta^2/(8\tau^2)} e^{\tau^2\mu^2/2} e^{-\tau^2\nu^2/4}
= e^{\beta^2/(4\tau^2)} e^{-\tau^2\nu^2/4}. 
\end{equation*}
Since $\tau = \Omega(\beta)$, the prefactor $e^{\beta^2/(4\tau^2)}$ is 
$\mathcal{O}(1)$. The derivatives satisfy
\begin{equation*}
    \widehat{\tilde{f}}'(\nu) = -\tau^2(\nu - \mu)\widehat{\tilde{f}}(\nu),
\qquad
\widehat{\tilde{f}}''(\nu) 
= \bigl(\tau^4(\nu-\mu)^2 - \tau^2\bigr)\widehat{\tilde{f}}(\nu), 
\end{equation*}
hence for $j = 0, 1, 2$ there holds
\begin{align*}
    |\widehat{\tilde{f}}^{(j)}(\nu)| = \cO(\tau^j(1 + \tau|\nu|)^j e^{-\tau^2\nu^2/4}).
\end{align*}

It remains to estimate $\|\eta_{g,\Lambda_0}\|_{L^1}$. Let
$\lambda:=\tau\Lambda_0$ and set
\[
    M_{g,\lambda}(x):=m_{g,\Lambda_0}(x/\tau).
\]
If $\zeta_{g,\lambda}(r):=(2\pi)^{-1}\int_{\mathbb R}
M_{g,\lambda}(x)e^{-ixr}\,\rd x$, then
$\eta_{g,\Lambda_0}(t)=\tau^{-1}\zeta_{g,\lambda}(t/\tau)$ and hence
\begin{equation} \label{auxeql1}
\|\eta_{g,\Lambda_0}\|_{L^1}=\|\zeta_{g,\lambda}\|_{L^1}
\end{equation}

For $\lambda\le 1$, we can show 
\begin{equation*}
\|\eta_{g,\Lambda_0}\|_{L^1}=\|\zeta_{g,\lambda}\|_{L^1} = \cO(1). 
\end{equation*}
Indeed, write
$G_g(x):=\widehat g(-x/\tau)$ and decompose
$M_{g,\lambda}=(1-\chi(x/\lambda))G_g=G_g-\chi(x/\lambda)G_g$. The inverse
Fourier transform of $G_g$ is
\[
    \frac{1}{2\pi}\int_{\mathbb R}\widehat g(-x/\tau)e^{-ixr}\,\rd x
    =\tau g(\tau r),
\]
and hence has $L^1$ norm $\|g\|_{L^1}=\cO(1)$ for $g\in\{f,\tilde f\}$ when
$\tau=\Omega(\beta)$. Next, let
\[
    \kappa(r):=\frac{1}{2\pi}\int_{\mathbb R}\chi(x)e^{-ixr}\,\rd x,
    \qquad
    \kappa_\lambda(r):=\frac{1}{2\pi}\int_{\mathbb R}\chi(x/\lambda)e^{-ixr}\,\rd x
    =\lambda\kappa(\lambda r).
\]
Since $\chi\in C_c^\infty(\mathbb R)$, $\kappa$ is Schwartz and
$\|\kappa_\lambda\|_{L^1}=\|\kappa\|_{L^1}<\infty$. Therefore the inverse
Fourier transform of $\chi(x/\lambda)G_g(x)$ is
$\kappa_\lambda*(\tau g(\tau\cdot))$, whose $L^1$ norm is bounded by
$\|\kappa\|_{L^1}\|g\|_{L^1}=\cO(1)$. Thus
$\|\zeta_{g,\lambda}\|_{L^1}=\cO(1)$ for $\lambda\le1$, finishing the proof of the claim.

Assume now that $\lambda\ge 1$. In the scaled variable $x=\tau\nu$, the
derivatives of $\widehat g(-x/\tau)$ up to order two are bounded by
\[
    \left|\frac{\rd^j}{\rd x^j}\widehat g(-x/\tau)\right|
    \le C(1+|x|)^j e^{-c_g x^2},
    \qquad j=0,1,2.
\]
Moreover, the derivatives of $\chi(x/\lambda)$ are supported in
$\{\lambda/2\le |x|\le \lambda\}$ and are bounded by
$\cO(\lambda^{-1})$ and $\cO(\lambda^{-2})$. Since $\lambda\ge1$, applying
\eqref{eq:basicl1} to $M_{g,\lambda}$ gives
\begin{equation*}
\|\eta_{g,\Lambda_0}\|_{L^1}=\|\zeta_{g,\lambda}\|_{L^1}
\le \frac{1}{\pi}\left(\|M_{g,\lambda}\|_{L^1}+\|M_{g,\lambda}''\|_{L^1}\right)
\le C(1+\lambda)^3 e^{-c'_g\lambda^2}. \qedhere
\end{equation*}
\end{proof}

\begin{lemma}[{Quasi-locality in energy of $\mc{T}'^\dagger[I]$}] 
\label{lem:quasi-local-verify}
Let $K_1$ and $K_2$ be defined as in \eqref{def:k1} and \eqref{def:k23}, respectively, and $O = \mc{T}'^\dagger[I]$ be given in \eqref{eq:operatoro}. Given  $\epsilon\in(0,1/4]$, let $\Lambda$ and $\epsilon'$ be as in \eqref{eq:paraomegapeo} required by Proposition \ref{thm:implementR}. 
Assume that the width parameter $\tau$ of $f$ satisfies  
\begin{equation*}
    \tau = \max\left\{\Omega(\beta),\Omega\left(\frac{k}{\Lambda}\,
\sqrt{\log\!\Bigl(\frac{k}{c^2\epsilon'}\Bigr)}\right)\right\}
\end{equation*}
with $k = \Theta(\log(1/\epsilon'))$, and $u$ satisfies $|u|\le 1/(2\|A_{\tilde f}\|)$ from Theorem \ref{thm:sqrt-be-lcu}. Then, $O$ is $(\Lambda,\epsilon')$-quasi-local in energy with respect to $H$.
Equivalently, there exists an operator $\widetilde O$ such that
\[
\widetilde O_\nu=0 \quad\text{for all } |\nu|\ge \Lambda,
\qquad
\|O-\widetilde O\|\le \epsilon'.
\]
\end{lemma}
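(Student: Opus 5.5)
We approximate $O = K_1^\dagger K_1 + K_2^\dagger K_2$ by an operator built entirely from band-limited pieces and control how Bohr-frequency supports add under multiplication. First, for $K_1$ Hermitian, $K_1^\dagger K_1 = K_1^2 = c^2(I+uA_f)$. This is exactly linear in $A_f$, so replacing $A_f$ by $A_f^{(\Lambda_0)}$ from Lemma~\ref{lem:gaussian-bandlimit} gives an operator with Bohr support in $[-\Lambda_0,\Lambda_0]$. The resulting error is $c^2|u|\,\cO((1+\tau\Lambda_0)^3e^{-\tau^2\Lambda_0^2/16})$.

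For $K_2$ there is no such simplification. We will use the truncated Taylor series of Theorem~\ref{thm:sqrt-be-lcu}:
\[
K_2 \approx P_k := c\sum_{j=0}^{k}\binom{1/2}{j}u^jA_{\tilde f}^j,\qquad k=\Theta(\log(1/\epsilon')).
\]
Since $|u|\|A_{\tilde f}\|\le 1/2$, this makes $\|K_2-P_k\|\le c\,2^{-k}$ small relative to $\epsilon'$. We then replace each factor $A_{\tilde f}$ by $A_{\tilde f}^{(\Lambda_0)}$, giving $\widetilde P_k$, and set
\[
\widetilde O := c^2\bigl(I+uA_f^{(\Lambda_0)}\bigr)+\widetilde P_k^\dagger\widetilde P_k .
\]
The structural fact we rely on is that if $X_\nu=0$ for $|\nu|>a$ and $Y_\nu=0$ for $|\nu|>b$, then $(XY)_\nu=0$ for $|\nu|>a+b$. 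This follows from the matrix-element characterization in Definition~\ref{def:quasilocal} and the triangle inequality on energy differences. Taking adjoints preserves the support bound, since it only negates frequencies. Hence $\widetilde P_k^\dagger\widetilde P_k$ has Bohr support in $[-2k\Lambda_0,2k\Lambda_0]$. Choosing $\Lambda_0=\Lambda/(2k+1)$, or anything slightly smaller to obtain the strict inequality $|\nu|<\Lambda$, yields $\widetilde O_\nu=0$ for $|\nu|\ge\Lambda$.

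For the error, we telescope $A_{\tilde f}^j-(A_{\tilde f}^{(\Lambda_0)})^j$ into at most $j$ terms. Each term is bounded by $\|A_{\tilde f}-A_{\tilde f}^{(\Lambda_0)}\|$ times powers of norms at most $\|A_{\tilde f}\|+\delta$. With $|u|(\|A_{\tilde f}\|+\delta)\le 3/4$, the sum $\sum_j j|u|^j(\cdot)^{j-1}$ converges to an $\cO(1)$ constant. Therefore
\[
\|P_k-\widetilde P_k\|=\cO(c\,\delta),\qquad \delta:=\cO\bigl((1+\tau\Lambda_0)^3e^{-\tau^2\Lambda_0^2/32}\bigr).
\]
Since $\|P_k\|,\|\widetilde P_k\|=\cO(c)$, we get
\[
\|K_2^\dagger K_2-\widetilde P_k^\dagger\widetilde P_k\|=\cO\bigl(c(\|K_2-P_k\|+c\delta)\bigr)=\cO(c^2 2^{-k}+c^2\delta).
\]
Requiring the total error to be at most $\epsilon'$ needs $\tau\Lambda_0\gtrsim\sqrt{\log(c^2/\epsilon')}$, with the polynomial prefactor absorbed. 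Substituting $\Lambda_0=\Theta(\Lambda/k)$ gives the stated condition $\tau=\Omega\bigl(\tfrac{k}{\Lambda}\sqrt{\log(k/(c^2\epsilon'))}\bigr)$, where the extra factor of $k$ in the logarithm is slack from the telescoping.

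The main obstacle is the balance in the $K_2$ term. Bohr support grows linearly in the Taylor degree $k$, which forces $\Lambda_0\sim\Lambda/k$. We therefore must check that the telescoped band-limiting errors stay summable, i.e., that $\|A_{\tilde f}^{(\Lambda_0)}\|$ remains below $1/|u|$ uniformly. This holds because $\|A_{\tilde f}^{(\Lambda_0)}\|\le\|A_{\tilde f}\|+\delta$ and $\delta\to0$ under the chosen $\tau$. Everything else is routine norm bookkeeping.
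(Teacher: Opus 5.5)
Your proposal is correct and follows essentially the paper's proof: you use the exact identity $K_1^\dagger K_1=c^2(I+uA_f)$, truncate $\sqrt{I+uA_{\tilde f}}$ to its degree-$k$ Taylor polynomial, band-limit each factor via Lemma~\ref{lem:gaussian-bandlimit} with $\Lambda_0=\Theta(\Lambda/k)$, use additivity of Bohr supports under products, and telescope with the bound $\|uA_{\tilde f}^{(\Lambda_0)}\|\le 3/4$. Your choice $\Lambda_0=\Lambda/(2k+1)$ is only a harmless safety margin, since the paper takes $\Lambda/(2k)$ and gets $\widetilde O_\nu=0$ for $|\nu|\ge\Lambda$ because $\operatorname{supp}\chi\subset(-1,1)$ is open.
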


\begin{proof}
We write 
\begin{equation*}
       O = c^2(I + uA_f) + c^2 X^\dagger X, \qquad X := \sqrt{I + uA_{\tilde{f}}}.
\end{equation*}
Since $\|A_{\tilde{f}}\| = \cO(1)$ for $\tau = \Omega(\beta)$, choosing
$|u| \leq 1/(2\|A_{\tilde{f}}\|)$ ensures $ r = |u|\,\|A_{\tilde{f}}\| \leq 1/2$.  Let $
P_k(z):=\sum_{j=0}^{k}\binom{1/2}{j}z^j$
be the degree-$k$ Taylor polynomial of $\sqrt{1+z}$ with  
\begin{equation*}
    \|X-P_k(u A_{\tilde f})\|
\le \frac{r^{k+1}}{1-r}.
\end{equation*}
With $k = \Theta(\log(1/\epsilon'))$, we can ensure 
\begin{equation} \label{auxeq:1}
    \|X^\dagger X-P_k(uA_{\tilde f})^\dagger P_k(uA_{\tilde f})\| = \cO(\epsilon'/c^2),
\end{equation}
since both $\|X\|$ and
$\|P_k(uA_{\tilde f})\|$ are $\cO(1)$. Therefore, if we define 
\[
O^{(k)} := c^2(I+uA_f) + c^2 P_k(uA_{\tilde f})^\dagger P_k(uA_{\tilde f}),
\]
then
\begin{equation}\label{auxeq00}
    \|O-O^{(k)}\| = \cO(\epsilon').     
\end{equation}

Next, let
\begin{equation*}
    \Lambda_0:=\frac{\Lambda}{2k},
\end{equation*}
and define the truncated operators $A_f^{(\Lambda_0)}$ and $A_{\tilde f}^{(\Lambda_0)}$ as in \eqref{def:oppag}. By Lemma \ref{lem:gaussian-bandlimit}, 
we choose $\tau = \Omega\left(\frac{k}{\Lambda}\sqrt{\log\!\left(\frac{k}{c^2\epsilon'}\right)}\right)$
large enough such that 
\begin{equation} \label{auxeq:2}
   \max_{g \in \{f,\tilde{f}\}} \|A_g - A_g^{(\Lambda_0)}\|
\le \min\!\Bigl\{\frac{1}{4|u|},\, \Theta\left(\frac{\epsilon'}{c^2}\right)\Bigr\}.
\end{equation}
Set $B := uA_{\tilde{f}}$ and $\widetilde{B} := uA_{\tilde{f}}^{(\Lambda_0)}$. 
Then $\|B\| \leq \tfrac{1}{2}$ and $\|\widetilde{B}\| \leq \tfrac{3}{4}$. 
The telescoping identity gives, for each $j \geq 1$,
\[
    \|B^j - \widetilde{B}^j\|
    \leq j\!\left(\tfrac{3}{4}\right)^{j-1}\!\|B - \widetilde{B}\|
    = j\!\left(\tfrac{3}{4}\right)^{j-1}|u|\,\|A_{\tilde{f}} - A_{\tilde{f}}^{(\Lambda_0)}\|.
\]
which implies, by $\sum_{j \geq 1} |\binom{1/2}{j}|\, j\, (\tfrac{3}{4})^{j-1} < \infty$ and \eqref{auxeq:2},
\[
    \|P_k(B) - P_k(\widetilde{B})\|
    \leq \sum_{j=1}^k \left|\binom{1/2}{j}\right| \|B^j - \widetilde{B}^j\| = \cO\left(\frac{\epsilon'}{c^2}\right).
\]
Noting $\|P_k(B)\|, \|P_k(\widetilde{B})\| = \cO(1)$, we have 
\begin{equation} \label{auxeq:3}
      \|P_k(B)^\dagger P_k(B) - P_k(\widetilde{B})^\dagger P_k(\widetilde{B})\|
   = \cO\left(\frac{\epsilon'}{c^2}\right).
\end{equation}

We now define
\begin{equation*}
    \widetilde{O} := c^2\bigl(I + uA_f^{(\Lambda_0)}\bigr)
    + c^2 P_k\!\bigl(uA_{\tilde{f}}^{(\Lambda_0)}\bigr)^\dagger
    P_k\!\bigl(uA_{\tilde{f}}^{(\Lambda_0)}\bigr).
\end{equation*}
and show that it is the desired approximation of $O$. Indeed, since $A_g^{(\Lambda_0)}$ has Bohr-frequency support in $[-\Lambda_0, \Lambda_0]$, 
the first term has support in $[-\Lambda_0, \Lambda_0]$ and 
$P_k(uA_{\tilde{f}}^{(\Lambda_0)})^\dagger P_k(uA_{\tilde{f}}^{(\Lambda_0)})$ has 
support in $[-2k\Lambda_0, 2k\Lambda_0] = [-\Lambda, \Lambda]$. Hence $\widetilde{O}$ 
has Bohr-frequency support in $[-\Lambda, \Lambda]$. For the approximation error, by \eqref{auxeq:2} and \eqref{auxeq:3}, we have 
\begin{align*}
    \| \widetilde{O} - O^{(k)}\|
    &\leq c^2|u|\,\|A_f - A_f^{(\Lambda_0)}\| + c^2\|P_k(B)^\dagger P_k(B) 
    - P_k(\widetilde{B})^\dagger P_k(\widetilde{B})\| = \cO(\epsilon'). 
\end{align*}
Combining it with \eqref{auxeq00}, we have $\|O-\widetilde O\| = \cO(\epsilon')$, and hence $O$ is $(\Lambda, \epsilon')$-quasi-local in energy with respect to $H$.
\end{proof}

We finally provide the proof of Theorem \ref{thm:rejection} for completeness.

\begin{proof}[Proof of Theorem \ref{thm:rejection}]
The trace-preserving completion and preservation of $\sigma_\beta$-KMS detailed
balance follow from the rejection-branch construction of~\cite{gilyen2026}.
It remains to verify the implementation hypotheses. Set the accuracy parameter
in Proposition~\ref{thm:implementR} to be $\epsilon_{\rm R}:=\Theta(\varepsilon')$,
and take $\alpha=\Theta(\|H\|)$. Since $O=\mathcal T'^\dagger[I]\le 3c^2I$ by
\eqref{eq:boundtio}, choosing
$c=\cO(1/\log(\beta\|H\|))$ with a sufficiently small implicit constant ensures
the small-norm hypothesis $\|O\|\le 10^{-3}/s^2$ required in
Proposition~\ref{thm:implementR}. Let
\begin{equation*}
    \Lambda=\frac{1}{10\bigl\lfloor \log_2(1/\epsilon_{\rm R})\bigr\rfloor},
\qquad
\eta_{\rm ql}= \cO\left(\frac{\epsilon_{\rm R}}{s^2\operatorname{polylog}(1/\epsilon_{\rm R})}\right)
\end{equation*}
be the quasi-locality parameters required in Proposition~\ref{thm:implementR}.
For the stated choice of $\tau$, with constants chosen large enough to dominate
the lower bound in Lemma~\ref{lem:quasi-local-verify}, and for sufficiently
small $u$, Lemma~\ref{lem:quasi-local-verify} gives that $O$ is
$(\Lambda,\eta_{\rm ql})$-quasi-local in energy with respect to $H$.
Therefore, all assumptions of Proposition~\ref{thm:implementR} are satisfied,
and we obtain an $\epsilon_{\rm R}$-approximate block encoding of $K/2$ with the
stated complexity. Equivalently, this is a block encoding of $K$ with
normalization $2$ and error $\cO(\epsilon_{\rm R})$; choosing the constant in
$\epsilon_{\rm R}=\Theta(\varepsilon')$ gives the target accuracy
$\varepsilon'$.

It remains to implement a block encoding of $O=K_1^\dagger K_1+K_2^\dagger K_2$.
By Lemma~\ref{lem:K1K2-implementation}, for any sufficiently small target
accuracy $\delta_{\rm K}$, both $K_1$ and $K_2$ admit
$\delta_{\rm K}$-approximate block encodings. Applying the standard
block-encoding product construction gives block encodings of
$K_1^\dagger K_1$ and $K_2^\dagger K_2$, and then the standard
linear-combination construction gives a block encoding of their sum $O$.
Taking
\begin{equation*}
    \delta_{\rm K}=\Theta\Bigl(\varepsilon'/\operatorname{polylog}(\beta\|H\|/\varepsilon')\Bigr)
\end{equation*}
so that the composition errors remain within $\cO(\varepsilon')$, we obtain a
block encoding of $O$ with the desired cost, by
Lemma~\ref{lem:K1K2-implementation} and Lemma~\ref{lem:block}.
\end{proof}

\end{document}